\newcommand{\blind}{1}
\theoremstyle{plain}
\newtheorem{theorem}{Theorem}[section]
\newtheorem{lemma}[theorem]{Lemma}
\theoremstyle{remark}
\newtheorem{definition}[theorem]{Definition}
\newtheorem{example}{Example}
\newtheorem{remark}{Remark}
\newcommand{\hb}[1]{\scriptsize {\bf\color{orange}[HB: #1]}\normalsize}  
\newcommand{\gskom}[1]{\scriptsize {\bf \color{cyan}[GS: #1]}\normalsize}
\newcommand{\gstext}[1]{\normalsize { \color{cyan} #1}\normalsize}
\newcommand{\Nbb}{\mathbb{N}}
\newcommand{\Rbb}{\mathbb{R}}
\newcommand{\Ebb}{\mathbb{E}}
\newcommand{\Kbb}{\mathbb{K}}
\newcommand{\Ccal}{\mathcal{C}}
\newcommand{\Ecal}{\mathcal{E}}
\newcommand{\Fcal}{\mathcal{F}}
\newcommand{\Ical}{\mathcal{I}}
\newcommand{\Pcal}{\mathcal{P}}
\newcommand{\Qcal}{\mathcal{Q}}
\newcommand{\vc}{{vc}}
\newcommand{\say}[1]{``#1''}
\begin{document}

\def\spacingset#1{\renewcommand{\baselinestretch}%
{#1}\small\normalsize} \spacingset{1}


\if1\blind
{
  \title{\bf Union-Free Generic Depth for\\Non-Standard Data}
  \author{Hannah Blocher\thanks{
    The authors gratefully acknowledge the funding and support of Hannah Blocher's doctoral studies by the Evangelisches Studienwerk Villigst e.V. and the LMU Mentoring Program.}\hspace{.2cm}\\
    Department of Statistics, Ludwig-Maximilians-University Munich\\
    and \\
    Georg Schollmeyer \\
    Department of Statistics, Ludwig-Maximilians-University Munich\\}
  \maketitle
} \fi

\if0\blind
{
  \bigskip
  \bigskip
  \bigskip
  \begin{center}
    {\LARGE\bf Union-Free Generic Depth for \\ Non-Standard Data}
\end{center}
  \medskip
} \fi

\bigskip
\begin{abstract}
Non-standard data, which fall outside classical statistical data formats, challenge state-of-the-art analysis. Examples of non-standard data include partial orders and mixed categorical-numeric-spatial data. Most statistical methods required to represent them by classical statistical spaces. However, this representation can distort their inherent structure and thus the results and interpretation. For applicants, this creates a dilemma: using standard statistical methods can risk misrepresenting the data, while preserving their true structure often lead these methods to be inapplicable. 
To address this dilemma, we introduce the union-free generic depth (ufg-depth) which is a novel framework that respects the true structure of non-standard data while enabling robust statistical analysis. The ufg-depth extends the concept of simplicial depth from normed vector spaces to a much broader range of data types, by combining formal concept analysis and data depth.
We provide a systematic analysis of the theoretical properties of the ufg-depth and demonstrate its application to mixed categorical-numerical-spatial data and hierarchical-nominal data. The ufg-depth is a unified approach that bridges the gap between preserving the data structure and applying statistical methods. With this, we provide a new perspective for non-standard data analysis.
\end{abstract}

\noindent%
{\it Keywords: (simplicial) data depth, formal concept analysis, non-parametric statistics, mixed categorical-numerical-spatial data, hierarchical-nominal data} 
\vfill

\newpage

\section{Introduction}

Modern statistical analysis frequently encounters \textit{non-standard data}, which are data that are not given in classical statistical data formats such as nominal, ordinal, interval, or ratio scales, see, e.g., \cite{stevens46}. Examples include multivariate data combining spatial and ordinal components or (partial) preference orders, where we observe a set of orders on fixed items. Addressing such data often requires either (implicitly) imposing additional assumptions, such as a metric space, see, e.g., the discussion in \cite{blocher23}, or transforming the data at the cost of losing information, such as discretizing continuous variables see, e.g., \cite{foss19, zhang24}. Moreover, these methods are generally limited to specific types of non-standard data and lack a unified framework for broader applicability, see~\cite{stumme23}.

This limitation highlights a significant research gap: The absence of a general, flexible framework that reflects the inherent structure of non-standard data while avoiding unwanted assumptions or information loss. This gap creates a fundamental dilemma in statistical analysis. On the one hand, applying standard statistical methods can distort the underlying data structure and with it the results and interpretations. On the other hand, accounting for the true structure of the data can make standard methods inapplicable. Resolving this dilemma requires a novel approach that balances the data structure with the practical requirements of statistical analysis.

This article addresses the dilemma by introducing a novel, nonparametric method -- the \textit{union-free generic depth (ufg-depth)} -- offering a new perspective on the analysis of non-standard data. The ufg-depth unifies the treatment of diverse data types without imposing further, eventually not justified assumptions. Unlike classical methods that rely on directly using data values (as in classical statistical tests like Student's $t$-test), the ufg-depth is based on natural groupings of the data elements. These groupings serve as the foundation for defining a center-outward order for the data.

The definition of the ufg-depth is based on combining \textit{formal concept analysis (FCA)} and \textit{depth functions}. Formal concept analysis provides the necessary tools to address the challenges of analyzing non-standard data by detecting and representing relationships within the data using mathematical lattice theory, see~\cite{ganter12}. It has been successfully applied in diverse fields such as knowledge discovery, see~\cite{poelmans13}, bioinformatics, see~\cite{roscoe22}, and choice theory, see~\cite{ignatov22}.
Depth functions, on the other hand, extend the notion of quantiles in $\Rbb$ to higher dimensional normed vector spaces. They provide a measure that denotes the centrality and outlyingness of data relative to a data cloud or a given probability distribution. These functions are widely used in nonparametric statistics on $\mathbb{R}^d$, see~\cite{chebana11, liu99}. The ufg-depth introduced in this article builds on these concepts and generalizes the simplicial depth developed by~\cite{liu90}. The simplicial depth defines a centrality measure as the probability that a point lies within a randomly drawn simplex (e.g., a triangle). Generalizing this idea for non-standard data involves two key challenges: redefining the concept of \say{lying in} and defining an appropriate analogue of a simplex. Formal concept analysis provides the theoretical foundation to address both challenges, enabling the development of the ufg-depth as a robust method for analyzing non-standard data.

 By combining these two concepts, this article introduces for non-standard data a robust center-outward order that considers broader spaces than solely classical statistic spaces and provides a unified framework to analyze these data. Moreover, this framework is very flexible and universal. For instance, we apply the ufg-depth on two real-word data problems -- spatial-categorical-numerical data and hierarchical-nominal data -- and analyze it generally using centrality notions derived in~\cite{blocher23b}. There the authors provide a general mapping structure for depth functions based on formal concept analysis. Moreover, they establish a systematic basis by adapting, among others, the desirable properties defined in~\cite{zuo00, serfling00, mosler22}. 


This article is organized as follows: We first provide a detailed illustration of the conceptual strategy underlying the ufg-depth and the main definitions of formal concept analysis using two concrete examples. Next, we generally define the ufg-depth and analyze its properties, drawing on the framework established in~\cite{blocher23b}. Section~\ref{sec:examples} presents concrete applications of the method to real-world non-standard data. Finally, we conclude with a discussion of the ufg-depth's contributions and limitations. Supplementary materials include detailed introduction to formal concept analysis, proofs and further side notes.

\section{Illustration of the Concepts behind the UFG-Depth}\label{sec:illustration}

In this section, we describe the idea behind the definition of ufg-depth using a snippet of concrete data examples: The \textsc{gorillas} data set, see~\cite{funwigabga12}, containing nesting sites of gorillas and data from the German General Social Survey (GGSS) concerning occupations, see~\cite{ZA5280}. 
A detailed and complete analysis of both data sets can be found in Section~\ref{sec:examples}. Moreover, we introduce the main concept of formal concept analysis needed in this article. For more details on formal concept analysis, see the supplementary or \cite{ganter12}. 

\begin{example}\label{ex:only spatial}
	The \textsc{gorillas} data provide a point pattern consisting of a spatial component (gorilla nesting sites) and categorical, numerical observations (e.g.~vegetation or elevation). For further details on the data see Section~\ref{sec:appl_mixed}. In the following, we use an excerpt of 15 observations of the gorilla nesting sites to illustrate our approach, see~Figure~\ref{fig:goriallas_excerpt}. We start by considering only the spatial component and describe the link to simplicial depth, see~\cite{liu90}.


\begin{figure}
\begin{subfigure}{0.3\linewidth}
\centering
    \includegraphics[scale=0.4]{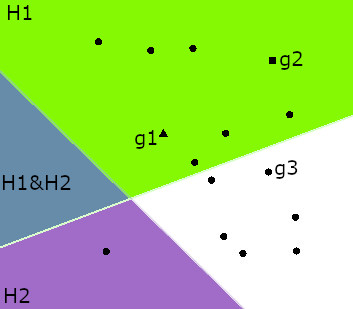} 
\end{subfigure}\hfill
\begin{subfigure}{.3\linewidth}
    \includegraphics[scale=0.4]{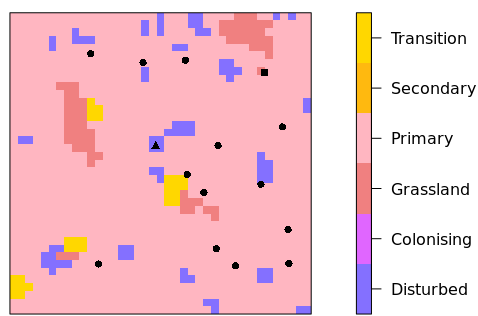} 
  \end{subfigure}\hfill
  \begin{subfigure}{.4\linewidth}
    \centering
   \includegraphics[scale=0.4 ]{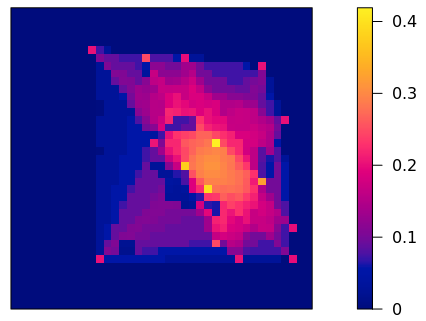}
  \end{subfigure}
  \caption{Excerpt of the spatial location of the \textsc{gorillas} data with (left) two closed half-spaces included in the plot, (middle) the vegetation component and (right) ufg-depth based on the spatial and vegetation component}
  \label{fig:goriallas_excerpt}
\end{figure}

	Formal concept analysis (FCA) serves as the foundation of our method, offering a powerful tool for uncovering relationships within data. It is based on the formalization of a cross-table. The rows of the cross-table correspond to the ground space (in formal concept analysis called \textit{objects} and denoted by $G$) and the columns represent attributes (in formal concept analysis also called \textit{attributes} and denoted by $M$) that can either be true or false for an object. The number of attributes and objects can be infinite. A cross in the cross-table indicates that the attribute holds for the element/object. These crosses are formalized by an \textit{incidence relation} $I \subseteq G \times M$. The triple $K = (G, M, I)$ is called \textit{formal context}. We want to point out that an attribute can be either true or false for an object. In particular, no degree in between can be assigned. In most cases, however, the values of the data/ground space are not binary. The transformation of many-valued data into a set of binary attributes is called \textit{(conceptual) scaling method} and has been studied extensively, for example, see~\citep[Chapter 1.3.]{ganter12}. 
	For the \textsc{Gorillas} data, the objects $G_{\Rbb^2}$ equal $\mathbb{R}^2$. Now, we have to apply a scaling method that represents the data by a set of binary attributes. Therefore, we use the method developed in~\cite{blocher23b} where the attributes $M_{\Rbb^2}$ are all topologically closed half-spaces.\footnote{In the following we always consider topologically closed half-spaces/convex sets. For simplicity, we will drop the term topological and closed from now on. Note that we will use also the term closed when referring to closed based on a closure operator.} The incidence relation $I_{\Rbb^2}$ represents whether the element lies in the half-space or not. We denote this formal context by $\Kbb_{\Rbb^2} = (G_{\Rbb^2}, M_{\Rbb^2}, I_{\Rbb^2})$. Figure~\ref{tab:gorillas_excerpt} (left) shows an excerpt of the infinite (w.r.t.~$G_{\Rbb^2}$ and $ M_{\Rbb^2}$) formal context using the indicated objects $g_1, g_2$ and $g_3$ and half-spaces $H_1$ and $H_2$ in Figure~\ref{fig:goriallas_excerpt} (left).
	
	In the next step, we use this formal context/cross-table to define the grouping procedure and resulting grouping system/groups. Let us take a subset of attributes and consider the maximal group of objects that are all valid for these attributes. This summarizes all objects that are in a certain  relation (i.e.~via the attribute subset). Thus, when taking all attribute subsets and the respective maximal object sets, we obtain a grouping system representing the relationship between the objects. In the case of the \textsc{gorilla} excerpt, we take a subset of half-spaces as a subset of attributes, and the corresponding objects are all elements that lie in every half-space of the subset. Similarly, the relationship between attributes can be considered by starting with the object set. Note that combining these two operations provides an operator that also leads to the same grouping system as starting with the attribute sets. 
	Formally, this is given by the maps $\Phi: 2^M \to 2^G, B \to B' := \{g \in G \mid \forall m \in B \colon gIm\}$ and $\Psi: 2^G \to 2^M, A \to A' := \{m \in M \mid \forall g \in A \colon gIm\}$.\footnote{For simplicity we write $\Phi(g)$ instead of $\Phi(\{g\})$ for $g \in G$. The same applies to all operators on the power set.} The grouping procedure on the object set is then given by $\gamma:= \Phi \circ \Psi$. We denote the resulting grouping system by $\Ecal = \gamma(2^G)$ and call it \textit{(set of) extents}. Now, we can exploit that the the extents $\gamma(2^G)$ and operator $\gamma$ define a \textit{closure system} and a \textit{closure operator} and have a one-to-one correspondence. A closure operator is a function on the power set $2^G$ that is extensive (i.e.~$A \subseteq \gamma(A)$ for all $A\subseteq G$), monotone (i.e.~$\gamma(B) \subseteq \gamma(A)$ for all $B \subseteq A\subseteq G$) and idempotent (i.e.~$\gamma(A) \subseteq \gamma(\gamma(A))$ for all $A\subseteq G$). A closure system is a subset $H \subseteq 2^G$ such that $G \in H$ and the intersection of elements in $H$ is again an element of $H$. 
		For $\Kbb_{\Rbb^2}$ the closure operator corresponds to the convex hull operator $\gamma_{\Rbb^2}$ that maps a set onto the smallest closed convex set containing this set. The extents $\Ecal_{\Rbb^2}$ equal all convex sets. 
	 
	 Moreover, both, the closure system and operator, can be uniquely described by a \textit{family of implications}. An implication is a statement of the form $A \to B$ with $A, B \subseteq G$ which claims that if $A$ is a subset of a group/extent, then $B$ must also be part of that group/extent, i.e.~$\gamma(B) \subseteq \gamma(A)$. Here, we call $A$ \textit{premise} and $B$ \textit{conclusion} of the implication. Reverse, when we have a set of implications $\Ical$ of a set $G$, then we say that $D \subseteq G$ \textit{respects an implication} $A \to B$ iff either $A \not\subseteq D$ or $A \subseteq D$ then $B \subseteq D$ also follows. With this, we obtain the extent set back by $\Ecal_{\Ical_G} = \{D \subseteq G \mid D \text{ respects every implication in }\Ical_G\}.$
The implications of $\Kbb_{\Rbb^2}$ are statements $A \to B$ where $B$ lies in the every convex set that contains also $A$. More formally using the convex hull operator $\gamma_{\Rbb^2}$, an implication $A \to B$ holds for the extent set given by $\Kbb_{\Rbb^2}$ iff $\gamma_{\Rbb^2}(A) \supseteq \gamma_{\Rbb^2}(B)$.
	
	However, some of these implications are redundant. For example, let $B\subsetneq \gamma_{\Rbb^2}(A)$. Then statement $A \to B$ is true but redundant since its information is already given by implication $A \to \gamma_{\Rbb^2}(A)$. These semantic redundancy structures are summarized by \cite{armstrong74} as inference axioms, see \cite[p. 45]{maier83}: Let $A,B,C,D, A_1, A_2, B_1, B_2 \subseteq G$. Then we say that the axiom of \textit{reflexivity} holds iff $A \to A$, the axiom of \textit{augmentation} holds iff $A_1 \to B$ implies $A_1 \cup A_2 \to B$, the axiom of \textit{additivity} holds iff $A \to B_1$ and $A \to B_2$ imply $A \to B_1 \cup B_2$, axiom of \textit{projectivity} holds iff $A \to B_1 \cup B_2$ implies $A \to B_1$, axiom of \textit{transitivity} holds iff $A \to B$ and $B \to C$ imply $A \to C$, and the axiom of \textit{pseodotransitivity} holds iff $A \to B$ and $B \cup C \to D$ imply $A \cup C \to D$. Note, however, that when deleting implications that follow from the above semantic structures, one may delete too many implications and end up not representing the same closure system, see Section~\ref{sec: ufg definition}. Therefore we say that a reduced family of implications $\Ical_G$ is \textit{complete} if the reduced family describes the same extent set as the unreduced one.
	
	 For the ufg-depth, we now consider the reduction of the family of implications based on reflexivity, augmentation, additivity and projectivity. With this, a complete reduction of the implications describing the spatial context is given by the set of all implications $A \to \gamma_{\Rbb^2}(A)$ with $\#A = \{2,3\}$. Based on this, we can now define the ufg-depth of $g \in \mathbb{R}^2$ as the probability that $g$ lies in $\gamma_{\Rbb^2}(C)$, where $C$ is a randomly drawn line or triangle according to an (empirical) probability measure on $\Rbb^2$. (For details on how we weight the randomly drawn triangles vs. lines, see Section~\ref{sec: ufg definition}.)
Hence, if we assume that the measure is absolutely continuous to the Lebesgue measure, we obtain the well-known simplicial depth.

\end{example}

\begin{example}\label{ex:spatial vegetation}
We consider again the \textsc{gorillas} data of Example~\ref{ex:only spatial}. Now we add the vegetation observations to the analysis. Figure~\ref{fig:goriallas_excerpt} (left and middle) show the point pattern of Example~\ref{ex:only spatial} together with its vegetation component. For clarity, the round dots represent observations in \textit{primary vegetation}, the triangle in \textit{disturbed vegetation}, and the square in \textit{grassland vegetation}. Now, the ground space is $G = \mathbb{R}^2 \times V$, where $V = \{\text{tran., sec., prim., grass., colo., dist.}\}$ consists of all possible vegetation outcomes, see Section~\ref{sec:appl_mixed} for details. 


\begin{figure}
   \begin{minipage}[b]{.45\linewidth}
    \centering
    \begin{tabular}{l | c | c | c | c | c | c}
	\: & $H_1$ & $H_2$ & $\ldots$ & $\ldots$ & $\ldots$ & $\ldots$ \\
	\hline
	$g_1$  & x & \: &\: & \: & \: \\
	$g_2$ & x & \: &\: & \: & \: \\
	$g_3$ & \: &  \: &\: & \: & x \\
	$\ldots$ & \: & \: &\: & \:  \\
\end{tabular}
  \end{minipage}\hfill
  \begin{minipage}[b]{.6\linewidth}
    \centering
    \begin{tabular}{l | c | c | c | c | c | c}
	\: & tran. & sec. & prim. & grass. & colo. & dist. \\
	\hline
	$g_1$ \: & \: & \: &\: & \: & \: & x \\
	$g_2$ \: & \: & \: &\: & x & \: & \:\\
	$g_3$ \: & \: &  \: &x & \: & \: & \: \\
	$\ldots$ \: & \: & \: &\: & \: & \: & \:\\
\end{tabular}
  \end{minipage}
  \caption{Scaling the spatial (left) and vegetation (right) component of the point pattern in Figure~\ref{fig:goriallas_excerpt} (left).}
  \label{tab:gorillas_excerpt}
\end{figure}

We proceed similarly to Example~\ref{ex:only spatial}. 
First, we use a scaling method to define a formal context/cross-table. For the spatial component we use the half-spaces discussed in Example~\ref{ex:only spatial}. The vegetation component is categorical and therefore we use the so-called \textit{nominal scaling}, see~\citep[p. 42]{ganter12} where the attributes are all possible vegetation outcomes $V$. The table in Figure~\ref{tab:gorillas_excerpt} (right) represents the vegetation part of the objects $g_1, g_2, g_3$ denoted in Figure~\ref{fig:goriallas_excerpt} (middle). Joining the two cross-tables of Figure~\ref{tab:gorillas_excerpt} by the object set $G$ gives us the formal context that represents the spatial and categorical component of each data element of $G = \mathbb{R}^2 \times V$ with attribute set $M_{\Rbb^2 \times V} = M_{\Rbb^2} \cup M_V$. The resulting groups/extents arise by considering all possible combinations of attributes and summarizing all data elements that apply to them. Thus, the extents are all sets $C \times \tilde{V}$, where $C \subseteq \mathbb{R}^2$ is a convex set and $\tilde{V} \in \binom{V}{1}\cup V$. We set $\binom{V}{1} = \{\{v_1\},\ldots, \{v_k\}\}$ for $V= \{v_1, \ldots, v_k\}$. Note that due to the nominal scaling, $\tilde{V}$ either has cardinality one or is directly the entire set. This represents the dependencies between the groups, since the relation between two vegetation categories is the same as to any other vegetation, so all other vegetation categories are also included. 

Analogously to Example~\ref{ex:only spatial}, we utilize the fact that the extents define a closure system that can be uniquely described by a family of implications. Finally, we consider only a subset of all valid implications by deleting redundancies. With this, the ufg-depth of an element $g$ is the proportion of non-redundant implications with positive empirical probability mass that imply $g$. The ufg-depth indicates how supportive/typical/central the observation $g$ is with respect to all other observations. This is because an object $g$ that lies in many non-redundant sets must have many attributes that are shared by the elements in those sets. Therefore, in the extreme case where there is an object that has all attributes, that object has maximum depth. The other extreme case, where an object has not many attributes that other objects have, denotes a small depth value. For example here, we only observed the vegetation disturbed once. Therefore, even though this observation is in the center of spatial component of the data cloud, it has a low depth value. The vegetation categories transition and grassland are never observed. So for these two categories we have that the depth is zero. The ufg-depth of an element $g \in \mathbb{R}^2 \times V$ is then given by Figure~\ref{fig:goriallas_excerpt} (right).

\end{example}

\begin{example}\label{ex: einfuehrung hierarchisch nominal}
	To highlight the wide variety of different data types that fall under the term non-standard data, we consider occupational data from the German General Social Survey (GGSS), see \cite{ZA5280}. These occupations are categorized using a hierarchy of different levels given by the International Standard Classification of Occupations (ISCO) of 2008\footnote{see \url{https://ilostat.ilo.org/methods/concepts-and-definitions/classification-occupation/} 
	 (last accessed: 14.12.2024) for details}. To define a formal context representing the different occupational groups, the occupations are successively classified into categories and subcategories. First, on a basic level (Level~1), each data element is assigned to exactly one category (coded here with digits $1,2,\ldots,9,0$, see~\citep[Appendix~D]{ZA5280}). For example, the level-1 categories of ISCO-08 are 1:~\textit{Managers}; 2:~\textit{Professionals}; etc. 
	Each of these categories is then split on a finer level (Level~2) into further subcategories and again each element of a single Level-1 category is assigned to exactly one subcategory of Level~2. For example, the Level-1 Category 3:~\textit{Technicians and associate professionals} is further divided into the Level-2 categories
	31:~\textit{Science and engineering associate professionals}; 32:~\textit{Health associate professionals}; etc. 
	Then, again the Level~2 categories are divided into further subcategories, and so on. For the ISCO-08 classification scheme, we have $4$ levels with different numbers of possible categories on each level (ranging from $1$ to $10$ categories). Now, we build a formal context for the representation of our data structure by introducing the following attributes: Every sequence $x_1x_2\ldots x_k$ with $x_i \in\{1,\ldots,9,0\}$ (and $k \in \{1, \ldots, 4\}$ for ISCO-08) describes the category on Level~$i$. For each sequence, we introduce one attribute {$x_1x_2\ldots x_k$}. An object $g$  has this attribute if it belongs to the respective occupational category up to Level~$k$. 
	Note that for this conceptual scaling, in contrast to Example~\ref{ex:only spatial}, there are usually different objects that have exactly the same attributes.
	
\end{example}

\begin{remark}
	Finally, we want to emphasize that the closure system, the implications and later the ufg-depth, depends on the application of a reasonable scaling method. The closure system and the implications provide a tool for analyzing/discussing the relational structure in detail, but the starting point is the scaling method. In particular, all the underlying assumptions of the ground space structure are determined by the scaling method. For example, in Example~\ref{ex:spatial vegetation} we have $G=\Rbb^2\times V$ as ground space. Therefore, if we have two observations in the same place with the same vegetation, we assume them to be duplicates of the same objects. Hence, the information about the two identical observations is only included in the empirical probability measure and not in the formal context itself. Another approach, not discussed here, is to consider each individual nesting point as an observation that cannot be a duplicate, but is another object in the ground space. 
\end{remark}

\section{The Union-Free Generic Depth}\label{sec: ufg + simplicial depth}
In this section, we introduce the \textit{union-free generic} depth function. It provides a centrality and outlyingness measure for data that cannot be embedded in the multidimensional real vector space. In particular, this depth function makes direct use of the relational structure provided by formal concept analysis. The definition of the union-free generic depth function is in the spirit of the simplicial depth function on $\mathbb{R}^d$, see~\cite{liu90}. We transfer the idea of using simplices to the framework of formal concept analysis.

\subsection{The Simplicial Depth from the Perspective of Formal Concept Analysis}\label{sec: ufg simplicial perspective}
Recall Example~\ref{ex:only spatial} where we discussed the formal context $\mathbb{K}_{\mathbb{R}^2} = (\mathbb{R}^2, M_{\mathbb{R}^2}, I_{\mathbb{R}^2})$ with $G = \mathbb{R}^2$ as data/objects and the set of half-spaces as attributes. In this section we have a look at the simplicial depth from the perspective of formal concept analysis.
Therefore, let $\mathcal{I}_{\mathbb{R}^2, \text{ufg}} = \{C \to \gamma_{\mathbb{R}^2}(C) \mid \begin{array}{l} C \subseteq \mathbb{R}^2 \text{ vertices of a simplex}\: \text{and} \: 2 \le \# C \le 3 \end{array} \}$ be the reduced family of implications of $\mathcal{I}_{\Rbb^2}$.
 
Using Carath{\'e}odory's theorem, see~\cite{eckhoff93}, we can first show that this family of implications describes the convex sets. Moreover, we obtain that compared to $\mathcal{I}_{\Rbb^2}$, $\mathcal{I}_{\Rbb^2, \text{ufg}}$ has deleted all implications that follow from the inference axioms of reflexivity, augmentation, additivity, and projectivity, see Lemma~2.2 in the supplementary for details. Note that the transitivity and pseodotransitivity axioms are not applied for the deletion. This is done because when restricting to the inference axioms of reflexivity, augmentation, additivity and projectivity, the information about the \textit{betweenness} of the data points is preserved directly, otherwise this information is given only indirectly. For example, let $A \subseteq \mathbb{R}^2$ with $\#A = 3$ and let $A_1, A_2 \subseteq A$ be a division of $A$ such that $A_1 \cup A_2 = A$. Then $\gamma_{\mathbb{R}^2}(A_i), i \in \{1,2\}$ is the line between the two elements of $A_i$, and using then transitivity and pseodotransitivity together, we get that $\gamma_{\mathbb{R}^2} (\gamma_{\mathbb{R}^2}(A_1) \cup \gamma_{\mathbb{R}^2}(A_2)) = \gamma_{\mathbb{R}^2}(A)$ the whole triangle.

\begin{example}\label{exampl: Rhochd}
	Note that all the considerations above for $\mathbb{R}^2$ (Example~\ref{ex:only spatial}) can be easily adopted to general $\mathbb{R}^d$ with $d \in \mathbb{N}$. The formal context is defined analogously with half-spaces in $\mathbb{R}^d$ and as closure system/operator we get again the closed convex sets with the corresponding convex closure operator. Similar to Lemma~2.2 in the supplementary we obtain all implications with $2 \le k \le d+1$ points defining a vertice of a simplex as premises of an ufg-implication.
\end{example}

With the above in mind, let us take a closer look at $$D: \mathbb{R}^d \to \mathbb{R}, g \mapsto \sum_{i=2}^{d+1} P(g \in \gamma(\{X_1\ldots X_i\}) \mid X_1\ldots X_i \text{ define vertices of a proper simplex})$$
for a probability measure $P$ on $\mathbb{R}^d$ and independent random variables $X_1, \ldots, X_{d+1}\sim P$. This gives us the sum of the probabilities that $g$ lies in a proper simplex of cardinality $2 \le k\le d+1$.
Assuming that the probability measure $P$ is absolutely continuous with respect to the Lebesgue measure, we obtain that $D$ exactly mimics the simplicial depth function. 

\subsection{Definition of the Union-Free Generic Depth}\label{sec: ufg definition}
Now we take the next step and transfer the idea based on simplicial depth to general data represented by a formal context and the resulting closure system/operator. Let $\Kbb = (G,M,I)$ be a formal context with corresponding closure system $\mathcal{E}_{\Kbb}$ and closure operator $\gamma_{\Kbb}$. In the style of the above section, we define a family of implications that is reduced based on the Armstrong rules of reflexivity, augmentation, additivity, and projectivity.
\begin{definition}\label{def: ufg family of implication}
	The \textit{union-free generic family of implications} (\textit{ufg-family of implications} for short) for a formal context $\Kbb$ on a object set $G$ is defined by
	\begin{align*}
		\Ical_{G, \text{ufg}} := \{A \to \gamma_{G}(A) \mid A \textit{ fulfills (C1) and (C2)}\}
	\end{align*}
	with the conditions on $A$: (C1) $A \subsetneq \gamma_G(A)$ and (C2) for all families $(A_j)_{j \in J}$ with $A_j \subsetneq A$ for $j \in J$ we have that $\cup_{j \in J} \gamma_G(A_i)\neq \gamma_G(A)$.
	For an implication $A \to B \in \Ical_{G, \text{ufg}}$, we say that $A\in \Ical^{prem}_{G, \text{ufg}}$ is the \textit{ufg-premise} and $B\in \Ical^{concl}_{G, \text{ufg}}$ the \textit{ufg-conclusion}.  For examples, see Section~\ref{sec: ufg simplicial perspective} and Section~\ref{sec:examples}.
\end{definition}



We call $\Ical_{G, \text{ufg}}$ generic following~\cite{bastide00} where they called an implication with minimal premise and maximal conclusion to be generic. The minimality of the premise follows from Condition (C2). 
Since we set the conclusion to $\gamma_G(A)$ the maximality is also directly given. The term union-free describes the idea behind the Condition (C2) as it covers more then only generic.\footnote{Such families are also called proper and contracted in \cite[p. 82]{ganter12}.}
Note that there can still exist non-redundant implications which follow semantically from other implications by use of transitivity and pseudotransitivity.

We want to point out that the ufg-family of implications does not necessarily result in a family of implications that describes the closure systems as it can reduce too much information. An example is the closure system $\Ecal_{\Nbb} = \left\{A \subseteq \mathbb{N} \mid \#A \text{ finite}\right\} \cup \mathbb{N}$ on $\mathbb{N}$. Here the family of all implications is given by $\Ical_{\Nbb} = \{A \to B \mid \#A = \infty, A \subseteq \mathbb{N}, B \subseteq \Nbb\} \cup \left\{A \to B \mid \#A < \infty, B \subseteq A \right\} $. So every implication does not satisfy Condition (C1) or (C2). Hence $\Ical_{\Nbb, \text{ufg}} = \emptyset$. Note that this can only happen if the underlying space is infinite, and even in infinite cases this limitation does not hold in general, as can be seen in the spatial case. However, when using the ufg-depth, this is another aspect that needs to be considered in the scaling method definition.

Now we transfer the idea of the simplicial depth and define the ufg-depth as weighted probability that an object/element lies in a randomly drawn ufg-conclusion. The definition is in line with the general mapping structure given in \cite{blocher23b}. To simplify the notation, we set $\Ical^{prem,j}_{\Kbb, \text{ufg}}$ to be the set of all ufg-premises given by the formal context $\Kbb$ of cardinality $j \in \mathbb{N}$ and define $f^j_g:2^G \to \{0,1\}, A \mapsto 1_{\gamma(A)}(g) 1_{\Ical^{prem,j}_{\Kbb, \text{ufg}}}(A)$ with $\Fcal^j = \{f^j_g \mid g \in G\}$.\footnote{For the consistency proof, we will later need the dual definitions $f^j_A:G \to \{0,1\}, g \mapsto 1_{\gamma(A)}(g) 1_{\Ical^{prem,j}_{\Kbb, \text{ufg}}}(A)$ and $\tilde{\Fcal}^j = \{f_A^j \mid A \subseteq G\}$.} Let $h^j:G\times \ldots \times G \to \{0,1\}, (g_1, \ldots, g_j) \mapsto 1_{\Ical^{prem,j}_{\Kbb, \text{ufg}}}(g_1, \ldots, g_j)$. Moreover, we define the functional U-statistics for every function $i$ $$U^{j}_{(g_1, \ldots, g_{n})}[i] = \begin{cases} {n \choose j}^{-1} \sum\limits_{1 \le i_1 < \ldots <i_j\le n} i(g_{i_1}, \ldots, g_{i_j}) , & j \le n \\ 0, & j > n \end{cases}.$$.
\begin{definition}\label{def: ufg depth}
	Let $G$ be a set. We set $\varkappa_G  \subseteq \{\mathbb{K} \mid G \text{ is object set of } \mathbb{K}\}$ to be a set of formal contexts with object set $G$ and $\Pcal_G$ to be a family of probability measures on $G$ such that for every $P \in \Pcal_G$ every extent of every $\Kbb\in\kappa_G$ is measurable. Moreover, we assume that for every $\Kbb \in \kappa_G$ there exists a one-to-one correspondence between the ufg-family of implications and the formal context. Let the weights $C_j \in \: ]0,\infty[$ be fix for all $j \in \Nbb$.\footnote{The weights $C_j$ can also be random depending on $P$, see~\cite{blocher23}.}
	
	Then the \textit{union-free generic depth (ufg-for short)} with $J_{P, \Kbb}  = \{j \subseteq \Nbb\mid P((X_1, \ldots, X_j) \in \Ical_{\Kbb, \text{ufg}})> 0\}$ with $X_1, \ldots, X_j \overset{i.i.d.}{\sim} P$ is given by
	\begin{align*}
		D: \left\{\begin{array}{l}
		G \times \kappa_G \times \Pcal_G \to \mathbb{R}^d, \\
		(g, \Kbb, P) \mapsto  \sum_{j \in J_{P,\Kbb}} \frac{C_j}{\Ebb[h^j]}\Ebb[f_g^j]
		\end{array} \right. .
	\end{align*}
	Where the expectation is based on the product measure of $P\in \Pcal_G$. The object(s) with the highest ufg-depth value is(are) called \textit{ufg-median}.
\end{definition}

\begin{remark}
First, the ufg-premises being finite is not necessary, but as the sum only includes finite ufg-premises infinite ufg-premises are not taken into account.
At a first glance the definition of $J_{P, \Kbb}$ seems to be tricky. However, it is sufficient to know an upper bound for $\max J_{P, \Kbb}$, since for an index that is not part of $J_{P, \Kbb}$, that part of the sum is zero by default in the empirical version. To obtain such an upper bound one can utilize the structure of the formal context, see, e.g.~Lemma~2.3. or Lemma~2.13. in the supplementary. Moreover, we want to point out that the definition of the ufg-family of implication in concrete settings can differ strongly in their complexity. Thus, in the definition of the formal context also the computation aspect should be taken into account. For details see Section~\ref{sec:examples}.

Second, note that the weights $C_j$ allow flexibility in the definition. For $C_j = 1$ for all $j \in J_{P, \Kbb}$ we obtain the conditional probabilities. 
\end{remark}


\begin{definition}\label{def: empirical ufg depth}
Let $G, \varkappa_G$ and $\Pcal_G$ as in Definition~\ref{def: ufg depth}. Let $P \in \Pcal_G$ and we set $J_{P, \Kbb}$ as in Definition~\ref{def: ufg depth}. Let $x_1, \ldots, x_n  \overset{i.i.d.}{\sim} P$ for $n \in\Nbb$. Again, we assume fixed weights $C_j \in \: ]0, \infty[$ for every $j \in \Nbb$. Then the \textit{empirical ufg-depth} is given by
\begin{align*}
		D^{(n)}: \left\{ \begin{array}{l}
		G \times \kappa_G  \to \mathbb{R}^d, \\
		(g, \Kbb) \mapsto \sum_{j \in J_{P, \Kbb}} \frac{C_j}{U^j_{(x_1, \ldots, x_n)}[h^j]} U^j_{(x_1, \ldots, x_n)}[f^j_g]
		\end{array} \right. .
	\end{align*}
	For a rigorous definition, from now on we set $r/0$ to zero for $r \in \Rbb$.
\end{definition}

From now on, $\varkappa_G$ and $\Pcal_G$ are two families where every extent from $\Kbb \in \varkappa_G$ is measurable for every $P \in \Pcal_G$. Moreover, we omit the index in $D, \Kbb, \Ical, M, I$ and $\gamma$ from now on if the ground space is clear.

\
\section{Structural Properties}\label{sec:structural_prop}
In the previous sections, we explored the connection between ufg-depth and simplicial depth. 
Here, we aim to determine exactly how the ufg-depth is a measure of centrality. Therefore, we build on the \textit{structural properties} given by~\cite{blocher23b}. These properties provide a systematic basis for discussing centrality and outlyingness for non-standard data represented via formal concept analysis. 
These structural properties address two aspects. First, the adaptation of existing desirable properties in $\mathbb{R}^d$, see~\cite{zuo00, serfling00, mosler22}. Some, such as quasiconcavity which relies on a notion of \say{lying in}, can be easily transferred. For others, e.g.~vanishing to infinity, this is not the case. Second, the structural properties cover the inherited centrality/outlyingness of the data structure itself. In this section, we analyze the ufg-depth function in terms of these structural properties. We use the examples above to provide the idea behind the structural properties. For overview, we underline the structural properties discussed in the theorem.

\subsection{Representation Properties}
The first two properties ensure that the depth functions preserve the structure imposed by the formal context on $G$.
Concretely, this means that representing the data $G$ by a different formal context, which results in the same closure system on $G$, should not affect the depth as long as the probability measure is preserved. In other words, if a different scaling method is used that represents the relationship between the data elements in the same way, then the relationship structure, and not the attributes and incidence relation used, should be crucial.
 The second part assumes that two objects having the same attributes, and therefore are not distinguishable from the perspective of formal concept analysis, need to have the same depth values.
\begin{theorem}
	Let $P, \tilde{P} \in \Pcal_G$ be two probability measures on $G$ and let $\Kbb, \tilde{\Kbb} \in \varkappa$ be two formal contexts on $G$. \\
	\underline{Invariance on the extents:} Assume that there exists a bijective and bimeasureable function $i: G \to G$ such that the extents are preserved (i.e.~$E$ extent w.r.t.~$\Kbb$ $\Leftrightarrow i(E)$ extent w.r.t.~$\tilde{\Kbb}$) and the probability as well (i.e.~$P(E) = \tilde{P}(i(E))$). Then $
		D_G(g, \Kbb, P) \le D_G(\tilde{g}, \Kbb, P) 
		\Leftrightarrow \tilde{D}_G(i(g), \tilde{\Kbb}, \tilde{P}) \le \tilde{D}_G(i(\tilde{g}), \tilde{\Kbb}, \tilde{P}) $
	is true for all $g, \tilde{g} \in G$.\\
	\underline{Invariance on the attributes:} Let $g_1, g_2 \in G$ with $\Psi_{\Kbb}(g_1) = \Psi_{\Kbb}(g_2)$, then $D(g_1, \Kbb, P) = D(g_2, \Kbb, P)$ holds.
\end{theorem}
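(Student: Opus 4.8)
The plan is to prove, for the first part, the stronger pointwise identity $D_G(g,\Kbb,P) = \tilde D_G(i(g),\tilde\Kbb,\tilde P)$ for every $g\in G$; the asserted order-equivalence is then immediate (in fact $i$ is an order-isomorphism for the two depth orders). The argument amounts to checking that each ingredient of the depth in Definition~\ref{def: ufg depth} — the closure operator, the ufg-premises, and the two expectations $\Ebb[h^j]$, $\Ebb[f_g^j]$ — is transported faithfully by $i$.

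The first step is a purely lattice-theoretic lemma: the bijection $i$ conjugates the two closure operators, that is $i(\gamma_\Kbb(A)) = \gamma_{\tilde\Kbb}(i(A))$ for every $A\subseteq G$. Indeed, $i(\gamma_\Kbb(A))$ is a $\tilde\Kbb$-extent containing $i(A)$, hence contains $\gamma_{\tilde\Kbb}(i(A))$; since the extent-preservation hypothesis applies verbatim to $i^{-1}$, the set $i^{-1}(\gamma_{\tilde\Kbb}(i(A)))$ is a $\Kbb$-extent containing $A$, which gives the reverse inclusion after applying $i$. Because $i$ is a bijection it preserves cardinalities, strict inclusions and arbitrary unions, so the conditions (C1) and (C2) of Definition~\ref{def: ufg family of implication} hold for $A$ with respect to $\Kbb$ exactly when they hold for $i(A)$ with respect to $\tilde\Kbb$. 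Consequently $A\in\Ical^{prem,j}_{\Kbb,\text{ufg}}$ iff $i(A)\in\Ical^{prem,j}_{\tilde\Kbb,\text{ufg}}$ for every $j$, and $g\in\gamma_\Kbb(A)$ iff $i(g)\in\gamma_{\tilde\Kbb}(i(A))$.

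The second step transfers the probabilities. By the previous step the product map $i^{\times j}$ carries the event defining $f^j_g$ with respect to $\Kbb$ bijectively onto the event defining $f^j_{i(g)}$ with respect to $\tilde\Kbb$, and similarly for $h^j$; so it suffices to show the pushforward $i_\ast P$ equals $\tilde P$. The hypothesis gives $P(E)=\tilde P(i(E))$ only on extents, i.e. $i_\ast P$ and $\tilde P$ agree on the $\tilde\Kbb$-extents. But a closure system is closed under intersection and contains $G$, hence the extents form a $\pi$-system, and the $\pi$--$\lambda$ theorem propagates this agreement to the whole $\sigma$-algebra generated by the extents — which, by the standing measurability assumption in Definition~\ref{def: ufg depth}, carries all the events appearing inside the expectations. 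Therefore $(i_\ast P)^{\otimes j}=\tilde P^{\otimes j}$ on these events, so $\Ebb_P[h^j]=\Ebb_{\tilde P}[\tilde h^j]$ and $\Ebb_P[f^j_g]=\Ebb_{\tilde P}[\tilde f^j_{i(g)}]$ for every $j$; in particular $J_{P,\Kbb}=J_{\tilde P,\tilde\Kbb}$, since these index sets only record positivity of $\Ebb[h^j]$. Summing term by term yields the claimed equality. I expect this measure-theoretic transfer to be the main obstacle: one must argue carefully that equality on the $\pi$-system of extents propagates to equality on the product events hidden inside $\Ebb[f^j_g]$ and $\Ebb[h^j]$, which is precisely where the $\pi$--$\lambda$ theorem and the measurability of extents are needed.

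For the second part the key observation is that the evaluation point $g$ enters the depth only through the indicators $1_{\gamma(A)}(g)$. If $\Psi_\Kbb(g_1)=\Psi_\Kbb(g_2)$, then $g_1$ and $g_2$ lie in exactly the same extents: writing an arbitrary extent as $\Phi(B)$ with $B\subseteq M$, membership $g\in\Phi(B)$ is equivalent to $B\subseteq\Psi_\Kbb(g)$, and this condition is unchanged under $\Psi_\Kbb(g_1)=\Psi_\Kbb(g_2)$. Since every $\gamma(A)$ is an extent, $1_{\gamma(A)}(g_1)=1_{\gamma(A)}(g_2)$ for all $A\subseteq G$, whence $f^j_{g_1}=f^j_{g_2}$ as functions and $\Ebb[f^j_{g_1}]=\Ebb[f^j_{g_2}]$ for every $j$. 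All remaining factors $C_j$, $\Ebb[h^j]$ and the index set $J_{P,\Kbb}$ are independent of the evaluation point, so $D(g_1,\Kbb,P)=D(g_2,\Kbb,P)$; this part needs no measure theory beyond the already-assumed measurability of extents.
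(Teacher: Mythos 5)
Your proposal is correct and takes essentially the same route as the paper's (very terse) proof: the paper's entire argument for the first part is the observation that the ufg-depth depends only on the extent system and on the probabilities of extent-determined events, both of which are preserved by $i$, and its argument for the second part is exactly your observation that $\Psi_{\Kbb}(g_1)=\Psi_{\Kbb}(g_2)$ forces $g_1$ and $g_2$ to lie in the same extents, hence in the same ufg-conclusions. Your conjugation lemma $i(\gamma_{\Kbb}(A))=\gamma_{\tilde{\Kbb}}(i(A))$, the transfer of conditions (C1)/(C2), and the $\pi$--$\lambda$ argument pushing agreement on extents to agreement on the product events are all details the paper leaves implicit, so your write-up is a strictly more rigorous rendering of the same idea.
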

\subsection{Order-Preserving Properties}\label{sec:stru_prop:order}
The order-preserving properties cover the idea of \say{maximality at the center}, \say{monotonicity relative to the deepest point}, and \say{quasiconcavity} properties in $\mathbb{R}^d$, see \cite{mosler22}. At the same time, these properties also represent the structure of the ground space, such as an inherited centrality/outlyingness structure. In contrast to $\mathbb{R}^d$ where no element has a predetermined tendency to be more central than another element, this can appear for non-standard data. For example, consider the case of two objects $g_1, g_2$ with $\Psi(g_1)\supseteq \Psi(g_2)$. Hence, $g_2$ has all attributes that $g_1$ has and therefore lies in every closure set that contains also $g_1$. $g_2$ is, in some sense, more specific than $g_1$ and therefore the depth of $g_2$ should be as least as high as the depth of $g_1$. The property that formalizes this is called \textit{isotonicity}. Note that in some cases center and outlying elements are then directly implied. When an element lies in every closure set it needs to have maximal depth. This property is called \textit{maximality} property. The reverse is called \textit{minimality} property and states that an object that lies only in the most general extent, i.e.~the entire set, needs to have minimal. Both properties follow directly from the isotonicity by Theorem~2 of \cite{blocher23b}.

\begin{theorem}
	Let $P \in \Pcal_G$ and formal context $\Kbb \in \varkappa$ with $g_1, g_2 \in G$ such that $\gamma_{\mathbb{K}}(\{g_1\}) \supseteq \gamma_{\mathbb{K}}(\{g_2\})$. Then the \underline{isotonicity property} $D(g_1, \Kbb, P) \le D(g_2, \Kbb, P)$ is true.
\end{theorem}

%

The isontonicity property can be seen as a pre-property for the stricter \textit{starshaped} and \textit{quasiconcavity/contourclosed} properties. As the name implies, the starshaped property is inspired by the \say{monotone relative to the deepest point} property in $\mathbb{R}^d$, see \cite{zuo00}. It says that if we have a center (e.g., given by the maximality property), then any element $g_2$ implied by the center $c$ and another element $g_1$ (i.e., $g_2 \in \gamma(c,g_1)$) has at least as high a depth as $g_1$. A depth function satisfies the quasiconcave property iff for every $\alpha \in \mathbb{R}$ the contour set $Cont_{D, \alpha} = \left\{g \in G \mid D(g, \Kbb, P) \ge \alpha\right\}$ defines an extent set. Since the convex sets correspond to the extents, this is a direct translation of the quasiconcavity property in~\cite{mosler22}. Recall that for the formal context $\Kbb_{\Rbb^d}$ (see Example~\ref{ex:spatial vegetation}) the ufg-depth coincides with the simplicial depth in $\mathbb{R}^d$. With \cite{serfling00} we immediately obtain that the ufg-depth is neither starshaped nor quasiconcave.

In many cases, however, one can easily define an adopted ufg-depth function that is starshaped or quasiconcave. We show in Section~\ref{sec: universality properties} that such an adaptation can lead to a quasiconcave depth function with as few ties as possible. One approach builds on order theory and we aim to obtain the smallest quasiconcave function that still lies above the original ufg-depth. This gives us $D^{qc}(\cdot, \Kbb, P): G \to \Rbb,x \mapsto \sup\left\{\alpha \in \mathbb{R} \mid Cont_{D, \alpha} \to g \right\}.$ 

\begin{theorem}\label{th:tildeDqc}
	Let $D(\cdot, \Kbb, P)$ be a depth based on formal concept analysis. Then $D^{qc}$ is quasiconcave.
\end{theorem}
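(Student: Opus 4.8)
The plan is to verify quasiconcavity directly from the definition stated just before the theorem: a depth is quasiconcave iff for every threshold $\beta\in\Rbb$ the contour set $Cont_{D^{qc},\beta}=\{g\in G\mid D^{qc}(g)\ge\beta\}$ is an extent. First I would translate the defining expression of $D^{qc}$. By the reading of implications fixed in Section~\ref{sec:illustration}, namely $A\to B \Leftrightarrow \gamma(B)\subseteq\gamma(A)$, the statement $Cont_{D,\alpha}\to g$ is equivalent to $\gamma(\{g\})\subseteq\gamma(Cont_{D,\alpha})$, i.e. to $g\in\gamma(Cont_{D,\alpha})$. Hence $D^{qc}(g)=\sup\{\alpha\in\Rbb\mid g\in\gamma(Cont_{D,\alpha})\}$, and the whole argument reduces to controlling the upper level sets of this supremum.

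The key structural observation is monotonicity. Since $\alpha\mapsto Cont_{D,\alpha}$ is non-increasing in $\alpha$ and $\gamma$ is monotone, the map $\alpha\mapsto\gamma(Cont_{D,\alpha})$ is non-increasing as well. Consequently, for fixed $g$ the set $\{\alpha\mid g\in\gamma(Cont_{D,\alpha})\}$ is a downward-closed subset of $\Rbb$ whose supremum is exactly $D^{qc}(g)$. From this I would derive the level-set identity $Cont_{D^{qc},\beta}=\bigcap_{\alpha<\beta}\gamma(Cont_{D,\alpha})$ for every $\beta\in\Rbb$: if $D^{qc}(g)\ge\beta$, then every $\alpha<\beta$ lies strictly below the supremum and hence belongs to the above down-set, so $g\in\gamma(Cont_{D,\alpha})$ for all such $\alpha$; conversely, membership in all $\gamma(Cont_{D,\alpha})$ with $\alpha<\beta$ forces the supremum to be at least $\beta$.

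Once this identity is in place the conclusion is immediate. Each $\gamma(Cont_{D,\alpha})$ lies in the image of $\gamma$ and is therefore an extent, i.e. a member of the closure system $\Ecal_{\Kbb}$. A closure system is closed under arbitrary intersections, and the index family $\{\alpha\mid\alpha<\beta\}$ is non-empty, so $Cont_{D^{qc},\beta}$ is again an extent. As $\beta$ was arbitrary, $D^{qc}$ is quasiconcave.

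The main obstacle is the careful bookkeeping in the level-set identity rather than any deep idea: I need to handle the strict-versus-nonstrict inequalities around the supremum and the edge cases where $Cont_{D,\alpha}$ is empty or all of $G$ (so that $\gamma(Cont_{D,\alpha})$ is the smallest extent or $G$ itself), which is what guarantees that $D^{qc}$ is well defined and that the defining index set is non-empty. As a sanity check I would also note that extensivity of $\gamma$ gives $g\in Cont_{D,\alpha}\subseteq\gamma(Cont_{D,\alpha})$ for the choice $\alpha=D(g)$, whence $D^{qc}\ge D$, confirming that the construction indeed produces a quasiconcave majorant of the ufg-depth, even though only quasiconcavity is required for the statement.
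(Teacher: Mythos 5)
Your proof is correct, and it takes a genuinely different route from the paper's. The paper argues by contradiction: assuming some $g \in \gamma(Cont_{D^{qc},\alpha})\setminus Cont_{D^{qc},\alpha}$, it splits into the cases $Cont_{D^{qc},\alpha}=\emptyset$ and $Cont_{D^{qc},\alpha}\neq\emptyset$, and in the latter case works inside the implication calculus, chaining $Cont_{D,\alpha'} \to Cont_{D^{qc},\alpha}$ (for $\alpha'<\alpha$) with $Cont_{D^{qc},\alpha}\to g$ via the transitivity inference axiom to force $D^{qc}(g)\ge\alpha'$ for all $\alpha'$ close to $\alpha$, contradicting $D^{qc}(g)<\alpha$. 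You instead give a direct argument in the language of the closure operator: after translating $Cont_{D,\alpha}\to g$ into $g\in\gamma(Cont_{D,\alpha})$, you use monotonicity of $\alpha\mapsto\gamma(Cont_{D,\alpha})$ to establish the level-set identity $Cont_{D^{qc},\beta}=\bigcap_{\alpha<\beta}\gamma(Cont_{D,\alpha})$, and then invoke closedness of the extent system under arbitrary (non-empty) intersections. The mathematical core is the same nestedness fact, namely $Cont_{D^{qc},\alpha}\subseteq\gamma(Cont_{D,\alpha'})$ for $\alpha'<\alpha$, but your packaging buys two things: it handles the empty-contour-set situation uniformly instead of as a separate case, and it yields an explicit description of the contour sets of $D^{qc}$ as intersections of extents, which is precisely the structural fact the paper later uses informally in Section~\ref{sec:appl_hierarchical} (where the extents $\gamma(Cont_{D,\alpha})$ are asserted to be the contour sets of $D^{qc}$). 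The price is the bookkeeping around strict versus non-strict inequalities at the supremum, which you correctly identify and resolve via downward-closedness of $\{\alpha \mid g\in\gamma(Cont_{D,\alpha})\}$.
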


Note that searching for the most similar quasiconcave function based on a loss function is another approach to get a a quasiconcave function, see the supplement. Also note that our focus here is on quasiconcave, but one can adapt these ideas to starshapedness.

\subsection{(Empirical) Sequence Properties}
The previous sections focused on how the structure of the formal context is represented in the data. In this section, we have a fixed formal context, but consider a sequence of (empirical) probability measures. These properties address issues such as duplication in a sample, how outlying objects influence the more central ones and consistency considerations.

First, let us assume we have a sample $(g_1, \ldots, g_n)$. 
In the first case, the \textit{reflecting duplication} property, we assume that there are two objects $g_i, g_{\ell}$ in the sample that cannot be distinguished by the formal context (i.e.~the same object is observed twice, or they have exactly the same attributes). Then the depth of $g_i$ should be higher when considering the entire sample compared to the sample where the duplication is deleted. For the second property, we assume that there is an object $g_i$ that is completely different from all other observed objects. This means that the only extent containing $g_i$ and any subset of the sample is directly the entire set $G$. Then this object $g_i$ should not affect the order of the remaining objects, in the sense that it does not matter whether it is in the sample or not. This property is called \textit{stability of the order}.

\begin{theorem}
	Let $\Kbb \in \varkappa$. Let $g_1, \ldots, g_n$ be a sample of $G$ with $n \in \mathbb{N}$. We denote with $P^{(n)}$ the empirical probability measure given by $g_1, \ldots, g_n$ and by $P^{(n, -\ell)}$ the empirical probability measure based on $g_1, \ldots, g_{\ell-1},  g_{\ell+1}, \ldots,  g_n$ with $\ell \in \{1, \ldots, n\}.$ \\
	\underline{Respecting duplication}: Let $i,\ell \in \{1, \ldots, n\}$ with $i \neq \ell$ and for every extent $E \in \mathcal{E}$ we have $g_{\ell} \in E$ iff $g_{i} \in E$. Moreover, assume that there exists $j \in J_{P, \Kbb}$ and ufg-premises $A_1, A_2 \in 2^{\{g_1, \ldots, g_{\ell-1},  g_{\ell+1}, \ldots,  g_n\}} \cap \Ical_{ufg}^{prem,j}$ with $g_i \in A_1$ and $g_i \not\in \gamma(A_2)$. Then, we have $D_G(g_i, \Kbb, P^{(n, -\ell)}) < D_G(g_i, \Kbb, P^{(n)}).$\\
		\underline{Stability of the order}: Assume that the only extents $E$ that contains $g_{\ell}$ for $\ell \in \{1, \ldots, n\}$ as well as any subset of $g_1, \ldots, g_{\ell-1},  g_{\ell+1}, \ldots$ is $E = G$. Then for $g, \tilde{g} \in \{g_1, \ldots, g_{\ell -1}, g_{\ell + 1}, \ldots, g_n\}$ we have $D_G(g, \Kbb, P^{(n)}) \le D_G(\tilde{g}, \Kbb, P^{(n)}) 
		\Leftrightarrow \tilde{D}_G(g, \Kbb, P	^{(n, -\ell)}) \le \tilde{D}_G(\tilde{g}, \Kbb, P	^{(n, -\ell)}). $
\end{theorem}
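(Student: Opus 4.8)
The plan is to reduce both statements to a combinatorial count of ufg-premises and then argue monotonicity. Write $S=\{g_1,\dots,g_n\}$ and $S_{-\ell}=S\setminus\{g_\ell\}$. For an empirical measure supported on a finite sample, a $j$-tuple with a repeated entry has $h^j=0$ (a member of $\Ical^{prem,j}_{\Kbb,\mathrm{ufg}}$ has exactly $j$ elements), so the combinatorial normalisations cancel and the ratio $\Ebb[f^j_g]/\Ebb[h^j]$ equals $M_j(g)/N_j$, where $N_j$ is the number of $j$-element ufg-premises contained in the sample and $M_j(g)$ the number of those whose conclusion contains $g$. Thus $D_G(g,\Kbb,P^{(n)})=\sum_{j}C_j\,M_j(g)/N_j$ (convention $r/0=0$), and analogously for $P^{(n,-\ell)}$ with counts $N_j^-,M_j^-(g)$ taken over $S_{-\ell}$. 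Since a subset of $S$ avoiding $g_\ell$ is exactly a subset of $S_{-\ell}$, and both the property of being a ufg-premise and the conclusion $\gamma(\cdot)$ are intrinsic to the closure operator, all premises not containing $g_\ell$ and their conclusions agree between the two samples; only premises containing $g_\ell$ are new.

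For \textbf{respecting duplication}, the hypothesis that $g_i,g_\ell$ lie in the same extents means the transposition $\sigma$ of $G$ exchanging $g_i$ and $g_\ell$ fixes every extent setwise (each extent contains both or neither), whence $\gamma(\sigma(B))=\gamma(B)$ for all $B\subseteq G$; since conditions (C1) and (C2) refer only to $\gamma$, $\sigma$ carries ufg-premises to ufg-premises. Moreover any $B\ni g_\ell$ has $g_\ell\in\gamma(B)$ and hence $g_i\in\gamma(B)$, so every new premise counts towards $M_j(g_i)$. Therefore $N_j=N_j^-+\Delta_j$ and $M_j(g_i)=M_j^-(g_i)+\Delta_j$, where $\Delta_j\ge0$ is the number of new $j$-premises, and the mediant inequality gives $M_j(g_i)/N_j\ge M_j^-(g_i)/N_j^-$ for every $j$. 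At the index $j$ supplied by the hypothesis, $A_2$ witnesses $M_j^-(g_i)<N_j^-$ (it is a $j$-premise in $S_{-\ell}$ with $g_i\notin\gamma(A_2)$), while $A_1^{\ast}:=(A_1\setminus\{g_i\})\cup\{g_\ell\}=\sigma(A_1)$ is a new $j$-premise, so $\Delta_j\ge1$; the inequality is then strict for that $j$. Summing against the weights $C_j>0$ yields $D_G(g_i,\Kbb,P^{(n,-\ell)})<D_G(g_i,\Kbb,P^{(n)})$.

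For \textbf{stability of the order}, the hypothesis gives $\gamma(\{g_\ell\}\cup T)=G$ for every $T\subseteq S_{-\ell}$. Hence any $j$-subset $B\ni g_\ell$ with $j\ge2$ satisfies $\gamma(B)=G$ and admits the proper subset $\{g_\ell\}$ with $\gamma(\{g_\ell\})=G=\gamma(B)$, so it violates (C2): no ufg-premise of cardinality $\ge2$ contains $g_\ell$. Consequently, for every $j\ge2$ the $j$-premises of $S$ are exactly those of $S_{-\ell}$, so $N_j=N_j^-$ and $M_j(g)=M_j^-(g)$ for all $g$, and these terms are identical in the two depths. In the regimes of interest -- in particular the simplicial/spatial context, where a singleton is never a ufg-premise -- no premise of any cardinality is created, so $D_G(\cdot,\Kbb,P^{(n)})$ and $D_G(\cdot,\Kbb,P^{(n,-\ell)})$ coincide on $S_{-\ell}$ and the order is trivially preserved.

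The main obstacle I expect is the single cardinality-one premise $\{g_\ell\}$: it satisfies (C1) precisely because $\gamma(\{g_\ell\})=G\neq\{g_\ell\}$, so $\{g_\ell\}\in\Ical^{prem,1}_{\Kbb,\mathrm{ufg}}$ whenever $\gamma(\emptyset)\neq G$. Its conclusion is $G$, so it raises both $M_1(g)$ and $N_1$ by $1$ uniformly in $g$, turning the $j=1$ term into $C_1\,(M_1^-(g)+1)/(N_1^-+1)$. This is a common monotone transform of $M_1^-(g)$, but it rescales the weight of the $j=1$ contribution relative to the unchanged $j\ge2$ contributions, which a priori could reorder two points that trade off cardinality-one support against higher-cardinality support. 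The crux is therefore to show that this reweighting is order-irrelevant: the clean route is to note that in the settings under consideration $J_{P,\Kbb}$ either excludes $j=1$, or $j=1$ is the \emph{only} active cardinality, in which case $D^-\mapsto D$ on $S_{-\ell}$ is affine increasing. Pinning down exactly when the extra singleton is order-neutral (equivalently, strengthening the hypothesis so that $g_\ell$ adds no premise of any cardinality) is where the argument must be carried out with care.
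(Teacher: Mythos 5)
Your respecting-duplication argument is correct and is essentially the paper's own (very terse) proof made rigorous: the paper asserts exactly the three facts you establish -- a further ufg-premise relevant to $g_i$ appears in the full sample, every premise containing $g_\ell$ has $g_i$ in its conclusion, and $A_2$ keeps the proportion strictly below one -- and your transposition lemma ($\sigma$ fixes every extent, hence $\gamma\circ\sigma=\gamma$, hence $\sigma$ maps ufg-premises to ufg-premises) together with the mediant inequality supplies the bookkeeping the paper omits. One caveat you share with the paper: the hypothesis also admits the case that $g_i$ and $g_\ell$ are literally the same object observed twice; then no new premise appears, the empirical weights are no longer uniform, so the reduction of $\Ebb[f^j_g]/\Ebb[h^j]$ to count ratios breaks down and the (still valid) strict increase has to be derived from the change in probability weights instead of from counts.

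For stability of the order, your treatment of cardinalities $j\ge 2$ is correct and in fact sharper than the paper's (the paper only notes that premises containing $g_\ell$ imply the whole sample; you show that for $j\ge 2$ such premises violate (C2) and so cannot exist). The obstacle you then flag -- the singleton premise $\{g_\ell\}$ -- is not a technicality you failed to handle: it is a genuine flaw in the theorem and in the paper's proof. The paper dismisses it with the sentence that ``the amount added in the proportion equals'' for every $g$, but when $\{g_\ell\}$ is a premise the increment of the $j=1$ proportion is $(N_1^- - M_1^-(g))/(N_1^-(N_1^-+1))$, which depends on $g$, while all $j\ge2$ proportions are unchanged; a weighted sum of terms rescaled by different affine maps need not preserve order. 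Concretely, take $G=\{z,a,b,c,w,v\}$ with closure system $\{\emptyset,\{b\},\{c\},\{a,w\},\{a,b,w\},\{a,c,w\},\{b,c,v\},G\}$ (realizable by a formal context) and sample $(z,a,b,c)$ with $g_\ell=z$; the only extent containing $z$ is $G$, so the hypothesis holds. The ufg-premises inside $\{a,b,c\}$ are $\{a\}\to\{a,w\}$, $\{b,c\}\to\{b,c,v\}$ and $\{a,b,c\}\to G$, and the full sample adds exactly one premise, $\{z\}\to G$. This gives $D^-(a)=C_1+C_3$, $D^-(b)=C_2+C_3$, while $D(a)=C_1+C_3$ and $D(b)=C_1/2+C_2+C_3$: for any weights with $C_1/2<C_2<C_1$ the strict order of $a$ and $b$ reverses, and for $C_1=C_2=C_3$ the tie $D^-(a)=D^-(b)$ becomes a strict inequality, so the claimed equivalence fails either way. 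Your closing suggestion is therefore the right resolution: the statement needs an additional hypothesis ensuring $g_\ell$ generates no ufg-premise at all (for instance that $j=1$ is not active, as in the spatial context where singletons fail (C1), or that only a single cardinality is active); no amount of care can close the gap as the theorem is stated.
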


Finally, we discuss the consistency of the ufg-depth based on an i.i.d.~sample. Let $(P^{(n)})_{n \in \mathbb{N}}$ be a sequence of empirical probability measures based on i.i.d.~samples. We show that the ufg-depth is consistent when the set of all ufg-conclusions has a finite VC-dimension. The VC-dimension of a family of sets $\Ccal \subseteq 2^G$ is the largest number such that there exists a set $\{g_1, \ldots, g_{vc}\} \subseteq G$, $vc \in \Nbb$, with $\left\{C \cap \{g_1, \ldots, g_{vc}\} \mid C \in \mathcal{C}\right\} = 2^{\{g_1, \ldots, g_{vc}\}}$, see~\cite{dudley91}. 
In other words, the VC-dimension of $\Ccal$ denotes the largest possible set that can be still shattered be $\Ccal$. 

\begin{theorem}
Let $\Kbb \in \varkappa, P \in \Pcal$ and $J_{P, \Kbb}\subseteq \Nbb$ be the same as in Definition~4.2. of the main article. Let $X_1, \ldots, X_n\overset{i.i.d.}{\sim} P$. We assume that $\#J_{P, \Kbb}< \infty$. Moreover, we assume that for every $j \in J_{P, \Kbb}$ $\Ical^{concl, j}_{\Kbb,\text{ufg}}$ has finite VC-dimension. 
With this, we get the \underline{consistency property} with $\sup_{g \in G} |D(g, \Kbb, P^{(n)}) - D(g, \Kbb, P)| \to 0$ almost surely for $n$ to infinity. (We assume that this supremum is measurable.)
\end{theorem}

\subsection{Universality Properties}\label{sec: universality properties}

As discussed in Section~\ref{sec:stru_prop:order}, the ufg-depth $D$ is generally not quasiconcave, but one can work instead with the quasiconcave version $D^{qc}$ from Theorem~\ref{th:tildeDqc}. In this section we show that $D^{qc}$ provides a depth function that is quasiconcave and, in a sense, as flexible as possible, e.g.~having only ties that are actually needed for the quasiconcavity. This property is formalized by the \textit{universality properties} introduced in \cite{blocher23b}. 
The idea behind universality w.r.t.~a property $Q$, here quasiconcavity, is to say that a depth function (here, $D^{qc}$) is as flexible as possible if it can have the same orderings of the depth values like that of another arbitrary depth function $E$ with the same property  $Q$, if it is only equipped with an appropriate probability measure $P^*$. If $P^*$ is allowed to be chosen arbitrarily, then we speak about \textit{weak freenness}. If $P^*$ is only allowed to be chosen from a set of probability measures that are arbitrary close to each other, then we speak about \textit{strong freenness}, which is of course a stronger property than weak freenness. 
 For the mathematical details, we refer to \cite{blocher23b} and the supplementary. As it turns out, under some technical assumptions, the ufg-depth is approximately weakly free. Because of some technical subtleties in the assumptions and in the exact formulation of the statement we decided to move the corresponding theorem to the supplementary, see Theorem~2.9 and Remark~2, where also a short discussion about some cases in which the assumptions are fulfilled can be found. The following theorem now gives a concrete situation under which the ufg-depth is also strongly free. This is a main advantage compared to the generalized Tukey depth,\footnote{The generalized Tukey depth is based on \cite{schollmeyer17a,schollmeyer17b} and was firstly formally introduced in \cite{blocher22}. A description of the used basic concepts and an in-depth analysis of the properties of the generalized Tukey depth can be found in \cite{blocher23b}.
 	The generalized Turkey's depth $T$ of an object $g$ w.r.t.~a formal context $\mathbb{K}$ and w.r.t.~a probability measure $P$ is defined as $T(g):= 1- \sup \{P(E) \mid E \mbox{ extent of }\mathbb{K}: g \notin E\} $.} which is not strongly free, as shown in \cite[Theorem~10]{blocher23b}.  
\begin{theorem}
	Let $\Kbb = (G, M,I)$ be a formal context given by hierarchical-nominal data with $L\ge2$ levels, $K\geq 3$ categories on each level, and the scaling method presented in Example~3 of Section~2 of the main article.  We assume that for each object $g\in G$ there exists another object $\tilde{g} \in G$ with $g \neq \tilde{g}$ and $\Psi(\{g\}) = \Psi(\{\tilde{g}\})$.
	
	 We set $C_1,C_2>0$ in the ufg-depth. Then the quasiconcave version \underline{$D^{qc}$ of the ufg-depth} \underline{is strongly free with respect to the property quasiconcavity}. This means that for every $\varepsilon >0$ there exists a family $\mathcal{P}^\varepsilon $ of probability measures with diameter
	 less than or equal to $\varepsilon$ such that for any other arbitrary quasiconcave depth function $E$ and any arbitrary probability measure $P$ there exists a measure $P^* \in \mathcal{P}^\varepsilon$ such that
	$$\forall g,\tilde{g} \in G: E(g,\mathbb{K},P) >E(\tilde{g},\mathbb{K},P) \Longrightarrow D^{qc}(g,\mathbb{K},P^*) >D^{qc}(\tilde{g},\mathbb{K},P^*).$$
\end{theorem}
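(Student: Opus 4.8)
The plan is to make the closure structure completely explicit, read off the (very restricted) shape of the ufg-family, characterise the quasiconcave competitors, and then realise each of their strict orders by $D^{qc}$ using measures that all sit inside one small ball. First I would describe the extents. For the scaling of Example~3 the attributes are all admissible prefixes $x_1\ldots x_k$, so for $A\subseteq G$ the map $\Psi(A)$ collects exactly the common prefixes and $\gamma(A)=\Phi(\Psi(A))$ is the set of all objects in the subtree rooted at the longest common prefix, i.e.\ at the least common ancestor of $A$ in the classification tree. Hence the extents are precisely the subtrees $E_v$ (together with $\emptyset$ and $G$), linearly ordered along every root-to-leaf path, while subtrees below incomparable nodes meet in $\emptyset$. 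Two consequences drive everything. Whenever $|A|\ge 3$, pick $a,b\in A$ lying in different children of $v:=\mathrm{LCA}(A)$; then $\gamma(\{a,b\})=E_v=\gamma(A)$ with $\{a,b\}\subsetneq A$, so Condition (C2) of Definition~\ref{def: ufg family of implication} fails. Thus $\Ical_{G,\text{ufg}}$ contains \emph{only} premises of size $1$ and $2$, which is exactly what the choice $C_1,C_2>0$ activates. Moreover a singleton $\{g\}$ is a ufg-premise iff $\{g\}\subsetneq\gamma(\{g\})$, guaranteed by the standing duplicate assumption, and a pair in distinct leaves is a ufg-premise iff its least-common-ancestor subtree strictly contains the union of the two leaves, guaranteed by $K\ge 3$ together with the populated-tree hypothesis. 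The depth therefore reduces to a transparent two-term expression: the $j=1$ term of $g$ is $C_1$ times the $P$-mass of the leaf of $g$, and the $j=2$ term is $C_2$ times the probability, conditional on a pair lying in distinct leaves, that the least common ancestor of the pair is an ancestor of $g$.

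Next I would characterise the targets. A depth function $E$ is quasiconcave iff all its upper level sets are extents; since the extents form the subtree order, the distinct level sets of $E$ are a nested chain $G=S_0\supsetneq S_1\supsetneq\dots\supsetneq S_m$ of subtrees, i.e.\ a single root-to-node \emph{spine}, and $E(g)$ equals the value attached to the deepest spine node that is an ancestor of $g$. Hence the strict part of any quasiconcave order is a \emph{spine order}: objects are ranked by the depth at which their path leaves the spine, and objects leaving at the same node (in particular duplicates) are tied. The task becomes: for each spine $\pi$ produce a measure $P^*$ for which the level sets $\gamma(\{D\ge\alpha\})$ of the quasiconcave version $D^{qc}$ of Theorem~\ref{th:tildeDqc} are exactly the nested subtrees $S_0\supsetneq\dots\supsetneq S_m$ of $\pi$, with strict gaps, so that $D^{qc}$ reproduces every strict comparison of $E$.

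Then I would construct $P^*$. Because the classification tree here is regular, its automorphism group acts transitively on the leaves, so the uniform base measure $P_0$ makes the two-term formula constant across all objects; thus $D$, and therefore $D^{qc}$, is constant at $P_0$, and the combinatorial structure of $D^{qc}$ is maximally degenerate there. Starting from $P_0$ I would add a multiscale perturbation that pushes mass toward the spine: a larger shift at the shallow spine nodes and geometrically smaller shifts (of order $\delta,\delta^2,\ldots$) deeper down, with the finest shift singling out the deepest spine node through the $C_1$-term. The $C_2$-term, being governed by least-common-ancestor subtrees, separates the coarse tiers $S_i$, while the $C_1$-term separates inside the deepest subtree; a lexicographic choice of magnitudes guarantees that a distinction created at a shallow level dominates every distinction at a deeper level. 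Feeding the induced $D$-order through the closure $\gamma$ then yields exactly the chain $S_0\supsetneq\dots\supsetneq S_m$ as the level sets of $D^{qc}$. Finally, since every such $P^*$ is a perturbation of the single base $P_0$ of total-variation size at most $\varepsilon/2$, the whole family $\mathcal P^\varepsilon$ has diameter at most $\varepsilon$; this is precisely strong freeness, and it is exactly this ``bifurcation at the symmetric base'' that the generalised Tukey depth lacks.

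The step I expect to be the real obstacle is the middle of the construction: showing that an arbitrarily small perturbation still yields \emph{strict} inequalities of $D^{qc}$ in the prescribed direction at \emph{every} tier simultaneously, and that the quasiconcave closure does not accidentally merge two tiers that must stay separate. This is where the lexicographic spacing of the perturbation magnitudes and the interplay of the $C_1$- and $C_2$-terms must be controlled quantitatively; concretely, one has to verify that the first-order effect of the perturbation on each relevant least-common-ancestor probability is nonzero with the sign dictated by the spine, and that $K\ge 3$ leaves enough ufg-pairs at every internal spine node for the $C_2$-term to be sensitive there (the role of $L\ge 2$ being to make at least two tiers, so that both terms are genuinely needed). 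The remaining bookkeeping---measurability, the reduction to finitely many spine orders when $G$ is finite and the approximate version otherwise, and the fact that ties of $D^{qc}$ occur only where the spine order of $E$ already ties---is routine given the explicit two-term formula.
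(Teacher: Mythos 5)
Your structural setup matches the paper's: you correctly identify the extents as the subtrees of the classification tree, show that ufg-premises have cardinality one (via the duplicate assumption) or two (pairs in distinct leaves, where $K\ge 3$ ensures union-freeness), characterize quasiconcave depth functions as ``spine'' orders given by nested chains of subtrees, and take $\mathcal{P}^\varepsilon$ to be small perturbations of the uniform measure. All of this agrees with the paper's Lemma on hierarchical-nominal ufg-premises and with the setup of its proof of the theorem.

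The gap sits exactly where you flag ``the real obstacle'': the construction of $P^*$ and the verification that $D$ (hence $D^{qc}$) strictly reproduces the spine order are never carried out, and the multiscale, lexicographically spaced perturbation you propose makes that verification genuinely delicate, since you would have to control the interaction of the $C_1$- and $C_2$-terms across all scales simultaneously. The paper shows this machinery is unnecessary: it perturbs the uniform measure at a \emph{single} cell, namely one finest-level category $G^*$ chosen inside the top contour set of $E$, giving mass $\frac{1/N+\varepsilon/(2N)}{\# G^*}$ to its objects and mass $\frac{1/N-\varepsilon/(2N\cdot(N-1))}{\# \gamma(\{g\})}$ to all others (with $N=K^L$ the number of leaves). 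The $C_1$-term then favors only $G^*$, but the $C_2$-term automatically produces a strictly increasing gradient along the \emph{entire} spine: by the regularity of the tree, the number of ufg-pairs of leaf cells whose closure contains $g$ is the same for every $g$, while the number of such pairs involving $G^*$ is strictly increasing in the length of $g$'s common prefix with $G^*$; since exactly those pairs carry the extra mass, the depth increases strictly at every tier. This one-bump construction reduces the verification to an elementary counting-plus-symmetry argument and immediately yields $D(\cdot,\Kbb,P^*)=D^{qc}(\cdot,\Kbb,P^*)$, because the resulting contour sets are themselves extents. To complete your proof you would either have to supply the postponed lexicographic estimates, or replace your construction by this single-cell perturbation.
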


\section{Examples}\label{sec:examples}

In this section we provide two application examples of the ufg-depth. First, we analyze the \textsc{Gorillas} data, see Example~\ref{ex:spatial vegetation} and second, the occupational data from the German General Social Survey (GGSS), see Example~\ref{ex: einfuehrung hierarchisch nominal}.\footnote{Both analysis can be found on  GitHub: \url{https://anonymous.4open.science/r/ufg_depth_application-0567/}. (last accessed: 14.12.2024)} We want to emphasize that analyzing the ufg-depth for a concrete data type of interest can lead to a simplified definition of the ufg-depth and, in particular, improve the computation time by exploiting the further data structure. This has been done in~\cite{blocher23}, where the authors defined, analyzed and applied the ufg-depth on the special case of partial orders as ground space. 

%
%

%
\subsection{Mixed Categorical, Numeric and Spatial Data}\label{sec:appl_mixed}
Recall Example~\ref{ex:spatial vegetation}, where we used a snippet of the gorilla nesting sites data to motivate our approach. Now, we want to extend this example by adding a further covariate (elevation) and considering a larger sample. Besides, we outline how the situation of the ground space in Section~\ref{ex:spatial vegetation} was simplified.

The data set is stored in the R--package \textsc{Gorillas} and \textsc{Gorillas.Extra} and both are provided by the R-packages \textsc{spatstat}, see~\cite{baddeley05}. The observations are a point pattern where each point represents one nesting site of the gorilla population at the Kagwene Gorilla Sanctuary in Cameroon. The nesting sites where observed from 2007 and 2009 and it consists of 647 observations. For more details we refer to \cite{funwigabga12}. 
In the following, we analyze the sample of the gorilla nesting sites observed in 2006. In total we have 121 points which are plotted in Figure~\ref{fig:ppp_2006_veg} (left). Besides the spatial observation, we include the vegetation and elevation component, see Figure~\ref{fig:ppp_2006_veg} (right) and Figure~\ref{fig:ppp_elev_depth} (left). Mainly \textit{primary} (76 points) and \textit{disturbed} (24 points) vegetation category are associated to the observed points. The corresponding elevation values range from 1340 to 2053. 

\begin{figure}
  \begin{minipage}[b]{.45\linewidth}
    \centering
   \includegraphics[scale=0.4]{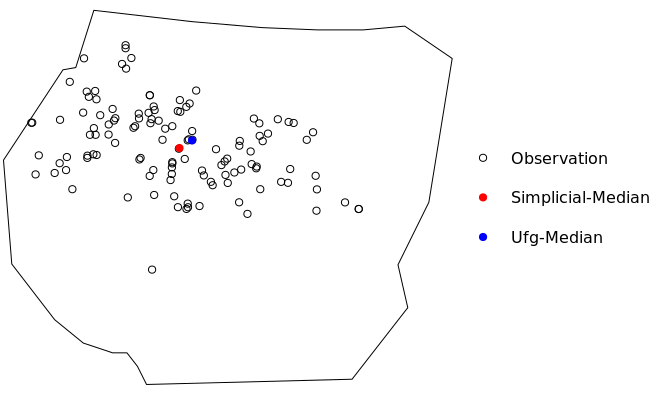} 
  \end{minipage}\hfill
  \begin{minipage}[b]{.45\linewidth}
    \centering
   \includegraphics[scale=0.7]{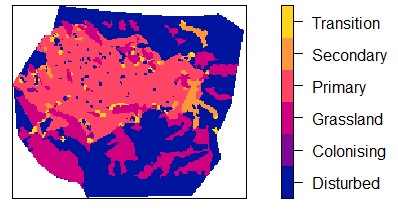} 
  \end{minipage}
   \caption{The gorilla nesting sites (left) and vegetation (right) of the Kagwene National Park (Cameroon) for the year 2006.}
   \label{fig:ppp_2006_veg}
\end{figure}	


In the illustration in Example~\ref{ex:spatial vegetation}, we simplified the data situation for the sake of accessibility. There, the underlying ground space was assumed to be $\mathbb{R}^d \times V$ with $V= \{transition, secondary, primary, grassland, colonising, disturbed\}.$ However, since we are only interested in the nesting sites of the gorillas within the Kagwene Gorilla Sanctuary in Cameroon, we now reduce the spatial set to $K \subseteq \mathbb{R}^2$ which represents the area of the national park. Moreover, at the same location, the ground space in Section~\ref{sec:illustration} assumed that two different vegetation categories are possible. This assumption does not hold as the vegetation is a fixed covariate and unique to the location part. Hence, the ground space should be $\left\{(x,v) \mid x \in K \text{ with unique corresponding vegetation } v \in V\right\} \subsetneq \Rbb^2 \times V$. Finally, we extent our analysis and add the elevation as a further observation value. With this, we get as underlying ground space
\begin{align*}
	G = \left\{(x,v,e) \big| \begin{array}{l}
	 x \in K, x \text{ with unique corresponding vegetation } v \in V \text{ and elevation } e \in \mathbb{R}
\end{array}	\right\}.
\end{align*}
The formal context now results from scaling the spatial component as in Example~\ref{ex:only spatial}, the categorical variable using nominal scaling, see Example~\ref{ex:spatial vegetation}, and for the ordinal component we use \textit{interordinal scaling}, see~\citep[p. 42]{ganter12}. The interordinal scaling equals the spatial scaling and since half-spaces in $\mathbb{R}^1$ are one-side unbounded intervals, we get as attributes \say{$\le x$} and \say{$\ge x$} for all $x \in \mathbb{R}$. Despite the changes to the ground space, the extents and implications are similar to those described in Example~\ref{ex:spatial vegetation}. Using the notation introduced in Section~\ref{sec:illustration}, we obtain as the extent set
\begin{align}\label{def:extents gorilla all covariates}
	\left\{C \times \tilde{V} \times [a,b] \mid \begin{array}{l}  C\subseteq \mathbb{R}^2 \text{ closed convex set, }
	 \tilde{V} \in \binom{V}{1}\cup V, a \le b	
\end{array}	\right\}.
\end{align} 
The set of implications are all statements $A \to B$ with $A \subseteq G$ and $ B \subseteq \gamma_{\mathbb{R}^2}\circ \pi_{\mathbb{R}^2}(A) \times \tilde{V}\times [\min\{\pi_{\mathbb{R}}(A)\},\max\{\pi_{\mathbb{R}}(A)\}]$ with $\tilde{V}= \pi_V(A)$ if $\# \pi_V(A) = 1$ and $\tilde{V} = V$ else.
So if an implication $A \to B$ is true. Then all elements in $B$ must be inside the smallest convex hull containing the spatial part of $A$. Furthermore, all elements must lie between the minimum and maximum value of the elevation component in $A$, and finally, if $A$ consists of only one vegetation class, then all elements in $B$ are of the same category.

The next step is to consider the calculation of ufg-implications $\Ical_{\text{ufg}}$. Therefore, we first utilize that the formal context can be divided into three formal contexts: the spatial, the elevation and the vegetation part. 


\begin{lemma}
	For the formal context $\Kbb_G$ with extent set given by (\ref{def:extents gorilla all covariates}), we have for the ufg-family of implications 
	\begin{align*}
		\Ical_{ufg} \subseteq \left\{A \to B \biggl| \begin{array}{l} A \subseteq \Rbb^2 \times V \times \Rbb \text{ and } 2 \le \# A \le 4, \\
		\pi_{\Rbb}(B) = [\min\{\pi_{\mathbb{R}}(A)\},\max\{\pi_{\mathbb{R}}(A)\}], \:  \pi_{\Rbb^2}(B) = \gamma_{\Rbb^2}\circ\pi_{\Rbb}(A), \\
		\pi_{V}(B) \in \binom{V}{1}\cup V:  \pi_{V}(B) =  \pi_{V}(A) \text{ if } \#  \pi_{V}(A) = 1, \pi_{V}(B) = V \text{ else}
		 \end{array} \right\}.
	\end{align*}
\end{lemma}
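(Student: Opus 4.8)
The plan is to exploit that the context factors as a product of the spatial part ($\Rbb^2$, closure $=$ convex hull), the vegetation part (nominal on $V$, closure $=$ the generated singleton or all of $V$) and the elevation part ($\Rbb$, interordinal, closure $=$ the smallest enclosing closed interval). First I would record that, since every extent has the product form (\ref{def:extents gorilla all covariates}) and the closure is the intersection of all extents containing $A$, the closure operator itself factors as $\gamma_G(A)=\gamma_{\Rbb^2}(\pi_{\Rbb^2}A)\times \tilde V\times[\min\pi_{\Rbb}A,\max\pi_{\Rbb}A]$, where $\tilde V=\pi_V(A)$ if $\#\pi_V(A)=1$ and $\tilde V=V$ otherwise. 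Because the ufg-conclusion equals $\gamma_G(A)$ by Definition~\ref{def: ufg family of implication}, this already delivers every conclusion constraint in the statement, and the lower bound $\#A\ge 2$ follows at once, since a singleton is closed ($\gamma_G(\{g\})=\{g\}$) and hence violates (C1).

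The real content is the upper bound $\#A\le 4$. I would first rewrite (C2): by monotonicity of $\gamma_G$, any $\gamma_G(A')$ with $A'\subsetneq A$ is contained in some $\gamma_G(A\setminus\{a_i\})$, so the largest union of closures of proper subsets is $\bigcup_i\gamma_G(A\setminus\{a_i\})$. Thus $A$ satisfies (C2) iff there is a witness $p=(x^*,v^*,e^*)\in\gamma_G(A)$ with $p\notin\gamma_G(A\setminus\{a_i\})$ for every $i$. By the product form, $p\notin\gamma_G(A\setminus\{a_i\})$ means that at least one coordinate of $p$ \say{escapes}, and I would introduce the three escape sets $T_S,T_E,T_V\subseteq\{1,\dots,\#A\}$ of indices escaped by the spatial, elevation and vegetation coordinate; then (C2) is exactly $T_S\cup T_E\cup T_V=\{1,\dots,\#A\}$.

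Next I would bound each escape set. For elevation, $[\min_{j\ne i},\max_{j\ne i}]$ is strictly smaller than $[\min,\max]$ only at a unique argmin or a unique argmax, and once $\#A\ge 3$ a single value $e^*$ can lie outside at most one of these two shortened intervals; hence $\#T_E\le 1$. An analogous argument for the nominal vegetation part (removing one point drops the closure below $V$ for at most one index when $\#A\ge 3$) gives $\#T_V\le 1$. For the spatial part, Carath\'eodory's theorem in $\Rbb^2$ shows $x^*$ lies in the convex hull of at most three of the points, so only those can be essential and $\#T_S\le 3$; the crucial refinement is that as soon as there are at least four distinct spatial points one has $\#T_S\le 2$. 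I expect this refinement to be the main obstacle, and I would prove it by a Radon/tiling argument: if three points were spatially essential for $x^*$, then for any fourth spatial point $x_d$ I distinguish whether $x_d$ lies inside or outside the triangle of the three essential points; in the first case the three sub-triangles tile that triangle, in the second the four points lie in convex position and $x^*$ falls on a fixed side of a diagonal, and in both cases $x^*$ is forced into the convex hull of the points remaining after some single removal, contradicting essentiality.

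Finally I would combine the three bounds. Assume $\#A\ge 5$. Then $\#T_S\ge\#A-\#T_E-\#T_V\ge 3$, forcing $\#T_S=3$; but $\#T_S=3$ requires exactly three distinct spatial locations, each attained once and essential, while the remaining $\ge 2$ points either sit at a new location (giving $\ge 4$ distinct spatial points, so $\#T_S\le 2$ by the refinement) or duplicate one of the three essential locations (destroying its essentiality, since removing it leaves the duplicate). Either way $\#T_S\le 2$, a contradiction, so $\#A\le 4$. I would close by noting that degenerate configurations (collinear spatial points, a single vegetation class, or coinciding elevations) only shrink the admissible premises further and hence remain within $2\le\#A\le 4$, and that a direct construction (a spatial triangle escaping three indices, with the fourth point carrying a unique extreme elevation or the odd vegetation value) shows $\#A=4$ is attained, so the bound is tight.
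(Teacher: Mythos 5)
Your argument is correct, but it takes a genuinely different route from the paper's. The paper first bounds the premise cardinality by $7$ via a lemma on joined formal contexts (the spatial, elevation and vegetation contexts have individual ufg-premise bounds $3$, $2$, $2$), and then rules out cardinalities $5$, $6$, $7$ one by one with a covering argument: for $\#A = 5$ and every $g \in \gamma(A)$ it constructs, using Carath\'eodory plus a case distinction on whether the elevation of $g$ lies in the elevation range of the chosen spatial triple, a subset $A_g \subsetneq A$ with $\#A_g = 4$ and $g \in \gamma(A_g)$, so that $\bigcup_{g} \gamma(A_g) = \gamma(A)$ contradicts (C2). You instead work with the witness form of (C2) (which is exactly the paper's supplementary lemma characterizing ufg-premises), split the failure of membership coordinatewise into escape sets $T_S, T_E, T_V$, and prove per-coordinate bounds: $\#T_E \le 1$ and $\#T_V \le 1$ for $\#A \ge 3$, $\#T_S \le 3$ by Carath\'eodory, refined to $\#T_S \le 2$ as soon as four distinct spatial locations exist. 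This buys a single uniform counting contradiction for every $\#A \ge 5$ -- no preliminary bound of $7$, no separate treatment of cardinalities $5$, $6$, $7$ -- and it isolates the planar geometry in one clean statement (your refinement), where the paper's corresponding step is the rather terse \say{because of the geometry in $\Rbb^2$} passage. One point you should make explicit when writing the refinement out: if $x_d \notin \gamma_{\Rbb^2}(\{x_1,x_2,x_3\})$, the four points need \emph{not} be in convex position; it can happen that some $x_i$ lies in the convex hull of the other two together with $x_d$, but then $\gamma_{\Rbb^2}(\{x_1,x_2,x_3\})$ is contained in that hull and the escape at index $i$ fails immediately, so this sub-case is harmless.

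A separate, minor inconsistency: your closing tightness construction (\say{a spatial triangle escaping three indices, with the fourth point carrying a unique extreme elevation or the odd vegetation value}) contradicts your own refinement -- with $\#A = 4$, a fourth point at a new location forces $\#T_S \le 2$, and a spatial duplicate destroys the essentiality of the duplicated vertex, so three spatial escapes are impossible. A correct tight example uses the escape pattern $2+1+1$, e.g.~two spatially essential points with the witness outside every hull omitting one of them, a third point carrying the unique extreme elevation, and a fourth carrying the only deviant vegetation category. Since the lemma only asserts an inclusion, this side remark does not affect the validity of your proof of the statement itself.
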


\begin{figure}
  \begin{minipage}[b]{.45\linewidth}
    \centering
   \includegraphics[scale=0.7]{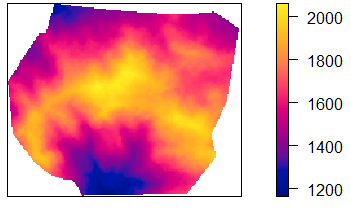} 
  \end{minipage}\hfill
  \begin{minipage}[b]{.45\linewidth}
    \centering
   \includegraphics[scale=0.85]{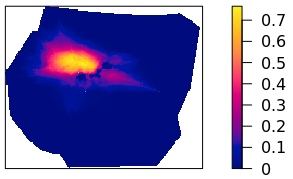} 
  \end{minipage}
   \caption{The elevation (left) of the Kagwene National Park (Cameroon) and the computed ufg-depth (right).}
   \label{fig:ppp_elev_depth}
\end{figure}



Figure~\ref{fig:ppp_elev_depth} (right) shows the calculated ufg depth for the observed point pattern in 2006. The ufg-median is unique and has a depth of $0.765$.\footnote{All values are rounded to three decimal places.} It is an observed point and lies within the primary vegetation category at an elevation of 1805. The ufg-median is thus in the most frequently observed vegetation category and is also relatively close to the median of the numerical elevation component ($52.1\%$ observed elevation values are strictly below 1805). The spatial component is also relatively close to the median of the median computed by the simplicial depth, see Figure~\ref{fig:ppp_2006_veg} (left), where only the spatial part is considered. Note that the observation with the highest simplicial depth has an elevation value of 1900 and is therefore further away from the center from the perspective of the elevation component.
The minimum depth value is zero. In particular, the ufg-depth is always zero when the elevation corresponding to a location is strictly below (or above) the minimum (or maximum) of the observed elevation values. This is the reason why the area in the center of the image, from a purely spatial perspective, has a low or even zero ufg-depth value. It can also be seen that the ufg-depth reflects that the vegetation categories colonization (1 point), grassland (7 points), secondary (5 points) and transitional (6 points) are not often observed. 
\subsection{Hierarchical-Nominal Data}\label{sec:appl_hierarchical}

As a further example, we analyze the ufg-depth for occupational data as described in Example~\ref{ex: einfuehrung hierarchisch nominal}. We use data from the German General Social Survey (GGSS) of the year 2021, see \cite{ZA5280}. Additionally, we also compare the ufg-depth to three other \textit{measures of central tendency}. Namely one approach that only uses the categories on the finest level, secondly, a \textit{top down} approach that analyses the frequencies of occupations successively, going from coarser levels to finer levels, and thirdly, the median based on the generalized Tukey depth. This comparison aims to illustrate the fact that the ufg-depth approach is different from these other approaches in a substantial way which is to some extent surprising given the meager structure of hierarchical-nominally scaled data.\footnote{For example, for hierarchical-nominal data, the formal extents are either nested or they have an empty intersection.}

For the specification of the hierarchical categories of occupation we use the International Standard Classification of Occupations (ISCO) 2008. It consists of occupational categories on $4$ levels with up to $10$ categories on each level, coded by digits $0-9$. We analyze the set of all $2700$ respondents for which the ISCO-08 status is a available. The sample is not drawn i.i.d., the respondents in east Germany were over-sampled. We account for this by simply reweighting the obtained empirical measure accordingly. Figure~\ref{fig:hierarchical} (left and right) depicts the distribution of the occupations by drawing histograms on all $4$ levels of the hierarchy of occupations.\footnote{Level~1: black; Level~2: blue; Level~3: yellow; Level~4: pink. Left: Level-1 Category 3 to Level-1 Category 4. Right: Zoom into Level-1 Category 3 (from Level-2 Category~$32$ to Level-2 Category~$33$). The height of the bars corresponds to the absolute observed frequencies (counts) within the corresponding categories on a log scale.} 
While the left figure goes from Level-1 Category~3 to Level-1 Category~4, the right picture zooms into Level-1 Category~3 and goes from Level-2 Category~$32$ to Level-2 Category~$33$. The vertical lines indicate different further measures of location (orange: occupation with the highest ufg-depth; green: category with highest frequency on the finest level; purple: median according to the top down approach (see below)). 

\begin{figure}
	\begin{minipage}[b]{.99\linewidth}
		\includegraphics[width=0.49\textwidth]{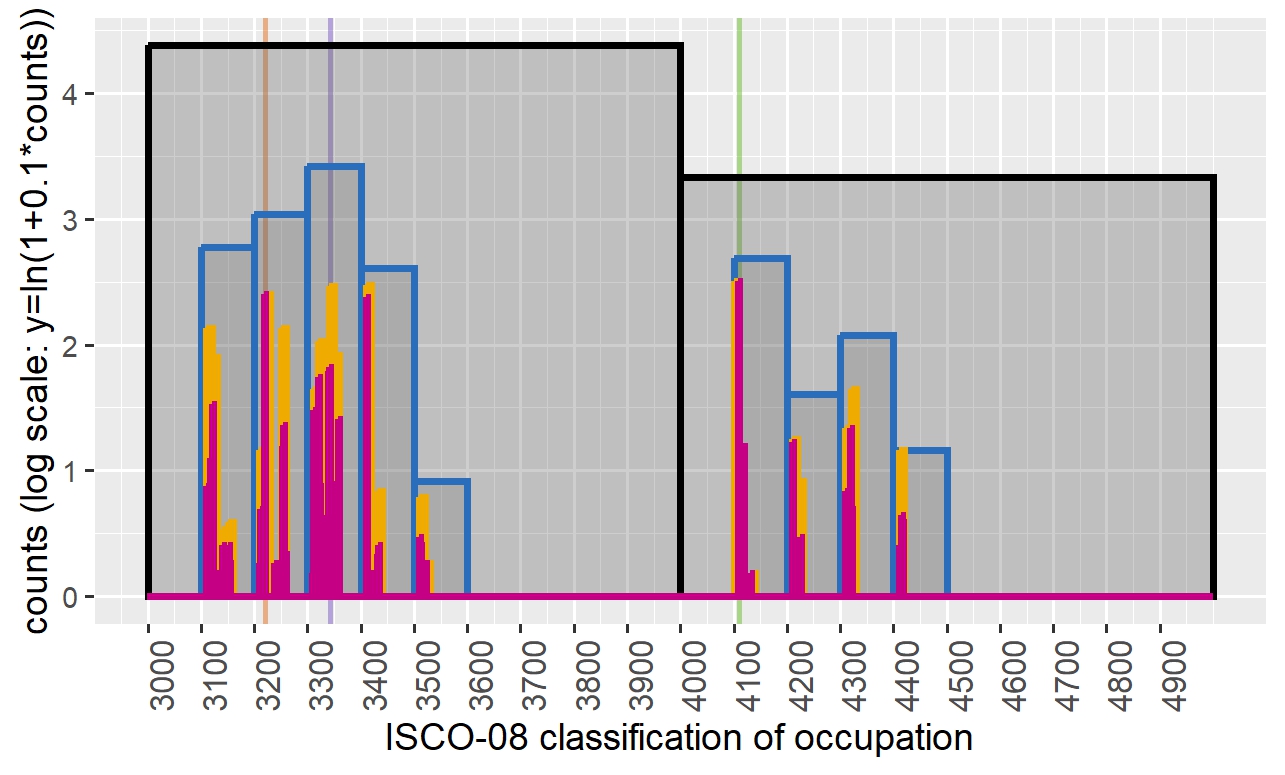} 
		\includegraphics[width=0.49\textwidth]{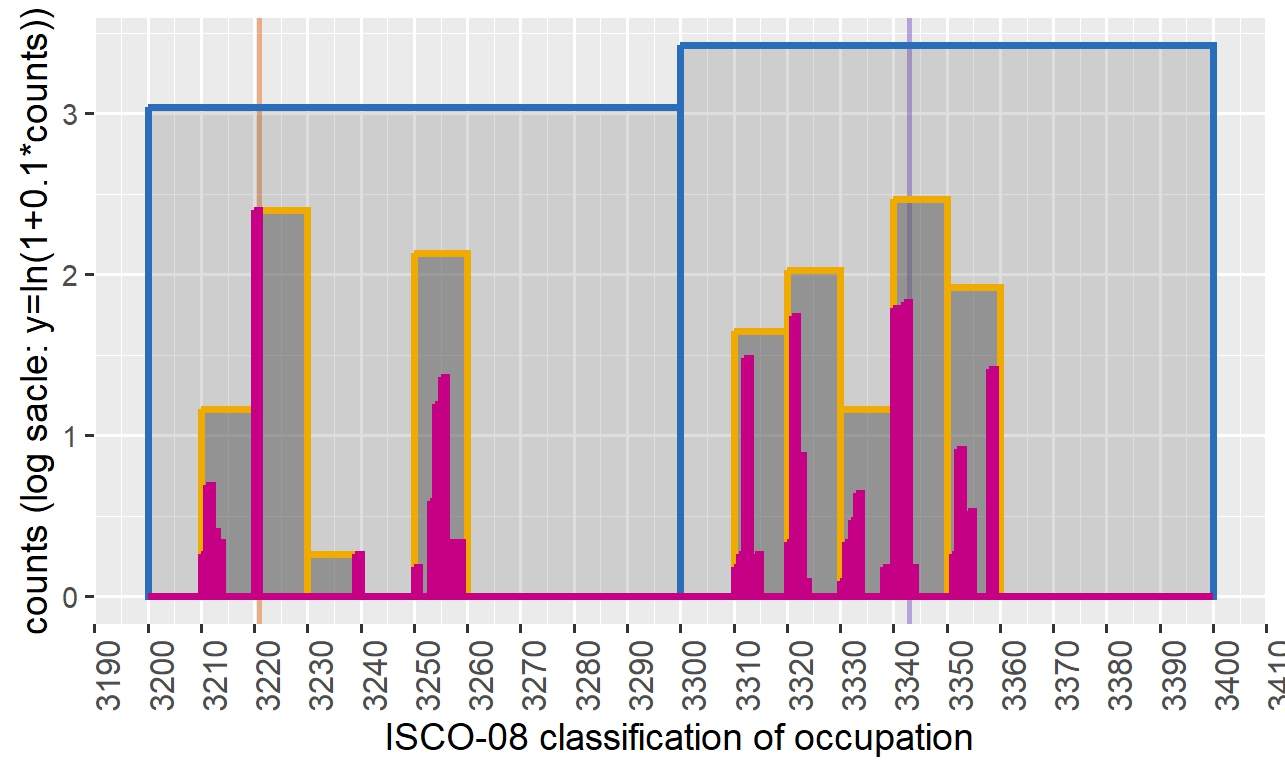} 
	\end{minipage}
	\caption{Histogram of the occupations on all 4 levels.} 
	\label{fig:hierarchical}
\end{figure}

The \textit{ufg-median}, i.e., the occupation with the highest ufg-depth, is occupation 3221: \textit{Nursing Associate Professionals} (indicated with the orange vertical line in the plots). The depth for these persons was $0.927$. The smallest ufg-depth value has occupation 6210: \textit{Forestry and related workers} with a depth value of $0.824$. 
The ufg-depth has all-together $285$ unique depth values which induce $285$ contour sets. These contour sets $Cont_{D, \alpha}$ induce three different attribute sets $\Psi(Cont_{D, \alpha})$ containing all attributes each object in the contour set has, namely these sets are $B_1 = $\{Level-4 Category 3221: \textit{Nursing associate professionals}; plus all corresponding categories on the coarser Levels 1-3\}; $B_2 = $\{Level-1 Category 3: \textit{Technicians and associate professionals}\} and the $B_3 = \emptyset$. These three attribute sets induce the corresponding extents $\Phi(\Psi(Cont_{D,\alpha})) = \gamma(Con_{D,\alpha})$ that are (due to construction) exactly the contour sets of $D^{qc}$. Note that generally, for hierarchical-nominal data with $L$ levels, a quasiconcave depth function, and in particular $D^{qc}$, can only have up to $L+1$ different depth values (c.f.~the proof of Theorem~2.11 in the supplementary). 
Here, with $3$ levels, the quasiconcave version $D^{qc}$ is more flexible compared to e.g., the generalized Tukey depth in this data situation.\footnote{The generalized Tukey depth function has only two different depth values, which is typical, compare the discussion and the proof of the non-freenness of the generalized Tukey depth in \cite{blocher23}.} 

We now compare the ufg-depth with other measures of central tendency for occupational data. First, note that the ufg-median here differs from the modus, i.e.~the occupation (at the finest level) with the highest frequency. The modus is occupation 4110:~\textit{General office clerks} (indicated by the green vertical line). The modus only considers the categories at the most detailed level (Level~4). The structure at the more general Levels 1-3 is not taken into account at all. Opposed to this, the ufg-depth does take the other levels into account: First, note that the ufg-premises are exactly the one-element sets and the sets of two objects with different occupational ISCO-08 categories, see Theorem 3.10.~in the supplementary. If we only consider the one-element ufg-premises, we end up with the frequencies at the finest levels. However, the ufg-depth approach also uses the two-element ufg-premises. Let $p=\{x,y\}$ with $x=x_1x_2\ldots x_{\ell-1} x_{\ell}\ldots x_k$ and $y= x_1 x_2\ldots x_{\ell-1} y_{\ell} \ldots y_k$ be a two-element ufg-premise, where the first occurring classification difference is $x_{\ell}$ and $y_{\ell}$. Then $p$ contributes to the depth values of each object in $\gamma(p)$, which consists of all objects that share the first $\ell -1$ category assignments with $x$ and $y$, i.e.~are also categorized in the group $x_1x_2\ldots x_{\ell-1}$. For each object that can be distinguished from $x$ and $y$ based on the first $\ell -1$ categories, the ufg-premise $p$ does not contribute to the ufg-depth of this object. Therefore, generally, the two-element ufg-premises contribute to the ufg-depth on all levels of the hierarchy. Let us now compare with a further construction of a median, which could be called the \textit{top down} approach. A simple way of ordering the hierarchical categories is to look first at Level~$1$ and take the modal category at that level, i.e.~the category with the highest frequency of occurrence (here Level~1 Category~3:~\textit{Technical and associate professionals}). Then, within this modal category, one could look at the subcategories at Level~2 and again take the modal subcategory at level $2$ (here, Level~2 Category~33:~\textit{Business and administration associate professionals}), and so on. In our data set, this approach gives us the median occupation 3343:~\textit{Administrative and executive secretaries} (indicated by the purple vertical line). Like the ufg-depth approach, the top down approach uses all levels of the hierarchical structure. However, the mode at Level~1 predetermines the Level-1 category of the final median, and unlike the ufg-approach, if a data point does not fall into the modal Level~1 category, it can never become the median, even if -- due to high frequencies -- it is a clear median candidate from the perspective of all other levels. This property of predetermination of the Level~1 mode is also shared by the median according to the \textit{generalized Tukey depth} (c.f., the proof of the failure of strong freenness of the generalized Tukey depth given in \cite{blocher23b}, Theorem~10). In addition, the generalized Tukey depth has only two different depth values for this data set. Specifically, all occupations with Level-1 Category~$3$ have a generalized Tukey depth of $ 0.747$, and all other occupations have a generalized Tukey depth of $0.0.710$.






\section{Conclusion}
Providing statistical methods that take into account the underlying data structure is essential in statistics. The ufg-depth introduced here is a non-parametric and user-friendly method that uses the theory of formal concept analysis and data depth to define a statistical method for non-standard data. While this article presented and analyzed the ufg-depth and provided two descriptive examples showing the benefits of the ufg-depth, it also raised further research questions:\\
\textbf{Statistical Inference:} With the exception of the consistency property, the analysis of the ufg-depth and the examples focus on descriptive analysis. However, building on the consistency property, a further research question is how to define statistical inference tests. These tests can build on approaches provided by ~\cite{li04} in $\Rbb^d$. \\
\textbf{Deeper analysis of the quasiconcave version of a depth function:} In Section~\ref{sec:stru_prop:order} we briefly touched on the topic of the quasiconcave version of a depth function and in Section~\ref{sec:appl_hierarchical} we showed that this does indeed provide a meaningful and non-trivial depth function. It is of interest to explore this topic in more detail, in particular with a closer look at $\Rbb^d$ and the large variety of depth functions already defined for $\Rbb^d$. \\
\textbf{Other data sets:} We applied the ufg-depth to two data types, the categorical-numerical-spatial data and the hierarchical-nominal data. In~\cite{blocher23} the authors applied the ufg-depth to partial orders. These three data types are by no means all possible non-standard data. The investigation of further data types, and in particular of scaling methods that transform the data into a formal context, is a further interesting research area.\\
\textbf{Further generalizations of depth functions in $\Rbb^d$:} So far the Tukey depth, see~\cite{tukey75}, and the simplicial depth, see~\cite{liu90} are generalized to non-standard data. Also a discussion on the convex-hull-peeling depth, see~\cite{blocher22}, has been started. Similarly, this can be done with many other depth functions, such as the projection depth, see, e.g.,~\cite{serfling00}.

\newpage

\bigskip
\begin{center}
{\large\bf SUPPLEMENTARY MATERIAL }
\end{center}

In the following, we provide the supplementary material and information to the main article \textit{Union-Free Generic Depth for Non-Standard Data}. This includes a short introduction to formal concept analysis, a further discussion on the quasiconcave ufg-depth and all the proofs of the claims made in the main article. Unless otherwise stated, all references to equations, lemmas, etc. are to the supplementary material. 

The repository corresponding to the main article can be found at \url{https://anonymous.4open.science/r/ufg_depth_application-0567/}(last accessed: 14.12.2024). There we also provide all the information about the reproducibility of the results in Section~6 of the main article.

\section{Formal Concept Analysis}
Formal concept analysis can be seen as applied lattice theory, which describes the relationship between data elements in a user-friendly and unified way. It is based on the formalization of a cross-table, see \cite[p. 17]{ganter12}:
\begin{definition}
	The triple $\mathbb{K} = (G,M,I)$ defines a \textit{formal context} with $G$ s set of \textit{objects} and $M$ a set of \textit{attributes}. $I \subseteq G \times M$ states a binary relation between $G$ and $M$.
\end{definition}

In our case, the objects $G$ correspond to the data described by the attributes $M$. Note that an object/data element can either have this attribute or not. While in some cases binary attributes, such as yes or no responses to a yes-no question, are naturally given, this is generally not the case. Therefore, we use so-called \textit{scaling methods}, see~\citep[Chapter 1.3.]{ganter12}. These methods convert non-binary information about the data into attributes with a binary incidence relation. Examples can be found in \cite{ganter12, blocher23b} and in the examples below. With the scaling method, we achieve that all types of data are presented through a formal context in a unified way. 

\begin{example}\label{exampl: spatial_context}
Recall Example~2 in the main article with $G = \mathbb{R}^2 \times V$ as ground space and attributes $M_{\mathbb{R}^2 \times V} = M_{\mathbb{R}^2} \cup M_V$. $I_{\mathbb{R}^2 \times V}$ now describes the incidence of both the spatial and the categorical component, where we say that the categorical attribute holds if the data element has that category. A snippet of this formal context is the joint (by the objects) tables of Figure~2 in the main article.
\end{example}

Especially this formalization of a cross-table is the basis to rigorously define the grouping procedure. Therefore, consider the following \textit{derivation operators}, see~\citep[p. 18]{ganter12}: 
\begin{align*}
&\Psi: 2^G \to 2^M, A \to A' := \{m \in M \mid \forall g \in A \colon gIm\} \qquad \text{and}\\ 
&\Phi: 2^M \to 2^G, B \to B' := \{g \in G \mid \forall m \in B \colon gIm\}.
\end{align*}
$\Psi$ maps a set of objects $A$ to each attribute that each object in $A$ has. $\Phi$ does the same, only with the roles of attribute set and object set reversed.
In particular, the composition $\gamma_G := \Phi \circ \Psi$ now groups the objects based on the attributes in a maximal way. More precisely, the set $\gamma_G(A)$ composes all objects that share the same attributes given by $\Psi(A)$. 
\begin{definition}
We call $\gamma_G(A)$ with $A \subseteq G$ an extent and $\Psi(A)$ an intent of $\mathbb{K}$.  Additionally, we denote the set of all extents by $\mathcal{E}_G$.
\end{definition}

The set of extents can be partially ordered using the subset relation. 
If $\gamma_G(A) \subseteq \gamma_G(B)$ for $A, B \subseteq G$, we can conclude that the objects in $\gamma_G(A)$ are more specific than those in $\gamma_G(B)$.  This means that the attributes common to all objects in $B$ are a subset of the attributes common to all objects in $A$, see \citep[Chapter 1.]{ganter12} for details. By examining this order on the extents, we can determine whether the relationship between the elements/objects is reasonable or not. We also get an idea of how fine the grouping is, i.e. if we are close to the power set. 

In addition, the set of extents defines a closure system on $G$ with the corresponding closure operator $\gamma_G$, which builds the bridge to lattice theory, see \citep[Chapter 0]{ganter12}.

\begin{definition}
	Let $G$ be a set. Then $\gamma_G: 2^G \to 2^G$ is a \textit{closure operator} on $G$ if and only if $\gamma_G$ is \textit{extensive} (for all $A \subseteq G\colon A \subseteq \gamma_G(A)$), \textit{monotone} (for all $A \subseteq B \subseteq G$ we have $\gamma_G(A) \subseteq \gamma_G(B)$) and \textit{idempotent} (for all $A\subseteq G, \: \gamma_G(A) = \gamma_G(\gamma_G(A))$).
	
	$\gamma_G(2^G)$ induces the corresponding \textit{closure system}. Closure systems are families of sets which are closed under arbitrary intersections (let $(A_j)_{j \in J} \subseteq \gamma_G(2^G)$ then $\cap_{j \in J} A_j \in \gamma_G(2^G)$) and contain the entire set $G \in \gamma_G(2^G)$.\footnote{In the following, we use both the terms \say{extent set} and \say{closure system}, depending on whether we want to emphasize that it is based on a formal context or that we exploit the mathematical structure.}
\end{definition}

Note that there exists a one-to-one correspondence between closure operators and closure systems/extents $\mathcal{E}_G$, see \citep[p. 8]{ganter12}.

\begin{example}\label{exampl: spatial_closure}
	Consider the formal context $\mathbb{K}_{\Rbb^2 \times V}$ defined in Example~\ref{exampl: spatial_context}. Then we get as set of extents $\mathcal{E}_{\Rbb^2 \times V} = \left\{C \times \tilde{V} \mid \begin{array}{l} C \text{ topologically closed convex set}\: \land \:  \tilde{V} \in \binom{V}{1} \cup V \end{array}\right\}$. Note that due to nominal scaling, $\tilde{V}$ either has cardinality one or is directly the entire set. This follows from the fact that if two different categories are grouped together, then the relation between these two categories is the same as to any other category, so all other categories are also included in order not to state a relation between these two categories that does not exist.
\end{example}

As we saw in Example~\ref{exampl: spatial_closure}, the closure system/extent set contains the structure of the data and describes the dependencies between data elements. This becomes even clearer when we exploit the fact that every closure system can be described by a family of implications. In the context of the closure operator $\gamma_G$, we define implications as follows, see~\citep[Chapter 2.3]{ganter12}:\footnote{Note that in \cite{ganter12} the authors discuss attribute implications. The results can be applied to object implications discussed here.}

\begin{definition}
Let $G$ be a set. An \textit{implication} is a tuple $(A_1, A_2) \in G \times G$. We say that $A_1$ implies $A_2$ and denote this by $A_1 \to A_2$. We call $A_1$ the \textit{premise} and $A_2$ the \textit{conclusion} of the implication $A_1 \to A_2$. 

Let $\gamma_G$ be a closure operator on $G$ with a corresponding closure system $\mathcal{E}_G$. Then, the closure system defines a family of implications consisting of statements $A_1 \to A_2$ with $\gamma_G(A_1) \supseteq \gamma_G(A_2)$. The family of all implications provided by $\mathcal{E}_G$ is denoted by $\Ical_G$.
For a given closure system $\Ecal_G$ we say that an \textit{implication $A_1 \to A_2$ holds} if and only if $\gamma_G(A_1) \supseteq \gamma_G(A_2)$. 
\end{definition}

 
\begin{example}\label{exampl: spatial_implication}
	Recall Example~\ref{exampl: spatial_context} and~\ref{exampl: spatial_closure}.
	For $\Kbb_{\Rbb^2 \times V}$ we have as family of all implications
	\begin{align*}
		\Ical_{\Rbb^2 \times V} = \left\{A \to B \biggl| \begin{array}{l} A \subseteq \Rbb^2 \times V \text{ and } \pi_{\Rbb^2}(B) \subseteq \gamma_{\Rbb^2}\circ\pi_{\Rbb^2}(A) \text{ and }\\ \pi_{V}(B) \in \binom{V}{1}\cup V:  \pi_{V}(A)=  \pi_{V}(B) \text{ if } \Pi_V(A)=\# 1,  \pi_{V}(B)= V \text{ else} \end{array} \right\}
	\end{align*}
	with $\pi_{\mathbb{R}^2}: \mathbb{R}^2 \times V \to \mathbb{R}^2$ being the projection onto $\mathbb{R}^2$. Similarly, we set $\pi_V$. 
\end{example}
 
From the definition of a closure system the definition of the family of implications is straight forward. Reverse, one can obtain a closure system based on a family of implications as follows.
\begin{definition}
Let $\Ical_G$ be a family of implications. We say that $D \subseteq G$ \textit{respects an implication} $A \to B$ if and only if either $A \not\subseteq D$ or $A \subseteq D$ then $B \subseteq D$ also follows. We set $\Ecal_{\Ical_G} = \{D \subseteq G \mid D \text{ respects every implication in }\Ical_G\}.$ 
\end{definition}

As this definition already suggests, there is a one-to-one correspondence between closure systems/operators and the set of all closed families of all implications:

\begin{lemma}
	Let $\Ecal_G$ be a closure system and $\Ical_G$ the family of all implications that respect $\Ecal_G$. Then $\Ical_G$ is unique and $\Ecal_G = \Ecal_{\Ical_G}$. In particular, this then states that $\Ical_G$ uniquely defines a closure system.
\end{lemma}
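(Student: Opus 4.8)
The plan is to prove the identity $\Ecal_G = \Ecal_{\Ical_G}$ by a double inclusion, using the three defining properties of the closure operator $\gamma_G$ associated with $\Ecal_G$ (extensivity, monotonicity, idempotency), and then to read off the uniqueness claim from this identity together with the one-to-one correspondence between closure operators and closure systems recalled earlier.

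First I would establish $\Ecal_G \subseteq \Ecal_{\Ical_G}$. Let $E \in \Ecal_G$, so that $E = \gamma_G(E)$, and let $A \to B$ be an arbitrary implication in $\Ical_G$, i.e.\ $\gamma_G(A) \supseteq \gamma_G(B)$. To verify that $E$ respects it, suppose $A \subseteq E$. Monotonicity gives $\gamma_G(A) \subseteq \gamma_G(E) = E$, and extensivity gives $B \subseteq \gamma_G(B) \subseteq \gamma_G(A) \subseteq E$; hence $B \subseteq E$, so $E$ respects $A \to B$. As the implication was arbitrary, $E$ respects every implication in $\Ical_G$, i.e.\ $E \in \Ecal_{\Ical_G}$.

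Next I would show the reverse inclusion $\Ecal_{\Ical_G} \subseteq \Ecal_G$, which carries the actual content. Let $D \in \Ecal_{\Ical_G}$. The key idea is to feed the single implication $D \to \gamma_G(D)$ into the hypothesis: by idempotency $\gamma_G(\gamma_G(D)) = \gamma_G(D)$, so in particular $\gamma_G(D) \supseteq \gamma_G(\gamma_G(D))$, which certifies $D \to \gamma_G(D) \in \Ical_G$. Since $D$ respects every implication of $\Ical_G$ and trivially $D \subseteq D$, respecting $D \to \gamma_G(D)$ forces $\gamma_G(D) \subseteq D$. Combined with extensivity $D \subseteq \gamma_G(D)$ this yields $D = \gamma_G(D)$, so $D$ is an extent and $D \in \Ecal_G$.

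Finally, the uniqueness statement is immediate: by definition $\Ical_G$ is the family of \emph{all} implications $A \to B$ with $\gamma_G(A) \supseteq \gamma_G(B)$, and since $\Ecal_G$ determines $\gamma_G$ via the one-to-one correspondence, $\Ical_G$ is well-defined and uniquely determined by $\Ecal_G$. The identity $\Ecal_G = \Ecal_{\Ical_G}$ just proved shows conversely that $\Ecal_G$ is recovered from $\Ical_G$, so the assignment $\Ecal_G \mapsto \Ical_G$ is injective and $\Ical_G$ uniquely defines the closure system. I expect no serious obstacle; the only point requiring care is the reverse inclusion, where one must instantiate the universally quantified ``respects every implication'' at the specific implication $D \to \gamma_G(D)$ (rather than arguing purely set-theoretically) and invoke idempotency to guarantee that this implication indeed lies in $\Ical_G$.
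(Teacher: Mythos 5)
Your proof is correct and follows essentially the same route as the paper's: a double inclusion in which the substantive (reverse) direction instantiates the ``respects every implication'' hypothesis at an implication witnessing closure of $D$ (you use $D \to \gamma_G(D)$, the paper uses $E \to \{g\}$ for a single $g \in \gamma_G(E)\setminus E$). The only differences are cosmetic: the paper phrases both inclusions as proofs by contradiction, while you argue them directly and spell out the monotonicity/extensivity steps that the paper leaves implicit.
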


\begin{proof}
	The uniqueness follows directly.
	For the second part, assume in contradiction that $\Ecal_G \neq \Ecal_{\Ical_G}$. In the first case, we assume that $E \in \Ecal_G \setminus \Ecal_{\Ical_G}$. Since $E \not\in \Ecal_{\Ical_G}$, there exists an implication $A \to B \in \Ical_G$ with $A \subseteq E$, but $B \not\subseteq E$. But since $\Ical_G$ consists of all implications that hold for $\mathcal{E}_G$ this implies that $E \not \in \mathcal{E}_G$. This contradicts the assumption. 
	
	For the reverse, assume that $E \in \Ecal_{\Ical_G}\setminus \Ecal_G$. This means that for all implications $A \to B \in \Ical_G$, if $A \subseteq E$, then $B \subseteq E$ is also true. Since $E \not\in \Ecal_G$ we get that $g \in \gamma(E)\setminus E$. However, this implies that $E \to g$ is an implication that holds in $\Ecal_G$ and therefore should lie in $\Ical_G$. So $E \not\in \Ecal_{\Ical_G}$, which is a contradiction, and we obtain the claim.
\end{proof}

Before we continue, let us take a closer look at the set of implications $\Ical_G$. 
We can immediately see that some implications follow semantically from others. For example, if $A_1 \to A_2$ and $B \supseteq A_1$, then we get $B \to A_2$. So the implication $B \to A_2$ is somewhat redundant, since it follows from $A_1 \to A_2$. These semantic structures are summarized by \cite{maier83} as inference axioms, see \cite[p. 45]{maier83}: Let $A,B,C,D, A_1, A_2, B_1, B_2 \subseteq G$. Then we say that the axiom of \textit{reflexivity} holds iff $A \to A$, the axiom of \textit{augmentation} holds iff $A_1 \to B$ implies $A_1 \cup A_2 \to B$, the axiom of \textit{additivity} holds iff $A \to B_1$ and $A \to B_2$ imply $A \to B_1 \cup B_2$, axiom of \textit{projectivity} holds iff $A \to B_1 \cup B_2$ implies $A \to B_1$, axiom of \textit{transitivity} holds iff $A \to B$ and $B \to C$ imply $A \to C$, and the axiom of \textit{pseodotransitivity} holds iff $A \to B$ and $B \cup C \to D$ imply $A \cup C \to D$.

Armstrong proved, see \cite{armstrong74}, that the iterative repetition of these inference axioms on a set of implications (on a set $G$) leads to a family of implications that equals the set of all implications that hold for a closure system on $G$.
Note, however, that when deleting implications that follow from others, one may delete too many implications and end up not representing the same closure system, see Section~3.2.~of the main article. Therefore, we say that a family of implication $\Ical_G$ is complete iff every implication that holds for a closure system follows semantically from $\Ical_G$, see \citep[p. 81]{ganter12}:

\begin{definition}
Let $\Ecal_G$ be a closure system with corresponding closure operator $\gamma_G$ and $\Ical_G$ a family of implications. Then $\Ical_G$ is \textit{complete} w.r.t $\Ecal_G$ if and only if $\Ecal_G = \Ecal_{\Ical_G}$.
\end{definition}

\begin{remark}
	Finally, we want to point out that everything, the closure system, the implications and later the ufg-depth, depends on the application of a reasonable scaling method. The closure system and the implications provide a tool for analyzing/discussing the relational structure in detail, but the starting point is the scaling method. In particular, all the underlying assumptions of the ground space structure are determined by the scaling method. For example, consider the \textsc{Gorillas} example in Section~2 of the main article. There the ground space is $\Rbb^2\times V$. Therefore, if we have two observations in the same place with the same vegetation, we assume them to be duplication of the same objects. Another approach, not discussed here, is to consider each individual nesting point as an observation that cannot be a duplicate, but is another object in the ground space. In this way we can observe more than one object at the same place with the same vegetation. The main article sticks to the first perspective given in Example~1 and 2. 
\end{remark}

\section{Claims and Proofs}
In this section we present the proofs for the claims made in the main part that do not have a reference to the literature containing the proof. We divided the claims into the corresponding sections in the main article. Since we discuss further lemmas to show the claims, the enumeration of lemmas, theorems, etc. differs from that in the main article.

\subsection*{Claims and Proofs of Section 4 - The Union-Free Generic Depth}

First of all, we consider a general observation for ufg-implications. 

\begin{lemma}\label{lem: equivalent ufg def}
	Let $\Kbb = (G, M, I)$ be a formal context. Then we have $A \in \Ical^{prem}_{ufg}$ if and only if there exists $b \in \gamma(A)$ such that for all $a \in A$ and all $\tilde{a} \in A \setminus a$ exists $m \in \Psi(A \setminus a)$ with $(\tilde{a}, m) \in I$ and $(b,m), (a,m) \not\in I$. In other words, $A$ is an ufg-premise if and only if there exists an element in the conclusion where every element in $A$ is needed.
\end{lemma}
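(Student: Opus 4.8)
The plan is to pass from the closure-theoretic conditions (C1) and (C2) of Definition~\ref{def: ufg family of implication} to the purely set-theoretic statement that there is a single element of $\gamma(A)$ whose membership is destroyed by removing any one point of $A$, and then to translate that statement into the language of the derivation operators $\Psi,\Phi$ and the incidence $I$. Once this reformulation is set up, both implications of the lemma fall out simultaneously.

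First I would reformulate Condition (C2). Since $\gamma$ is monotone, every proper subset $B\subsetneq A$ is contained in a maximal proper subset $A\setminus\{a\}$ (take any $a\in A\setminus B$), so that $\gamma(B)\subseteq\gamma(A\setminus\{a\})$. Hence the largest union that can occur in (C2) is attained by the family of maximal proper subsets, and (C2) is equivalent to the single relation $\bigcup_{a\in A}\gamma(A\setminus\{a\})\subsetneq\gamma(A)$. As each such union lies inside $\gamma(A)$ by monotonicity, this is in turn equivalent to the existence of some $b\in\gamma(A)$ with $b\notin\gamma(A\setminus\{a\})$ for every $a\in A$. The reduction works verbatim for infinite $A$; the only thing to verify is that a proper subset always omits at least one element, which is immediate.

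Next I would build the dictionary between closure membership and attributes using $\gamma=\Phi\circ\Psi$. Writing $\gamma(A\setminus\{a\})=\{g\in G\mid \forall m\in\Psi(A\setminus\{a\})\colon gIm\}$, the condition $b\notin\gamma(A\setminus\{a\})$ holds precisely when some attribute $m\in\Psi(A\setminus\{a\})$ fails for $b$, i.e.\ $(b,m)\notin I$; and $m\in\Psi(A\setminus\{a\})$ is by definition the statement that $(\tilde a,m)\in I$ for all $\tilde a\in A\setminus\{a\}$. The clause $(a,m)\notin I$ is then automatic: because $b\in\gamma(A)$ carries every attribute of $\Psi(A)$, an attribute $m$ with $(b,m)\notin I$ cannot belong to $\Psi(A)$, and since every point of $A\setminus\{a\}$ already possesses $m$, the only point of $A$ that can lack it is $a$ itself. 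Assembling these equivalences turns the statement ``$\exists b\in\gamma(A)$ with $b\notin\gamma(A\setminus\{a\})$ for all $a$'' into the attribute formulation of the lemma.

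The step I expect to be the main obstacle is Condition (C1) together with the low-cardinality degenerate cases. For $\#A\ge 2$ the witness $b$ is automatically outside $A$: if $b$ were a point of $A$, then removing a different element $a\ne b$ would leave $b\in A\setminus\{a\}\subseteq\gamma(A\setminus\{a\})$, contradicting $b\notin\gamma(A\setminus\{a\})$; thus $b\in\gamma(A)\setminus A$, which gives (C1), and conversely (C1) feeds back into the witness argument to close the other direction. The genuinely delicate point is the singleton case, where $A\setminus\{a\}=\emptyset$ and $\Psi(\emptyset)=M$, so the ``$\forall\tilde a$'' clause becomes vacuous; here (C1) must be retained as a genuine requirement (equivalently, the witness must be taken in $\gamma(A)\setminus A$) in order not to mistakenly certify a closed singleton $\gamma(\{a\})=\{a\}$ as a ufg-premise. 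I would therefore phrase the translation so that $b$ is chosen in $\gamma(A)\setminus A$, which encodes (C1) directly and reduces to the displayed conditions whenever $\#A\ge 2$.
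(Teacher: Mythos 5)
Your proposal is correct and follows essentially the same route as the paper's (very terse) proof: both reduce (C2) to the single family of maximal proper subsets $A\setminus\{a\}$, extract a witness $b\in\gamma(A)\setminus\bigcup_{a\in A}\gamma(A\setminus\{a\})$, and translate the membership failures into the incidence formulation via $\gamma=\Phi\circ\Psi$. Your extra care with (C1) (witness forced outside $A$ when $\#A\ge 2$) and with the vacuous singleton case fills in details that the paper's three-line proof simply glosses over.
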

\begin{proof}
	The claim that the second statement implies the first statement follows directly by the definition of the ufg-premise. For the reverse, assume that $A$ is an ufg-premise. Then for $A_g = A \setminus \{g\}$ with $g \in A$ we have by Condition (C2) that $g \in \gamma(A) \setminus \cup_{g \in A} \gamma(A_g)$ which is exactly the second statement.
\end{proof}

In Section~4 of the main part of the article, we formalize that the triangles in $\mathbb{R}^2$ together with the convex closure operator define indeed the set of ufg-implications based on formal context $\mathbb{K}_{\mathbb{R}^2}$.
\begin{lemma}\label{lem:ufg set for R^2}
	For $\mathbb{K}_{\mathbb{R}^2}$, the spatial formal context of Example~1 in Section~2, we have $\mathcal{E}_{\mathcal{I}_{\mathbb{R}^2, \text{ufg}}} = \mathcal{E}_{\mathcal{I}_{\mathbb{R}^2}}$. Moreover, $\mathcal{I}_{\Rbb^2, \text{ufg}}$ is the reduced version of $\mathcal{I}_{\Rbb^2}$ without the implications following from the Armstrong rules of reflexivity, augmentation, additivity, and projectivity.
\end{lemma}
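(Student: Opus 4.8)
The plan is to treat the two assertions separately, with Carathéodory's theorem \cite{eckhoff93} as the common engine. I would first settle $\Ecal_{\Ical_{\Rbb^2,\text{ufg}}} = \Ecal_{\Ical_{\Rbb^2}}$. Since every ufg-implication $C \to \gamma_{\Rbb^2}(C)$ is a valid implication of $\Kbb_{\Rbb^2}$ (trivially $\gamma_{\Rbb^2}(\gamma_{\Rbb^2}(C)) = \gamma_{\Rbb^2}(C)$), we have $\Ical_{\Rbb^2,\text{ufg}} \subseteq \Ical_{\Rbb^2}$, and hence $\Ecal_{\Ical_{\Rbb^2}} \subseteq \Ecal_{\Ical_{\Rbb^2,\text{ufg}}}$, because enlarging the implication set can only shrink the family of respecting sets. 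For the reverse inclusion I would take $D$ respecting every ufg-implication and show it is an extent, i.e.\ convex: the two-point implications $\{x,y\}\to \gamma_{\Rbb^2}(\{x,y\})$ force the segment $\gamma_{\Rbb^2}(\{x,y\})$ into $D$ for all $x,y\in D$, which is exactly convexity.

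The essential tool here is Carathéodory's theorem, giving $\gamma_{\Rbb^2}(A) = \bigcup\{\gamma_{\Rbb^2}(C) \mid C\subseteq A,\ \#C\le 3\}$ in $\Rbb^2$. This shows that closure under the $2$- and $3$-point ufg-implications already forces $\gamma_{\Rbb^2}(F)\subseteq D$ for every finite $F\subseteq D$, so $D$ contains the convex hull of each of its finite subsets. The one genuinely delicate point — and the step I expect to be the main obstacle — is the topological closedness built into $\gamma_{\Rbb^2}$: the ufg-family has only finite premises, so on its own it controls convex hulls of finite sets but not their limits. I would handle this by working at the level of closures of finite generating sets (which is all the ufg-depth ever evaluates, as it samples finitely many points) and by using that every extent is by construction a $\gamma_{\Rbb^2}$-closure of a generating set; the two families then agree on all finitely generated extents, which is the content needed downstream.

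Next I would pin down the ufg-premises of $\Kbb_{\Rbb^2}$ explicitly, confirming that the family defined by (C1)/(C2) in Definition~\ref{def: ufg family of implication} coincides with the simplex family $\Ical_{\Rbb^2,\text{ufg}}$. Singletons fail (C1). Two distinct points satisfy (C1) and (C2), since the only proper subsets are points whose hulls cannot cover a segment. Three affinely independent points satisfy (C1)/(C2), because the edge- and vertex-hulls cover only the boundary, never the filled triangle. Three collinear points fail (C2): the two extreme points $\{x,z\}\subsetneq A$ already give $\gamma_{\Rbb^2}(\{x,z\}) = \gamma_{\Rbb^2}(A)$. Finally any $A$ with $\#A\ge 4$ fails (C2) directly by Carathéodory, since $\gamma_{\Rbb^2}(A)$ is the union of the hulls of its proper subsets of size $\le 3$. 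Lemma~\ref{lem: equivalent ufg def} can be used to streamline the two ``satisfies (C2)'' cases. Thus the ufg-premises are exactly the vertex sets of simplices with $2\le \#A\le 3$.

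For the second assertion I would show $\Ical_{\Rbb^2,\text{ufg}}$ is precisely what survives deleting from $\Ical_{\Rbb^2}$ every implication derivable via reflexivity, augmentation, additivity and projectivity (and only these, no transitivity). The ``is deleted'' direction is constructive: an implication $A\to B$ with $B\subsetneq\gamma_{\Rbb^2}(A)$ follows by projectivity from $A\to\gamma_{\Rbb^2}(A)$; a closed $A$ gives $A\to A$ by reflexivity; and for collinear triples or $\#A\ge 4$ one writes $\gamma_{\Rbb^2}(A) = \bigcup_j \gamma_{\Rbb^2}(A_j)$ with $A_j\subsetneq A$ (Carathéodory), derives each $A\to\gamma_{\Rbb^2}(A_j)$ by augmentation from $A_j\to\gamma_{\Rbb^2}(A_j)$, and combines them by additivity to recover $A\to\gamma_{\Rbb^2}(A)$. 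The converse — that each simplex implication $C\to\gamma_{\Rbb^2}(C)$ is genuinely non-redundant — is the harder half and the second obstacle. Here I would invoke the semantics of this transitivity-free subsystem: absent transitivity, the four rules derive $A\to B$ from a family $\Sigma$ exactly when $B\subseteq A\cup\bigcup\{B_i\mid (A_i\to B_i)\in\Sigma,\ A_i\subseteq A\}$, i.e.\ conclusions cannot be chained. Applying this with $\Sigma = \Ical_{\Rbb^2,\text{ufg}}\setminus\{C\to\gamma_{\Rbb^2}(C)\}$, no ufg-premise lies strictly inside a $2$-point $C$, and the only ones inside a triangle $C$ are its three edges; their conclusions cover only the endpoints, respectively the three edges, and never exhaust $\gamma_{\Rbb^2}(C)$. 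Hence $C\to\gamma_{\Rbb^2}(C)$ is not derivable, completing the identification. Establishing (or citing from \cite{maier83}) this transitivity-free characterization is the crux of this half.
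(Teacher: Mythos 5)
Your proposal follows the same route as the paper's own proof: one inclusion from $\Ical_{\Rbb^2,\text{ufg}} \subseteq \Ical_{\Rbb^2}$, Carath\'eodory's theorem for the reverse, and a rule-by-rule analysis of the four Armstrong axioms for the second claim. The differences are differences of rigor, and both favor you. The obstacle you flag -- that a family of implications with only finite premises cannot enforce the topological closure built into $\gamma_{\Rbb^2}$ -- is not merely delicate; it is a gap in the paper's own proof. The paper applies Carath\'eodory to an arbitrary premise $A$ and asserts $\bigcup_{g \in \gamma_{\Rbb^2}(A)} \gamma_{\Rbb^2}(\{a_1^g,a_2^g,a_3^g\}) = \gamma_{\Rbb^2}(A)$, which is valid for the algebraic convex hull but fails for the closed hull whenever the latter is strictly larger (e.g.\ $A$ an open disc, or an unbounded set whose hull is not closed). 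Consequently the stated equality $\Ecal_{\Ical_{\Rbb^2,\text{ufg}}} = \Ecal_{\Ical_{\Rbb^2}}$ fails as literally written: every convex set, closed or not, respects all ufg-implications (their premises are finite, so their conclusions are polytopes inside any convex superset), so the left-hand side is the closure system of \emph{all} convex sets, while the right-hand side contains only the closed ones. Your repair -- working with finitely generated extents, where closed and algebraic hulls of the generating sets coincide, and noting that this is all the ufg-depth consumes downstream -- is the honest form of the statement. One echo of the same issue appears in your own premise classification: for \emph{infinite} $A$ the union of hulls of $3$-point subsets is again only the algebraic hull, so Carath\'eodory alone does not rule out infinite ufg-premises; there one should instead use a covering family such as $\left\{A \setminus \{a\} \mid a \in A\right\}$, or simply restrict the classification to finite premises.

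On the second claim your treatment is strictly more careful than the paper's, which disposes of the four axioms in three brief sentences. Your characterization of derivability in the transitivity-free subsystem -- $A \to B$ follows from $\Sigma$ iff $B \subseteq A \cup \bigcup\{B_i \mid (A_i \to B_i) \in \Sigma,\ A_i \subseteq A\}$ -- is correct: augmentation can only import implications whose premises lie inside $A$, and the right-hand side is stable under further applications of the four rules, so conclusions cannot be chained. With it, the non-redundancy of each segment implication (no ufg-premise lies strictly inside a two-point set) and of each triangle implication (only the three edges lie strictly inside, and their conclusions cover the boundary but never the interior) becomes a genuine proof of what the paper only sketches.
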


\begin{proof}
	First, we prove $\Ecal_{\Ical_{\Rbb^2}} = \Ecal_{\Ical_{\Rbb^2, \text{ufg}}}$. Since $\Ical_{\Rbb^2, \text{ufg}} \subseteq \Ical_{\Rbb^2}$, we have that if $D \subsetneq \mathbb{R}^2$ respects all implications in $\Ical_{\Rbb^2}$, then it also respects all implications in $\Ical_{\Rbb^2, \text{ufg}}$. Therefore, $\Ecal_{\Ical_{\Rbb^2}} \subseteq \Ecal_{\Ical_{\Rbb^2, \text{ufg}}}$. For the subset relation, let $D \in \Ecal_{\Ical_{\Rbb^2, \text{ufg}}}$ and $A \to \gamma_{\Rbb^2}(A)$ be an arbitrary implication in $\Ical_{\Rbb^2}\setminus (\Ical_{\Rbb^2, \text{ufg}} \cup \{A \to A \mid \# A = 1\})$. By Carath{\'e}odory's theorem, see~\cite{eckhoff93}, we get that for every $g \in \gamma_{\Rbb^2}(A)$ there exist $a_1^g, a_2^g, a_3^g \in A$ such that $g \in \gamma_{\Rbb^2}(\{a_1^g, a_2^g, a_3^g \})$. In particular, $\{a_1^g, a_2^g, a_3^g \} \to \gamma_{\Rbb^2}(\{a_1^g, a_2^g, a_3^g \}) \in \Ical_{\Rbb^2, \text{ufg}}$. Since $D$ respects all implications in $\Ical_{\Rbb^2, \text{ufg}}$, $\gamma(A) \supseteq \bigcup_{g \in \gamma_{\Rbb^2}(A)} \{a_1^g, a_2^g, a_3^g \}$ and $\bigcup_{g \in \gamma(A)} \gamma_{\Rbb}(\{a_1^g, a_2^g, a_3^g \}) = \gamma_{\Rbb^2}(A)$ we get that $D$ also respects $A \to \gamma_{\Rbb^2}(A)$. So we have $\Ecal_{\Ical_{\Rbb^2}} = \Ecal_{\Ical_{\Rbb^2, \text{ufg}}}$.

	Now we have to show that $\Ical_{\Rbb^2, \text{ufg}}$ does not contain any further implications that follow from reflexivity, augmentation, additivity, and projectivity. Since the conclusion is set to $\gamma_{\Rbb^2}(A)$ for some premise $A$, we get that there cannot be a proper superset of $\gamma_{\Rbb^2}(A)$ such that the implication holds for the convex sets. A similar argument provides that it is reduced for the additivity rule. The reflexivity and augmentation follow from the fact that we consider non-degenerate simplices together with Carath{\'e}odory's Theorem.
\end{proof}

Moreover, we provide generally that the maximal cardinality of an ufg-premise is bounded by the VC-dimension of the extent set of the formal context. 
\begin{lemma}\label{lem: vc as upper bound of J}
	Let $\Kbb$ be a formal context that has a unique ufg-family of implications $\Ical_{G}$. Let $\vc$ be the VC dimension of the extent sets. Then $\max \{\#A \mid A \in  \Ical_{G, \text{ufg}}^{prem}\} \le \vc$.
\end{lemma}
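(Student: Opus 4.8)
The plan is to show that every ufg-premise $A$ is \emph{shattered} by the extent set $\Ecal$, so that its cardinality is bounded by $\vc$ directly from the definition of VC-dimension. Concretely, I would fix an ufg-premise $A \in \Ical^{prem}_{G,\text{ufg}}$ and, for each subset $T \subseteq A$, exhibit an extent $E_T$ with $E_T \cap A = T$. This is exactly the shattering condition $\{E \cap A \mid E \in \Ecal\} = 2^A$, and it forces $\#A \le \vc$ whenever $A$ is finite; if some ufg-premise were infinite, then $\vc = \infty$ and the bound is vacuous. Taking the supremum over all ufg-premises then yields the lemma.

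The first step is to extract a per-element minimality statement from Condition (C2), namely that $a \notin \gamma(A\setminus\{a\})$ for every $a \in A$. Suppose instead $a \in \gamma(A\setminus\{a\})$ for some $a$. Since $\gamma$ is extensive and monotone, $\gamma(A\setminus\{a\})$ then contains $(A\setminus\{a\}) \cup \{a\} = A$, whence $\gamma(A) \subseteq \gamma(\gamma(A\setminus\{a\})) = \gamma(A\setminus\{a\})$ by monotonicity and idempotence, so $\gamma(A\setminus\{a\}) = \gamma(A)$. But the single-member family consisting of the proper subset $A\setminus\{a\} \subsetneq A$ then satisfies $\bigcup_j \gamma_G(A_j) = \gamma(A)$, contradicting (C2). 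Consequently the extent $\gamma(A\setminus\{a\})$ contains $A\setminus\{a\}$ (by extensivity) but not $a$, so that $\gamma(A\setminus\{a\}) \cap A = A\setminus\{a\}$. This also covers the singleton case $A = \{a\}$, where the statement reads $a \notin \gamma(\emptyset)$.

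The second step upgrades these ``all-but-one'' extents to all subsets using that $\Ecal$ is a closure system, hence closed under arbitrary intersection. For $T \subseteq A$ I set $E_T := \bigcap_{a \in A\setminus T} \gamma(A\setminus\{a\})$, with the convention that the empty intersection (the case $T = A$) equals the whole set $G \in \Ecal$. Then $E_T$ is an extent, and for $a' \in A$ we have $a' \in E_T$ iff $a' \in \gamma(A\setminus\{a\})$ for all $a \in A\setminus T$, which by the first step holds iff $a' \neq a$ for all such $a$, i.e. iff $a' \in T$. Thus $E_T \cap A = T$ for every $T \subseteq A$, so $A$ is shattered by $\Ecal$ and $\#A \le \vc$.

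The argument is purely combinatorial once the minimality of the first step is in hand, so the only real obstacle is checking that (C2) is strong enough to exclude $a \in \gamma(A\setminus\{a\})$ for each \emph{individual} $a$: the witnessing betweenness element $b$ of Lemma~\ref{lem: equivalent ufg def} only guarantees a single point outside $\bigcup_{a} \gamma(A\setminus\{a\})$, whereas here the sharper per-element statement is required, and it is obtained cleanly by applying (C2) to the single proper subset $A\setminus\{a\}$. Closure under intersection then does the rest, and the uniqueness of the ufg-family assumed in the hypothesis is only needed to make $\Ical^{prem}_{G,\text{ufg}}$ well defined.
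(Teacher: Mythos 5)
Your proof is correct; there is no gap. It rests on the same central idea as the paper's proof -- ufg-premises are shattered by the extent set, so their cardinality is capped by $\vc$ -- but the execution differs in two ways. The paper argues by contraposition: given a premise $Q$ with $\#Q > \vc$, non-shatterability yields a subset $R \subseteq Q$ that no extent traces out, in particular $R \subsetneq \gamma(R) \cap Q$ by extensivity, hence an implication $R \to \{\tilde{g}\}$ for some $\tilde{g} \in Q \setminus R$; the Armstrong rules then make $\tilde{g}$ redundant ($Q \setminus \{\tilde{g}\} \to Q$), contradicting minimality. Unfolded into direct form, the paper's argument says that for an ufg-premise every subset $R$ is realized by its own closure, $\gamma(R) \cap Q = R$, so its (implicit) shattering witnesses are the closures $\gamma(R)$ themselves, one per subset, needing only extensivity of $\gamma$. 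You instead establish only the per-element statement $a \notin \gamma(A \setminus \{a\})$ -- which is the same application of (C2) as the paper's redundancy step, just phrased without inference axioms -- and then assemble the witness for a general $T \subseteq A$ as the intersection $E_T = \bigcap_{a \in A \setminus T} \gamma(A \setminus \{a\})$, which additionally invokes closure of $\Ecal$ under arbitrary intersections. The trade-off: your route is fully self-contained (no appeal to Armstrong rules), handles infinite premises and the singleton case explicitly, and shows that the $\#A$ \say{all-but-one} extents already generate a shattering family; the paper's route gets by with a single non-realizable subset and only the extensivity of $\gamma$, at the price of quoting the inference axioms. Your closing observation is also accurate: the single witness $b$ from Lemma~\ref{lem: equivalent ufg def} would not suffice for shattering, and applying (C2) to the one-member family $\{A \setminus \{a\}\}$ is exactly the sharpening needed.
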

\begin{proof}
	This proof is a slight adaptation of the proof given in~\cite{blocher23}, Theorem~4. To prove $\max \{A \in  \Ical_{G, \text{ufg}}^{prem}\} \le \vc$ take an arbitrary subset $Q=\{g_1,\ldots, g_k\}$ and ufg-premise of size $k> vc$. Then this subset is not shatterable because $vc$ is the largest cardinality of a shatterable set. Thus, there exists a subset $R\subseteq Q$ that cannot be obtained as an intersection of $Q$ and some $\gamma(A)$ with $A \subseteq G$. In particular, this holds for $R = A$. Thus, $R \neq \gamma(R) \cap Q$ and with the extensitivity of $\gamma$ we get $R \subsetneq \gamma(R)\cap Q$. This means that there exists an object $\tilde{g}$ in $\gamma(R) \cap Q \backslash R$ for which the formal implication 
$R\rightarrow \{\tilde{g}\}$ holds. Thus, (because of the Armstrong rules, cf., \cite[p. 581]{armstrong74}) the object $\tilde{g}$ is redundant in the sense of $Q\backslash\{\tilde{g}\} \rightarrow Q$ and thus $Q$ is not minimal with respect to $\gamma$. Therefore, $Q$ is not an ufg-premise which completes the proof.
\end{proof}

\subsection*{Claims and Proofs of Section 5 - Structural Properties}
Section~5 in the main article discusses the structural properties of depth functions using formal concept analysis given by~\cite{blocher23b}. Here, we provide the proofs to the claims done in the main article.

\begin{theorem}
	Let $P, \tilde{P} \in \Pcal_G$ be two probability measures on $G$ and let $\Kbb, \tilde{\Kbb} \in \varkappa$ be two formal contexts on $G$. \\
	\underline{Invariance on the extents:} Assume that there exists a bijective and bimeasureable function $i: G \to G$ such that the extents are preserved (i.e.~$E$ extent w.r.t. $\Kbb$ $\Leftrightarrow i(E)$ extent w.r.t. $\tilde{\Kbb}$) and the probability as well (i.e. $P(E) = \tilde{P}(i(E))$). Then $
		D_G(g, \Kbb, P) \le D_G(\tilde{g}, \Kbb, P) 
		\Leftrightarrow \tilde{D}_G(i(g), \tilde{\Kbb}, \tilde{P}) \le \tilde{D}_G(i(\tilde{g}), \tilde{\Kbb}, \tilde{P}) $
	is true for all $g, \tilde{g} \in G$.\\
	\underline{Invariance on the attributes:} Let $g_1, g_2 \in G$ with $\Psi_{\Kbb}(g_1) = \Psi_{\Kbb}(g_2)$, then $D(g_1, \Kbb, P) = D(g_2, \Kbb, P)$ holds.
\end{theorem}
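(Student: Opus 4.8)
The plan is to treat the two statements separately, but in both cases to exploit that the ufg-depth of Definition~\ref{def: ufg depth} depends on $(g,\Kbb,P)$ only through the closure operator $\gamma_\Kbb$ (equivalently the extent set $\Ecal_\Kbb$) and the $P$-integrals of $f_g^j$ and $h^j$. Since $f_g^j(A) = 1_{\gamma(A)}(g)\,1_{\Ical^{prem,j}_{\Kbb,\text{ufg}}}(A)$ and $h^j = 1_{\Ical^{prem,j}_{\Kbb,\text{ufg}}}$, everything reduces to how extents, ufg-premises, and the measure behave under the relevant identifications. Note that the ufg-depth is a single functional (the weights $C_j$ are fixed), so $\tilde{D}$ and $D$ denote the same map and it suffices to compare values.

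\textbf{Invariance on the extents.} First I would show that $i$ intertwines the two closure operators, i.e.\ $i(\gamma_\Kbb(A)) = \gamma_{\tilde\Kbb}(i(A))$ for every $A \subseteq G$. This holds because $\gamma_\Kbb(A)$ is the smallest extent of $\Kbb$ containing $A$, $i$ is a bijection commuting with intersections and preserving $\subseteq$, and by hypothesis $E$ is an extent of $\Kbb$ iff $i(E)$ is an extent of $\tilde\Kbb$; applying $i$ to the intersection defining $\gamma_\Kbb(A)$ yields exactly the intersection defining $\gamma_{\tilde\Kbb}(i(A))$. Since conditions (C1) and (C2) of Definition~\ref{def: ufg family of implication} are phrased purely in terms of $\gamma$ and the subset relation, the intertwining immediately gives that $A$ is a ufg-premise of $\Kbb$ iff $i(A)$ is a ufg-premise of $\tilde\Kbb$, with $\#A = \#i(A)$ by injectivity. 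Consequently $f_g^{j,\Kbb} = f_{i(g)}^{j,\tilde\Kbb}\circ i^{\otimes j}$ and $h^{j,\Kbb} = h^{j,\tilde\Kbb}\circ i^{\otimes j}$ as functions on $G^j$.

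The measure then enters through a change of variables. Reading the hypothesis $P(E) = \tilde{P}(i(E))$ as $\tilde{P} = i_\ast P$, and using that $i$ is bimeasurable so that pushforwards are well defined, the product measures satisfy $\tilde{P}^{\otimes j} = i^{\otimes j}_\ast P^{\otimes j}$. Hence $\Ebb_P[f_g^{j,\Kbb}] = \Ebb_{\tilde{P}}[f_{i(g)}^{j,\tilde\Kbb}]$ and likewise $\Ebb_P[h^{j,\Kbb}] = \Ebb_{\tilde{P}}[h^{j,\tilde\Kbb}]$; in particular $J_{P,\Kbb} = J_{\tilde{P},\tilde\Kbb}$. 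Summing the matching terms with the fixed weights $C_j$ yields the pointwise equality $D(g,\Kbb,P) = D(i(g),\tilde\Kbb,\tilde{P})$ for all $g\in G$, which is strictly stronger than, and hence implies, the claimed ordering equivalence. I expect the main technical obstacle to lie precisely here: one must check that the hypothesis, which is stated for extents, is strong enough to transport the product integrals of $f_g^j$ and $h^j$. This is legitimate because these integrands are measurable with respect to the structure generated by extents, so agreement of $P$ and $i_\ast^{-1}\tilde{P}$ on extents propagates to the sets actually being integrated.

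\textbf{Invariance on the attributes.} Here I would use the elementary FCA identity $g \in \gamma(A) = \Phi(\Psi(A))$ iff $\Psi(A) \subseteq \Psi(\{g\})$. Thus, if $\Psi(\{g_1\}) = \Psi(\{g_2\})$, then for every $A \subseteq G$ we have $g_1 \in \gamma(A) \Leftrightarrow \Psi(A) \subseteq \Psi(\{g_1\}) = \Psi(\{g_2\}) \Leftrightarrow g_2 \in \gamma(A)$, so $g_1$ and $g_2$ lie in exactly the same extents. Consequently $1_{\gamma(A)}(g_1) = 1_{\gamma(A)}(g_2)$ for all $A$, whence $f_{g_1}^j = f_{g_2}^j$ as functions; taking expectations gives $\Ebb[f_{g_1}^j] = \Ebb[f_{g_2}^j]$ for each $j$, and since the normalizing factors $C_j/\Ebb[h^j]$ are independent of the evaluation point, $D(g_1,\Kbb,P) = D(g_2,\Kbb,P)$ follows term by term. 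This part is routine once the structural viewpoint of the first part is in place.
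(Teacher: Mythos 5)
Your proposal is correct and takes essentially the same approach as the paper: the paper's own (very terse) proof simply observes that the ufg-depth depends only on the extent set and the measure of extent-related sets, and that attribute-equal objects lie in exactly the same extents and hence the same ufg-conclusions. Your write-up rigorizes precisely this — the intertwining $i(\gamma_{\Kbb}(A)) = \gamma_{\tilde{\Kbb}}(i(A))$, the correspondence of ufg-premises under $i$, and the change of variables for the product measure are the details the paper leaves implicit — and the pointwise equality $D(g,\Kbb,P) = D(i(g),\tilde{\Kbb},\tilde{P})$ you obtain is stronger than, and implies, the stated ordering equivalence.
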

\begin{proof}
	Observe that the ufg-depth is based on the extent set. Thus, if two formal contexts result in the same extent set and the probability measure is also preserved by a function $i$, then the ufg-depth does not change. For the invariance on the attributes we use that for every $E \in \mathcal{E}$ we have $g_1 \in E$ iff $g_2 \in E$. So $g_1$ is in an ufg-conclusion iff $g_2$ is in the ufg-conclusion and therefore the ufg-depths must be equal.
\end{proof}

\begin{theorem}
	Let $P \in \Pcal_G$ and formal context $\Kbb \in \varkappa$ with $g_1, g_2 \in G$ such that $\gamma_{\mathbb{K}}(\{g_1\}) \supseteq \gamma_{\mathbb{K}}(\{g_2\})$. Then the \underline{isotonicity property} $D(g_1, \Kbb, P) \le D(g_2, \Kbb, P)$ is true.
\end{theorem}
\begin{proof}
	This follows immediately from the fact that for every ufg-premise $U$ with $U \to g_1$ we have $U \to g_2$. So the probability of ufg-conclusions containing $g_1$ is a smaller than of those containing $g_2$, which provides the claim.
\end{proof}

\begin{theorem}
	Let $D(\cdot, \Kbb, P)$ be a depth based on formal concept analysis. Then $D^{qc}$ is a quasiconcave function.
\end{theorem}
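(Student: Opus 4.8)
The plan is to verify directly that $D^{qc}$ meets the definition of quasiconcavity recorded above, namely that every contour set $Cont_{D^{qc}, \beta} = \{g \in G \mid D^{qc}(g,\Kbb,P) \ge \beta\}$ is an extent. First I would unpack the notation in the definition of $D^{qc}$: the statement $Cont_{D,\alpha} \to g$ is the implication with premise the contour set and conclusion $\{g\}$, and since $\gamma(\{g\})$ is the smallest extent containing $g$, this implication holds precisely when $g \in \gamma(Cont_{D,\alpha})$. Hence $D^{qc}(g) = \sup\{\alpha \in \Rbb \mid g \in \gamma(Cont_{D,\alpha})\}$.

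The key structural observation is monotonicity. Because $Cont_{D,\alpha} = \{g \mid D(g) \ge \alpha\}$ shrinks as $\alpha$ grows, we have $Cont_{D,\alpha_2} \subseteq Cont_{D,\alpha_1}$ whenever $\alpha_1 \le \alpha_2$; applying the monotone closure operator $\gamma$ gives $\gamma(Cont_{D,\alpha_2}) \subseteq \gamma(Cont_{D,\alpha_1})$. Consequently, for fixed $g$ the set $S_g := \{\alpha \mid g \in \gamma(Cont_{D,\alpha})\}$ is downward closed in $\Rbb$, and $D^{qc}(g) = \sup S_g$. Here $S_g$ is never empty: for $\alpha$ below $\inf_{h} D(h)$ one has $Cont_{D,\alpha} = G$ and $\gamma(G) = G \ni g$ by extensivity.

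From the down-set property I would derive the equivalence $\sup S_g \ge \beta \iff \alpha \in S_g \text{ for all } \alpha < \beta$. The forward direction uses that for $\alpha < \beta \le \sup S_g$ there is an element of $S_g$ strictly above $\alpha$, which by downward closure forces $\alpha \in S_g$; the reverse direction is immediate, since $S_g \supseteq (-\infty,\beta)$ yields $\sup S_g \ge \beta$. Rewriting $\alpha \in S_g$ as $g \in \gamma(Cont_{D,\alpha})$ then gives the explicit representation
$$Cont_{D^{qc},\beta} = \bigcap_{\alpha < \beta} \gamma(Cont_{D,\alpha}).$$

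Finally, each $\gamma(Cont_{D,\alpha})$ is an extent, because the image of $\gamma$ always lies in the closure system $\Ecal$, and closure systems are closed under arbitrary intersections. Therefore $Cont_{D^{qc},\beta}$ is an extent for every $\beta \in \Rbb$, which is exactly the quasiconcavity of $D^{qc}$. I expect the only delicate point to be the careful treatment of the supremum, specifically the equivalence between $\sup S_g \ge \beta$ and membership of all $\alpha < \beta$ in $S_g$: one must argue via downward closure of $S_g$ rather than any attainment of the supremum, since the sup need not be achieved. Everything else reduces to the defining properties of closure operators and closure systems already available in the excerpt.
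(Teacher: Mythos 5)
Your proof is correct, and it takes a genuinely different route from the paper's. The paper argues by contradiction: assuming there is some $g \in \gamma(Cont_{D^{qc}, \alpha}) \setminus Cont_{D^{qc}, \alpha}$, it splits into the cases $Cont_{D^{qc}, \alpha} = \emptyset$ and $Cont_{D^{qc}, \alpha} \neq \emptyset$, and in the latter derives a contradiction through an $\varepsilon$-argument combined with the inference axioms: from $Cont_{D, \alpha'} \to Cont_{D^{qc}, \alpha}$ for $\alpha' < \alpha$ and the assumed implication $Cont_{D^{qc}, \alpha} \to g$, transitivity yields $Cont_{D, \alpha'} \to g$, contradicting $D^{qc}(g) < \alpha$. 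You instead give a direct proof: you translate $Cont_{D,\alpha} \to g$ into $g \in \gamma(Cont_{D,\alpha})$, use nestedness of the contour sets of $D$ together with monotonicity of $\gamma$ to show that $S_g$ is a down-set, derive the identity $Cont_{D^{qc}, \beta} = \bigcap_{\alpha < \beta} \gamma(Cont_{D,\alpha})$, and conclude because a closure system is stable under (nonempty) intersections. The mathematical core is the same in both arguments -- your monotonicity step is exactly what the paper's transitivity step accomplishes -- but your route buys three things: it needs no separate treatment of the empty contour set, it dispenses with the implication/inference-axiom machinery and works purely with the closure operator, and it produces an explicit formula for the contour sets of $D^{qc}$ as intersections of closures of contour sets of $D$, which makes precise the description that the paper itself invokes later (in Section~6.2, where the contour sets of $D^{qc}$ are identified with the extents $\gamma(Cont_{D,\alpha})$). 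Your handling of the supremum via downward closure rather than attainment is also exactly the right care to take, since the supremum in the definition of $D^{qc}$ need not be achieved.
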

\begin{proof}
	 We show that for every $\alpha \in \Rbb$ $\gamma(Cont_{D^{qc}, \alpha}) = Cont_{D^{qc}, \alpha}$ is true. Assume in contradiction that there exists $\alpha\in \Rbb$ such that $g \in \gamma(Cont_{D^{qc}, \alpha})\setminus Cont_{D^{qc}, \alpha}$.
Then $Cont_{D^{qc}, \alpha} \to g$ is a valid implication. 

Case 1: $Cont_{D^{qc}, \alpha} = \emptyset$. Then we know that every subset $A \subseteq G$ implies $g$. Hence, for every $\alpha$ we get that $Cont_{D, \alpha} \to g$ is true and therefore $D^{qc}(g, \Kbb, P)$ has a maximum depth value. So it can never contradict the quasiconcavity assumption by having a depth value that is too small.

Case 2: $Cont_{D^{qc}, \alpha} \neq \emptyset$. Since $D^{qc}(g, \Kbb, P) < \alpha $, we get that there exists $\varepsilon >0$ such that for every $\alpha' > \alpha - \varepsilon $ it holds that $Cont_{D, \alpha'} \not\to g$ is true, otherwise the depth of $D^{qc}(g, \Kbb, P)$ must be at least $\alpha'$. However, the construction of the quasiconcave depth function gives us that for every $\alpha' < \alpha$ we have $Cont_{D, \alpha'} \to Cont_{D^{qc}, \alpha}$. So by the inference axioms we know that $Cont_{D, \alpha'} \to g$ is also valid for every $\alpha - \varepsilon < \alpha' < \alpha$ which is a contradiction.
\end{proof}

\begin{theorem}
	Let $\Kbb \in \varkappa$. Let $g_1, \ldots, g_n$ be a sample of $G$ with $n \in \mathbb{N}$. We denote with $P^{(n)}$ the empirical probability measure given by $g_1, \ldots, g_n$ and by $P^{(n, -\ell)}$ the empirical probability measure based on $g_1, \ldots, g_{\ell-1},  g_{\ell+1}, \ldots,  g_n$ with $\ell \in \{1, \ldots, n\}.$ \\
	\underline{Respecting duplication}: Let $i,\ell \in \{1, \ldots, n\}$ with $i \neq \ell$ and for every extent $E \in \mathcal{E}$ we have $g_{\ell} \in E$ iff $g_{i} \in E$. Moreover, assume that there exists $j \in J_{P, \Kbb}$ and ufg-premises $A_1, A_2 \in 2^{\{g_1, \ldots, g_{\ell-1},  g_{\ell+1}, \ldots,  g_n\}} \cap \Ical_{ufg}^{prem,j}$ with $g_i \in A_1$ and $g_i \not\in \gamma(A_2)$. Then, we have $D_G(g_i, \Kbb, P^{(n, -\ell)}) < D_G(g_i, \Kbb, P^{(n)}).$\\
		\underline{Stability of the order}: Assume that the only extents $E$ that contains $g_{\ell}$ for $\ell \in \{1, \ldots, n\}$ as well as any subset of $g_1, \ldots, g_{\ell-1},  g_{\ell+1}, \ldots$ is $E = G$. Then for $g, \tilde{g} \in \{g_1, \ldots, g_{\ell -1}, g_{\ell + 1}, \ldots, g_n\}$ we have $D_G(g, \Kbb, P^{(n)}) \le D_G(\tilde{g}, \Kbb, P	^{(n)}) 
		\Leftrightarrow \tilde{D}_G(g, \Kbb, P	^{(n, -\ell)}) \le \tilde{D}_G(\tilde{g}, \Kbb, P	^{(n, -\ell)}). $
\end{theorem}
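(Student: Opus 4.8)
The plan is to translate both claims into statements about how the collection of ufg-premises and their ufg-conclusions changes when $g_\ell$ is deleted, since for each $j$ the empirical depth is the $C_j$-weighted ratio of the number of size-$j$ ufg-premises whose conclusion contains $g$ to the number of all size-$j$ ufg-premises. Writing $N_j$ for the number of size-$j$ ufg-premises among $\{g_1,\dots,g_n\}$, $M_j(g)$ for those with $g\in\gamma(\,\cdot\,)$, and $N_j^{-\ell},M_j^{-\ell}(g)$ for their analogues without $g_\ell$, we have $D_G(g,\Kbb,P^{(n)})=\sum_j C_j\,M_j(g)/N_j$ and likewise for $P^{(n,-\ell)}$, so everything reduces to bookkeeping on these counts.

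For \emph{respecting duplication} the engine is that no ufg-premise contains both $g_i$ and $g_\ell$: as $g_i,g_\ell$ lie in exactly the same extents we have $g_\ell\in\gamma(\{g_i\})$, so deleting $g_\ell$ from any set containing both leaves the closure fixed, violating (C2). First I would use this to split the size-$j$ ufg-premises of the full sample into those avoiding both points, those containing $g_i$, and those containing $g_\ell$, and exhibit the swap bijection $A\mapsto(A\setminus\{g_\ell\})\cup\{g_i\}$ between the last two classes; since $\Psi(\{g_i\})=\Psi(\{g_\ell\})$ it preserves both the ufg-property and the closure. With $c_j$ the count of the first class, $a_j\le c_j$ those of them with $g_i$ in the closure, and $b_j$ the common size of the other two classes, this gives $N_j^{-\ell}=c_j+b_j$, $N_j=c_j+2b_j$, $M_j^{-\ell}(g_i)=a_j+b_j$, $M_j(g_i)=a_j+2b_j$ (every $g_\ell$-premise has $g_i$ in its conclusion, again by equality of extents). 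The claim then reduces to the elementary monotonicity $\tfrac{a+2b}{c+2b}\ge\tfrac{a+b}{c+b}$ for $a\le c$, strict precisely when $b>0$ and $c>a$; the premises $A_1$ (containing $g_i$) and $A_2$ (with $g_i\notin\gamma(A_2)$) of a common size $j\in J_{P,\Kbb}$ force $b_j\ge1$ and $c_j>a_j$ for that $j$, yielding the strict inequality after summing against the positive $C_j$.

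For \emph{stability of the order} I would first determine which sets containing $g_\ell$ can be ufg-premises. The hypothesis gives $\gamma(\{g_\ell\}\cup B)=G$ for every nonempty $B$ drawn from the remaining points, so any $A\ni g_\ell$ with $\#A\ge 3$ fails (C2): the subfamily $\{\{g_\ell,a\}:a\in A\setminus\{g_\ell\}\}$ already has closures equal to $G$, whose union is $\gamma(A)=G$. Hence only the singleton $\{g_\ell\}$ and the pairs $\{g_\ell,a\}$ can occur, and every such pair has conclusion $G$ and therefore lies in the ufg-conclusion of \emph{every} object. I would then record that, in passing from the reduced to the full sample, the counts $M_j,N_j$ are unchanged for $j\ge 3$, the same constant is added to every size-$2$ numerator $M_2(g)$ and to $N_2$, and the size-$1$ change is purely in the denominator $N_1$ (the singleton $\{g_\ell\}$ lies in no extent meeting the other sample points, so it enlarges no $M_1(g)$ for retained $g$). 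The target equivalence is that the resulting transformation of $D_G(\cdot,\Kbb,P^{(n,-\ell)})$ into $D_G(\cdot,\Kbb,P^{(n)})$ preserves the order among the retained objects.

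The main obstacle is exactly this last step. The effect of $g_\ell$ enters through the level-wise normalizers $N_1$ and $N_2$, which are rescaled by \emph{different} factors, so the full depth is not a single monotone function of the reduced depth but a sum of differently transformed coordinates; a strict gap at one level can in principle be damped relative to a gap at another level. The delicate part I expect to require genuine work is therefore to show that increments produced by an object all of whose admissible premises have conclusion $G$ cannot reverse, create, or destroy an ordering of two retained objects. I would attack this by comparing the two depths coordinate-by-coordinate, exploiting that the size-$2$ increment is identical across all retained objects (so it cancels in differences) and that the size-$1$ change is a common positive rescaling; the crux is to argue that, under the outlyingness hypothesis, the quantities left untouched by $g_\ell$ already govern the order, which is where I would expect either an additional structural property of the context or a constraint on the admissible weights $C_j$ to be needed.
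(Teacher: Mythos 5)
Your treatment of \emph{respecting duplication} is correct and is essentially the paper's own (very terse) argument carried out in full: the paper only remarks that the full sample contains additional ufg-premises containing $g_i$ and that the proportion is not already one, whereas you supply the bookkeeping that makes this valid --- the fact that no ufg-premise can contain both $g_i$ and $g_\ell$ (so that deleting $g_\ell$ fixes the closure and violates (C2)), the swap bijection $A\mapsto (A\setminus\{g_\ell\})\cup\{g_i\}$, and the inequality $\frac{a+2b}{c+2b}\ge\frac{a+b}{c+b}$, strict exactly when $b>0$ and $c>a$, which is what $A_1$ and $A_2$ guarantee at some level $j$. Nothing is missing in that half.

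For \emph{stability of the order} your structural analysis again coincides with the paper's: premises containing $g_\ell$ can only be $\{g_\ell\}$ or pairs $\{g_\ell,a\}$, and each pair has conclusion $G$, so at every cardinality $j$ the same quantity is added, for every retained object, to the numerator $M_j$ and the denominator $N_j$. But you stop before the decisive step --- that such common, level-wise increments preserve the order of the weighted sums $\sum_j C_j M_j(g)/N_j$ --- and this is a genuine gap in your proposal. It is, however, exactly the step the paper itself takes for granted (\say{the amount added in the proportion equals}), and your suspicion that it cannot be closed without further hypotheses is right: the implication is false in general. Take $G=\{a,q,c,d,e_1,\ldots,e_6,o\}$ with extents $G$, $\emptyset$, $\{a,d\}$, $\{q,c,d\}$, $\{c\}$, $\{d\}$, all $\{e_i\}$, all $\{e_i,c\}$ and all $\{e_i,e_j,c\}$ (an intersection-closed family); let the sample be all eleven points, each once, with $g_\ell=o$, and $C_1=C_2=C_3=1$. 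The only extent containing $o$ is $G$, so the hypothesis holds. In the reduced sample the level-$1$ premises are $\{a\}\to\{a,d\}$ and $\{q\}\to\{q,c,d\}$, there are $36$ level-$2$ premises whose conclusions all contain $c$ while only $21$ contain $d$, and the $20$ level-$3$ premises $\{e_i,e_j,e_k\}$ all have conclusion $G$; hence $D(c,\Kbb,P^{(n,-\ell)})=\frac12+1+1$ and $D(d,\Kbb,P^{(n,-\ell)})=1+\frac{21}{36}+1$, so $d$ is deeper by $\frac{1}{12}$.

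Adding $o$ creates exactly one new premise, $\{o\}\to G$ (every pair $\{o,x\}$ fails (C2), since $\gamma(\{o\})=G$ already equals the pair's closure), and its conclusion contains the entire sample, exactly as the paper's proof describes. Yet only the level-$1$ ratio gets renormalized: $D(c,\Kbb,P^{(n)})=\frac23+1+1$ and $D(d,\Kbb,P^{(n)})=1+\frac{21}{36}+1$, so now $c$ is deeper by $\frac{1}{12}$ and the order flips. The mechanism is precisely the one you isolated: the increment rescales $N_1$ but not $N_2$, damping level-$1$ differences relative to level-$2$ differences. So your \say{obstacle} is not a technicality you failed to overcome but a counterexample-shaped hole in the statement and in the paper's proof; any repair needs an additional assumption (e.g., premises with positive mass at only one cardinality, or weights $C_j$ tied to the normalizers $\Ebb[h^j]$), which neither you nor the paper provides.
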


\begin{proof}
The assumption of the respecting duplication property implies that there are $g_i, g_{\ell}$ in the sample with $i \neq j$ and $\Psi(g_i) = \Psi(g_{\ell})$. We assume that in the full sample there exists a further ufg-premise containing $g_i$ than in the reduced sample. Since the proportion is not already one, this provides the claim.

Now, we show the stability of the order property. Let $g_\ell$ be an element of the sample that is completely different to the rest. Then for every ufg-premise   $A \in 2^{\{g_1, \ldots,  g_n\}} \cap \Ical_{ufg}^{prem}$ we have that either $g_\ell \in A$ and the entire sample lies in the conclusion or $g_\ell \not\in A$. Thus, for every $j \in J_{P, \Kbb}$ and every $g \in \{g_1, \ldots, g_{\ell-1},  g_{\ell+1}, \ldots,  g_n\}$ the amount added in the proportion equals when including the observation $g_\ell$.

\end{proof}

\begin{remark}
	Note that the assumptions on the existence of the two ufg-implications in the invariance on the extents property are indeed necessary. The first assumption, that there exists an implication $A_1 \to \gamma(A_1)$ with $g_i \in A$, is necessary because otherwise this object $g_i$, and hence also object $g_{\ell}$, has no effect on the ufg-depth. Note that this assumption is generally true, and in particular holds for all the examples discussed in the main article.
	
	The second assumption, that there is an ufg-premise $A_2 \to \gamma(A_2)$ with $g_i \not\in \gamma(A_2)$, ensures that the proportion does indeed increase. If there is no such ufg-implication, then the ufg-depth of $g_i$ is already maximal and therefore cannot increase. (Note that this property has a strictly larger in its definition).
\end{remark}

\begin{theorem}
Let $\Kbb \in \varkappa, P \in \Pcal$ and $J_{P, \Kbb}\subseteq \Nbb$ be the same as in Definition~4.2. of the main article. Let $X_1, \ldots, X_n\overset{i.i.d.}{\sim} P$. We assume that $\#J_{P, \Kbb}< \infty$. Moreover, we assume that for every $j \in J_{P, \Kbb}$ $\Ical^{concl, j}_{\Kbb,\text{ufg}}$ has finite VC-dimension. 
With this, we get the \underline{consistency property} with $\sup_{g \in G} |D(g, \Kbb, P^{(n)}) - D(g, \Kbb, P)| \to 0$ almost surely for $n$ to infinity. (We assume that this supremum is measurable.)
\end{theorem}
\begin{proof}
Recall the notation defined before Definition~4.2. of the main article. The proof can be divided into three parts. First, we proof that for all $j \in J_{P, \Kbb}: \: \sup_{g \in G} |U^j_{(X_1, \ldots X_n)}[f_g^j] - \Ebb[f_g^j] \mid \to 0$ almost surely. Second, we show $|\frac{C_j}{U^j_{(X_1, \ldots X_n)}[h^j]} - \frac{C_j}{\Ebb[h^j]} \mid \to 0$ almost surely for every $j \in J_{P, \Kbb}$. In the last part, we combine the first two to provide the claim.

Part 1: For $j \in J_{P, \Kbb}$ consider the dual set $\tilde{\Fcal}^j$. Since $\Ical^{concl, j}_{\Kbb, \text{ufg}}$ has a finite VC-dimension, the sub-graphs of $\tilde{\Fcal}^j$ have also a finite VC-dimension. By~\cite{assouad83} we obtain that the sub-graphs of $\Fcal^j$ have also finite VC-dimension. With~\cite{arcones1993limit} we get the first part.

Part 2: Let $j \in J_{P, \Kbb}$. We use Theorem 2.3. of~\cite{christofides92} to obtain $| U^j_{(X_1, \ldots X_n)}[h^j] - \Ebb[h^j]| \to 0$ almost surely. In particular, this also implies that $U^j_{(X_1, \ldots X_n)}[h^j] $ is almost surely positive for $n$ large enough (note that $\mathbb{E}[h^j]>0$ for $j \in J_{P,\Kbb}$). 
Hence, the function $x \to 1_{x >0} 1/x $ is almost surely evaluated only at a positive argument, if $n$ is large enough. Therefore, because this function is continuous for positive arguments we obtain Part~2. 

Part 3: For $j \in J_{P, \Kbb}$ we consider the following inequality, which uses that function $h^j$ is independent of $g \in G$, a decomposition of the factors and the triangle inequality.
\begin{align*}
& \sup_{g \in G} \biggl|\frac{C_j}{U^j_{(X_1, \ldots X_n)}[h^j]} U^j_{(X_1, \ldots X_n)}[f_g^j]- \frac{C_j}{\Ebb[h^j]}\Ebb[f_g^j] \biggl| \\
\le & \sup_{g \in G} \biggl| U^j_{(X_1, \ldots X_n)}[f_g^j]\biggl|\:\biggl|\frac{C_j}{U^j_{(X_1, \ldots X_n)}[h^j]} - \frac{C_j}{\Ebb[h^j]} \biggl|
+ \biggl| \frac{C_j}{\Ebb[h^j]}\biggl|  \sup_{g \in G}\biggl|U^j_{(X_1, \ldots X_n)}[f_g^j] - \Ebb[f_g^j] \biggl|  
\end{align*}

Since the first two components of the multiplications can be bounded by above for every $g$ and every (empirical) probability measure. We obtain with Part 1 and 2 and $J_{P, \Kbb}$ being finite the claim.

\end{proof}
 
 \begin{theorem}
 	Let $\Kbb = (G, M, I)$ be a formal context and $J_{P, \Kbb}$ be finite with $\sup\{j \mid P \in \mathcal{P}, j \in J_{P, \Kbb}\} = K \in \Nbb$. Additionally, we assume that $\Kbb$ satisfies the following further conditions:  
 	\begin{enumerate}
 		\item[(A1)] For all $A \subseteq G$ and all $B \subseteq G\setminus \gamma(A)$ with $B$ finite there exists an extent $S \subseteq G\backslash B$ such that for all $g \in A$ there exists an ufg-premise $U \subseteq S$ with $U \rightarrow \{g\}$ .
 		\item[(A2)] There exists $L > 0$ such that  $\frac{\max_{j \in J_{P, \Kbb}} C_j / \mathbb{E}[h^j]}{\min_{j \in J_P} C_j / \mathbb{E}[h^j]} \le L$ for every $P \in\mathcal{P}$.
 	\end{enumerate}	 
 	Then \underline{$D^{qc}$ is approximately weakly free w.r.t. quasiconcavity} in the following sense: For every quasiconcave depth function $E$ on $\Kbb$, for every probability measure $P \in \mathcal{P}$ and every finite $\tilde{G} \subseteq G$ there exists a probability measure $P^*$ on $G$ (with finite support) such that for all $g, \tilde{g} \in \tilde{G}$ we have
 	\begin{align}\label{ordering}
 		E(g, \Kbb, P) > E(\tilde{g}, \Kbb, P) \Rightarrow (D_{|\tilde{G}})^{qc}(g, \Kbb, P^*) > (D_{|\tilde{G}})^{qc} (\tilde{g}, \Kbb, P^*).
 	\end{align}
 \end{theorem}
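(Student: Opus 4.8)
The plan is to reduce the statement to a purely combinatorial ordering condition on the raw ufg-depth and then realize that condition with an explicitly constructed, finitely supported $P^*$. First I would record the reduction. Fix a quasiconcave $E$ and a measure $P$. Since $\tilde{G}$ is finite, $E(\cdot,\Kbb,P)$ takes finitely many values $\alpha_1 > \cdots > \alpha_m$ on $\tilde{G}$, and because $E$ is quasiconcave each contour set $F_j := Cont_{E,\alpha_j}$ is an extent of $\Kbb$, with $F_1 \subseteq \cdots \subseteq F_m$ and $F_m \supseteq \tilde{G}$. Write $L_j := (F_j\setminus F_{j-1})\cap\tilde{G}$ for the layers. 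I would then show it suffices to choose $P^*$ so that the raw ufg-depth is \emph{layer-monotone} on $\tilde{G}$, i.e.\ $\min_{g\in F_j\cap\tilde{G}} D(g,\Kbb,P^*) > \max_{g\in\tilde{G}\setminus F_j} D(g,\Kbb,P^*)$ for every $j$. Indeed, for a threshold $\alpha$ in the gap between two consecutive layer values the contour $Cont_{D,\alpha}\cap\tilde{G}$ equals $F_j\cap\tilde{G}$, and since $F_j$ is an extent its trace closure ($\gamma(F_j\cap\tilde{G})\cap\tilde{G}$) is again $F_j\cap\tilde{G}$; hence the contour sets of $(D_{|\tilde{G}})^{qc}$, which by the construction of $D^{qc}$ (cf.\ Theorem~\ref{th:tildeDqc}) are the closures of those of $D$, separate the layers, giving $(D_{|\tilde{G}})^{qc}(g) > (D_{|\tilde{G}})^{qc}(\tilde{g})$ whenever $E(g) > E(\tilde{g})$.

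Next I would build $P^*$ blockwise using (A1). For each $k$ set $A_k := F_k\cap\tilde{G}$ and $B_k := \tilde{G}\setminus F_k$; since $A_k\subseteq F_k$ is an extent we have $\gamma(A_k)\subseteq F_k$, so $B_k$ is a finite subset of $G\setminus\gamma(A_k)$ and (A1) yields an extent $S_k\subseteq G\setminus B_k$ together with ufg-premises $U_g^{(k)}\subseteq S_k$ satisfying $U_g^{(k)}\to\{g\}$ for every $g\in A_k$. The crucial bookkeeping fact is that any ufg-premise $U\subseteq S_k$ has conclusion $\gamma(U)\subseteq\gamma(S_k)=S_k$, which is disjoint from $B_k$; thus placing mass on the points of $T_k:=\bigcup_{g\in A_k} U_g^{(k)}$ raises, through premises internal to block $k$, the depth of every object of $F_k\cap\tilde{G}$ while leaving the objects of $\tilde{G}\setminus F_k$ untouched. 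I would then set $P^*=\sum_{k=1}^m \lambda_k\mu_k$ with $\mu_k$ spread over $T_k$ and strongly separated weights $\lambda_1\gg\cdots\gg\lambda_m>0$ (deeper blocks heavier). A leading-order expansion of $\mathbb{E}[f_g^j]$ in the $\lambda_k$ then shows that the dominant contribution to $D(g,\Kbb,P^*)$ for $g\in L_{j_0}$ comes from the pure block-$j_0$ premise $U_g^{(j_0)}$: a block $k<j_0$ cannot reach $g$ because $g\in B_k$, and shallower blocks carry smaller mass, so deeper objects receive asymptotically larger depth. Here (A2) keeps this comparison valid after normalization, since it bounds $\frac{\max_j C_j/\mathbb{E}[h^j]}{\min_j C_j/\mathbb{E}[h^j]}$ uniformly, so that the $P^*$-dependent factors $C_j/\mathbb{E}[h^j]$ attached to the different cardinalities $j\le K$ stay comparable and no single cardinality term can distort the order; finiteness of $J_{P,\Kbb}$ (bounded by $K$) keeps the expansion to finitely many terms.

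The hard part is controlling the \emph{mixed} premises, i.e.\ ufg-premises whose points come from several blocks: their conclusions need not be confined to a single $S_k$, so in principle a premise built from two deep (heavy) blocks could imply a shallow object and over-boost it, inverting the order. To handle this I would exploit the freedom in choosing the avoidance set $B_k$ in (A1): since only premises of size at most $K$ matter and $\tilde{G}$ is finite, there are only finitely many shallow objects to protect at each stage, so I would enlarge each $B_k$ by the finitely many already-placed support points and objects that must not be reached, arranging the extents into a confinement chain $S_1\subseteq S_2\subseteq\cdots\subseteq S_m$ (up to replacing each $S_k$ by an intersection of extents). With this confinement, any premise implying an object of layer $L_{j_1}$ necessarily uses a point from a block of index $\ge j_1$, hence carries at least one factor $\le\lambda_{j_1}$; a short estimate then bounds every spurious contribution to a shallow object by a strictly higher power of the small separation parameter than the legitimate contribution to any deeper object, so that layer-monotonicity survives for all sufficiently separated $\lambda_k$. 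Choosing one such weight vector produces the required $P^*$, and by the reduction of the first paragraph this yields $E(g,\Kbb,P)>E(\tilde{g},\Kbb,P)\Rightarrow (D_{|\tilde{G}})^{qc}(g,\Kbb,P^*)>(D_{|\tilde{G}})^{qc}(\tilde{g},\Kbb,P^*)$. I expect the confinement/mixed-premise control to be the main obstacle, with the normalization control via (A2) a close second.
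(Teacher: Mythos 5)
Your construction is, in all essentials, the paper's own proof. The paper also processes the layers of $E$ from deepest to shallowest, at stage $i$ applying (A1) with the cumulative set $A_{(i)}=G_{(i)}\cup\dots\cup G_{(k)}$ and avoidance set $B_{(i)}=\tilde{G}\setminus A_{(i)}$ (your $A_k=F_k\cap\tilde{G}$, $B_k=\tilde{G}\setminus F_k$), places mass $p_{(i)}$ on the union of the resulting premise points, and enforces your ``strongly separated weights'' explicitly via $p_{(j)}^{K}>c\cdot p_{(i)}$ for $i<j$ with $c=L\cdot\#\mathcal{U}\cdot k$, which is where (A2) enters, exactly as in your normalization control. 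The lower bound $\frac{C_j}{\mathbb{E}[h^j]}p_{(i)}^{K}$ for a layer-$i$ object and the upper bound $\max_j\frac{C_j}{\mathbb{E}[h^j]}\#\mathcal{U}\cdot k\cdot p_{(i)}$ then give your layer-monotonicity, and the transfer to $(D_{|\tilde{G}})^{qc}$ is the same argument you give: the paper assumes $(D_{|\tilde{G}})^{qc}(g_i)\le (D_{|\tilde{G}})^{qc}(g_j)$ for a deeper $g_i$, extracts an implication $A\to g_j$ with $A$ contained in strictly deeper layers, and contradicts quasiconcavity of $E$; your observation that $\gamma(F_j\cap\tilde{G})\cap\tilde{G}=F_j\cap\tilde{G}$ because $F_j$ is an extent is the contrapositive of the same fact.

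The one place you diverge is your ``hard part,'' and there your diagnosis is sharper than your cure. The concern is real: the paper's upper bound for $g\in G_{(i)}$ tacitly assumes that every positive-probability ufg-premise implying $g$ carries a point of mass at most (a multiple of) $p_{(i)}$, and the only confinement the construction actually delivers is that a premise lying wholly inside a \emph{single} $S_{(\ell)}$ with $\ell>i$ has $\gamma(W)\subseteq S_{(\ell)}$ and so cannot imply $g$; a premise mixing two deep blocks is contained in no single $S_{(\ell)}$, and the paper's written proof says nothing to exclude that its closure catches a shallow object with mass $\approx p_{(k)}^2\gg p_{(i)}$. But your proposed repair does not go through as sketched. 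Replacing $S_k$ by the intersection $\bigcap_{\ell\ge k}S_\ell$ restores nesting yet need not contain the premises $U_g^{(k)}$, since the deeper extents were chosen with no regard to the shallower stage's premises; and re-applying (A1) with an enlarged avoidance set is blocked by the hypothesis $B\subseteq G\setminus\gamma(A)$: once you try to fold the already-placed support into $A$ so that the new extent contains it, $\gamma(A)$ may already contain precisely the shallow objects you need to put into $B$ (closures of unions of extents are not unions of closures -- think of the convex hull of two disjoint convex sets in $\mathbb{R}^2$). So the confinement chain $S_1\subseteq S_2\subseteq\dots\subseteq S_m$ is not obtainable from (A1) alone; it would follow from extra structure such as nested-or-disjoint extents (the hierarchical-nominal case) or from affine-independence tricks in $\mathbb{R}^d$. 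As it stands this step is a genuine gap in your argument -- one you share with the paper, whose upper bound presupposes exactly the confinement property you tried, commendably, to prove.
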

 
 \begin{proof}
 	The proof is divided into two parts. First we define the probability measure $P^*$ and show that the depth values $D(g)$ of the objects in $\tilde{G}$ w.r.t. the original ufg-depth $D$ satisfy (\ref{ordering}). In the second step, we consider the depth values of the quasiconcave version $D^{qc}$ of $D$ and show that they also fulfill property (\ref{ordering}).\vspace{1em}
 	
 	\underline{Part 1:} Let $e_{(1)}, \ldots, e_{(k)}$ be the increasingly ordered (unique) depth values of $E(\tilde{G}, \Kbb, P)$ and let $G_{(i)} = \{g \in \tilde{G} \mid E(g, \Kbb, P) = e_{(i)}\}$.
 	
 	Now, we go step by step through the layers $G_{(i)}$ given by $E$. First we set $P^{(k+1)}= 0$ for all $g \in G$ and will modify it in the following process. \vspace{1em}
 	
 	\textit{Step 1:} We start with layer $G_{(k)}$ corresponding to the highest value $e_{(k)}$. 
 	We set $A_{(k)} = G_{(k)}$ and $B_{(k)} = \tilde{G} \setminus A_{(k)} \subseteq G \setminus A_{(k)}$. Note that $B_{(k)}$ is finite.
 	Due to assumption (A1) there exists an extent $S_{(k)} \subseteq G \setminus B_{(k)}$ such that for all $g \in A_{(k)}$ there exists an ufg-premise $U_g \subseteq S_{(k)}$ with $U_g \rightarrow g$. Since $S_{(k)}$ is an extent, we know that no element of $B_{(k)}$ is implied by $U_g$. We set $U_{(k)} = \bigcup_{g \in A_{(k)}} U_g$, $\mathcal{U}_{(k)} = \{U_g \mid g \in A_{(k)}\}$. 
 	\vspace{1em}

 	\textit{Step 2:} Now, we proceed with $G_{(k-1)}$. Similar to Step~1 we set $A_{(k-1)} = G_{(k)} \cup G_{(k-1)}$ and $B_{(k-1)} = \tilde{G} \setminus A_{(k-1)}$. Again, Assumption (A1) provides us a set $S_{(k-1)} \subseteq G \setminus B_{(k-1)}$ such that for every $g \in A_{(k-1)}$ exists an ufg-premise $U_g \subseteq S_{(k-1)}$ with $U_g \rightarrow g$ but $U_g$ implies no element in $B_{(k-1)}$. Note that for every such $U_g$ there exists at least one $u \in U_g$ with $u \not\in U_{(k)}$, because otherwise we would have $U_g \rightarrow g$. 
 	We set $U_{(k-1)} = \bigcup_{g \in A_{(k-1)}} U_g$, $\mathcal{U}_{(k-1)} = \{U_g \mid g \in A_{(k-1)}\}$.
 	\vspace{1em}
 	
 	\textit{Step 3 to $k-1$}. We proceed similar until we defined $\mathcal{U}_{(1)}$ and $A_{(1)}$.\vspace{1em}

 	\underline{Part 2:}
 	Now, we set $U = \bigcup_{i = 1}^k U_{(i)}$, $\mathcal{U} = \bigcup_{i = 1}^k \mathcal{U}_{(i)}$ and $c:= L\cdot \# \mathcal{U} \cdot k$. Note that $c$ does not depend on the probability measure $P^*$ that we will now construct. Based on the above definitions, we can now define again successively the probability measure $P^*$ which will have support $U$:\vspace{1em}
 	
 	\textit{Step a:}
 	We define $p_{(k)}$ and $P^{(k)}$ corresponding to the contour set of $E$ having the highest depth value $e_{(k)}$.
 	\begin{align*}
 		&0 < p_{(k)} \le  \frac{1}{\# U \cdot k}\\
 		&P^{(k)}(g) := \begin{cases} p_{(k)}  , &g \in U_{(k)} \\
 			0 &else \end{cases}
 	\end{align*}
 	
 	\textit{Step b:}
 	We define $p_{(k-1)}$ and $P^{(k-1)}$ corresponding to the contour set of $E$ having the second highest depth value $e_{(k)}$.
 	\begin{align*}
 		&0 < p_{(k-1)} < p_{(k)} \text{ such that } p_{(k)}^K > c  \cdot p_{(k-1)} \\
 		&P^{(k-1)}(g) := \begin{cases} p_{(k-1)} , &g \in U_{(k-1)}\\
 			0, &else \end{cases}
 	\end{align*}
 	
 	\textit{Step c and on:} Analogously to Step~b we define  $p_{(i)}$ and $P^{(i)}$ for $i \in \{1, \ldots k-2\}$. Note that also for arbitrary $i,j$ with $i<j$ we have $p_{(j)}^K > c\cdot  p_{(i)}$. 
 	\vspace{1em}
 	
 	Finally, we set $P(g) = \sum_{i =1}^k P^{(i)}(g)$ for all $g \in U$ and with this we obtain that 
 	\begin{align*}
 		0 <   \sum_{g \in U} P(g) \le \sum_{g \in U}\left( \sum_{i = 1}^k  P^{(i)}(g)\right)\le \sum_{g \in U}\left( \sum_{i = 1}^k  \frac{1}{\# U \cdot k}\right)\le 1.
 	\end{align*}
 	Thus, we get $0 \le \varepsilon := 1 - \sum_{g \in U} P(g) \le 1$ and with this we can now define the probability measure $P^*$:
 	\begin{align*}
 		P^*(g) = \begin{cases} P(g) + \frac{\varepsilon}{\# U_{(k)}}, &g \in U_{(k)} \\
 			P(g), & else\end{cases}.
 	\end{align*}
 	With this, $P^*$ defines a probability measure on $G$.\vspace{1em}
 	
 	Let us take a look at the upper and lower bounds of the ufg-depth $D(g, \Kbb, P^*)$ for $g \in \tilde{G}$. Let $i \in  {{\{1,\ldots, k}}\}$ and $g \in G_{(i)}$. We know that there exists at least one ufg-premise $U$ in {$\mathcal{U}_{(i)}$} 
 	with maximal cardinality $K$ and with $U\rightarrow \{g\}$. Thus, we have as lower bound for $g \in {G}_{(i)}$
 	\begin{align*}
 		D(g, \Kbb, P^*) \ge {{\frac{C_i}{\mathbb{E}\left[h^i\right]}}} p_{(i)}^{K}
 	\end{align*}
 	
 	For $i \in\{1,\ldots, k-1\}$ and $g \in G_{(i)}$, we get as an upper bound:
 	\begin{align*}
 		D(g, \Kbb, P^*) \le \max\limits_{j \in J}\frac{C_j}{\mathbb{E}[h^j]}\# \mathcal{U}\cdot k  \cdot p_{(i)}
 	\end{align*}
 	
 	With this, for $i,\ell \in\{1,\ldots,k\}$ with $i<\ell$ and $g \in G_{(i)}$ and $ \tilde{g}\in G_{(\ell)}$ we immediately get
 	\begin{align*}D(g,\mathbb{K},P^*) &\leq \max\limits_{j \in J}\frac{C_j}{\mathbb{E}[h^j]} \cdot \# \mathcal{U}\cdot k\cdot p_{(i)}\\
 		&\leq L\cdot \frac{C_\ell}{\mathbb{E}[h^\ell]} \cdot \# \mathcal{U}\cdot k \cdot p_{(i)}\\
 		&= \frac{C_\ell}{\mathbb{E}(h^\ell)} \cdot c \cdot p_{(i)}\\
 		& \leq \frac{C_\ell}{\mathbb{E}(h^\ell)} \cdot p_{\ell}^K \leq D(\tilde{g},\mathbb{K},P^*).
 	\end{align*}

 	Until now, we showed that for $g,\tilde{g} \in \tilde{G}$ we have $E(g,\Kbb,P) > E(\tilde{g},\Kbb,P)    \Longrightarrow D(g, \Kbb,P^*)  > D(\tilde{g},\Kbb,P^*)$. Now we show that the same holds for $\left(D_{\mid \tilde{G}}\right)^{qc}$:   
 		 Let $i,j \in \{1, \ldots, k\}$ with $j < i$ and let $g_i \in G_i , g_j \in G_j$. Then we know that
 	\begin{align*}
 		D(g_i, \Kbb, P^*) > D(g_j, \Kbb, P^*).
 	\end{align*}
 	Assume in contradiction that $(D_{\mid \tilde{G}})^{qc}(g_i, \Kbb, P^*) \le (D_{\mid \tilde{G}})^{qc}(g_j, \Kbb, P^*)$. Then there exists a set $A \subseteq G_{(j+1)}\cup \ldots G_{(k)}$ such that the implication $A \rightarrow g_j$ with $D(g, \Kbb, P^*) \geq D(g_i, \Kbb, P^*)$ for all $g \in A$ holds. But this implication is in clear contradiction with the quasiconcavity of $E$, because $E(g,\Kbb,P) > E(g_j,\Kbb,P)$ for all $g \in A$. 
 	 Therefore, in fact we have $ E(g,\Kbb,P) > E(\tilde{g},\Kbb,P) \Longrightarrow \left(D_{\mid\tilde{G}}\right)^{qc}(g,\Kbb,P^*) > \left(D_{\mid\tilde{G}}\right)^{qc}(\tilde{g},\Kbb,P^*) $ for all $g,\tilde{g}\in \tilde{G}$. 
 	
 \end{proof}
 
 \begin{remark}
 	In general, Assumptions $(A1)$ and $(A2)$ are very strong. Assumption $(A1)$ can be seen as a separability condition. It is satisfied, e.g., in the case of $\mathbb{R}^d$ together with the conceptual scaling discussed in Example~1 of the main article. Assumption~$(A1)$ is also true if every one-element set is an ufg-premise, because in this case one can set $U_g:=\{g\}$. For example, this condition hods for the case of hierarchical-nominal data if duplication is allowed. 
 	
 Assumption~$(A2)$ is generally difficult to satisfy. If $G$ is finite, Assumption~$(A2)$ cannot be satisfied at all. However, a slightly modified choice of the coefficients $C_j$ as $C_j:= \mathbb{E}[h^j]/(\varepsilon + \mathbb{E}[h^j]) \approx1$ with some small fixed constant $\varepsilon>0$ makes Assumption $(A2)$ satisfied. Beyond that, in the case of $\mathbb{R}^d$ with the scaling method in Example~1 of the main article and an absolutely continuous probability measure, we have that $\mathbb{E}\left[h^j\right]$ is one if for $j\in \{2,\ldots, d+1\}$ and zero otherwise. Of course, when we construct $P^*$ as in the proof, we use a discrete probability measure. Note, however, that for $\mathbb{R}^d$ one can always use points that are affine independent for the construction of the ufg-premises $U_g$ such that, independent of the probability values $p_{(i)}$, we always have $\mathbb{E}\left[h^j\right]=1$ for $j\in \{2,\ldots, d+1\}$ and zero otherwise. Note also that it is sufficient to assume that the boundedness Assumption~$(A2)$ for $P$ holds over all probability measures implicitly used in the construction of $P^*$ (which is of course a hard condition to keep track of). 
 \end{remark}

In the following we consider the formal context $\Kbb = (G, M,I)$ given by hierarchical-nominal data and the scaling method presented in Example~3 of Section~2 in the main article. Analogously we denote the attributes by $x_1x_2\ldots x_k$ with $x_i$ describing the category on level $i$ depending on the previous levels $1, \ldots, i-1$.

\begin{lemma}\label{lem: hierarchisch_ufg_premises}
	Let $\Kbb = (G, M,I)$ be a formal context given by hierarchical-nominal data with $L\ge2$ levels and at each level at least $3$ categories. We use the scaling method presented in Example~3 of Section~2 of the main article. We assume that for each object $g\in G$ there exists another object $\tilde{g} \in G$ with $g \neq \tilde{g}$ and $\Psi(\{g\}) = \Psi(\{\tilde{g}\})$. Then the ufg-premises have cardinality one or two.
\end{lemma}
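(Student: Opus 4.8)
The plan is to prove the upper bound by showing that every set $A\subseteq G$ with $\#A\ge 3$ fails Condition (C2) of the ufg\nobreakdash-premise definition, and then to check that the two admissible cases (one and two elements) do occur. The whole argument rests on the tree structure of the extents for hierarchical\nobreakdash-nominal data: since any two extents are either nested or disjoint, the extents ordered by inclusion form a rooted tree whose nodes are exactly the categories $x_1\ldots x_k$ (with $G$ at the root and $\emptyset$ at the bottom). The first thing I would record is the resulting description of the closure operator: for an object $g$, $\gamma(\{g\})$ is the finest category containing $g$, and for a general $A$ the closure $\gamma(A)$ is the smallest category containing all of $A$, i.e.\ the node $x_1\ldots x_\ell$ where $\ell$ is the length of the longest common prefix shared by the finest categories of the objects of $A$. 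Two sets then have equal closure exactly when they have the same longest common prefix, and by monotonicity $\gamma(A')\subseteq\gamma(A)$ whenever $A'\subseteq A$.

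The core step is to exhibit, for a given $A$ with $\#A\ge 3$, a single proper subset $A'\subsetneq A$ with $\gamma(A')=\gamma(A)$; the one\nobreakdash-element family $\{A'\}$ then witnesses a violation of (C2). Write $\gamma(A)$ for the category at level $\ell$. Because the common prefix of $A$ has length \emph{exactly} $\ell$, either all objects of $A$ share the same finest category (the case $\ell=L$), in which case removing any one object leaves $\gamma$ unchanged, or $\ell<L$ and the objects of $A$ occupy at least two of the children of $\gamma(A)$ at level $\ell+1$ (otherwise they would share a prefix of length $\ell+1$). In the latter case a short pigeonhole argument produces a removable object that keeps at least two children occupied: if three or more children are occupied, any object may be dropped; if exactly two are occupied then, as $\#A\ge 3$, one child contains at least two objects, and deleting one of those keeps both children nonempty. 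For the resulting $A'$ the longest common prefix is still exactly $\ell$, hence $\gamma(A')=\gamma(A)$ and (C2) fails. Note that duplicates cause no trouble, since objects sharing a finest category simply lie in the same child. This rules out every ufg\nobreakdash-premise of cardinality at least three.

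Finally I would dispose of the boundary cases. The empty set is excluded by (C1): $\gamma(\emptyset)=\Phi(\Psi(\emptyset))=\Phi(M)=\emptyset$, because with at least two categories on each of the $L\ge2$ levels no object carries attributes from more than one branch, so no object has all attributes. Singletons $\{g\}$ do qualify: the duplication hypothesis furnishes $\tilde g\neq g$ with $\Psi(\{g\})=\Psi(\{\tilde g\})$, hence $\tilde g\in\gamma(\{g\})$ and (C1) holds, while (C2) is automatic since the only proper subset is $\emptyset$ with $\gamma(\emptyset)=\emptyset\neq\gamma(\{g\})$. Together with the fact that a pair of objects lying in distinct finest categories satisfies both conditions, this shows the ufg\nobreakdash-premises are precisely the sets of cardinality one or two.

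I expect the main obstacle to be conceptual rather than computational: getting the structural description of $\gamma$ as a lowest\nobreakdash-common\nobreakdash-ancestor operator firmly in place (this is exactly where the nested\nobreakdash-or\nobreakdash-disjoint property of the extents is used), and then phrasing the pigeonhole step so that a removable object is guaranteed in every configuration, including the degenerate case $\ell=L$ and the presence of duplicates. Once the tree structure is invoked, each individual step is elementary.
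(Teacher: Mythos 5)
Your proof is correct and takes essentially the same route as the paper's: both establish the upper bound by exhibiting, for any $A$ with $\#A \ge 3$, a proper subset of $A$ with the same closure and hence a violation of (C2) (the paper selects two objects of $A$ whose common prefix equals that of $A$, you instead delete one object via a pigeonhole argument on the children of the lowest common ancestor --- the same idea), and both obtain singleton premises from the duplication hypothesis. Your explicit treatment of the degenerate case where all of $A$ shares the same finest category, and of the empty set, is in fact slightly more careful than the paper's wording, while the pair case that you leave as an unproved ``fact'' is argued only equally briefly in the paper (both rely on the implicit assumption that sibling categories under the common ancestor are populated, which is what the hypothesis of at least $3$ categories per level is there for).
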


\begin{proof}
	First, we show that there are ufg-premises of cardinality one and two. Let $g \in G$. By assumption there exists an object $\tilde{g} \in G$ with $g \neq \tilde{g}$ and $\Psi(\{g\}) = \Psi(\{\tilde{g}\})$. So $g \to \gamma(\{g\}) \supseteq \{g, \tilde{g}\}$ is a ufg-implication. Since we have at least two levels, we know that there are two objects $\tilde{g} \in G$ with $\Psi(\{g\}) \neq \Psi(\{\tilde{g}\})$ such that they differ at least at Level $L$. Thus the set $\{g, \tilde{g}\}$ implies all objects that can also be sorted into the $x_1, \ldots x_k$ categories with $k < L$. Thus $\{g, \tilde{g}\}$ is union-free, and since it obviously cannot be reduced without also reducing the conclusion, it is also generic.
	
		Finally, we show that every set $\{g_1, \ldots, g_n\}$ for $n\ge 3$ is not an ufg-premise.
	Let $\{g_1, \dots, g_n\} \to B$ be a valid implication given by $\Kbb$. Then $B$ is a subset of $\Phi\circ\Psi(\{g_1, \ldots, g_n\})$. By constructing the attributes in Example~3 of Section~2 of the main article, we get that $\Psi(\{g_1, \ldots, g_n\})$ are exactly the attributes describing the first $k$ level categories on which all objects $g_1, \ldots, g_n$ agree. These attributes can also be described by only two objects $g_i, g_j$ with $i,g \in \{1, \ldots, n\}$, namely by two objects $g_i,g_j$ that agree up to level $k$, but that disagree on level $k+1$. Therefore, the implication $\{g_i, g_j\} \to B$ is also valid and thus, the implication $\{g_1, \ldots, g_n\}\to B$, is not an ufg implication since the premise is not minimal. 
\end{proof}

\begin{theorem}
	Let $\Kbb = (G, M,I)$ be a formal context given by hierarchical-nominal data with $L\ge2$ levels, $K\geq 3$ categories on each level, and the scaling method presented in Example~3 of Section~2 of the main article.  We assume that for each object $g\in G$ there exists another object $\tilde{g} \in G$ with $g \neq \tilde{g}$ and $\Psi(\{g\}) = \Psi(\{\tilde{g}\})$.
	
	 We set $C_1,C_2>0$ in the ufg-depth definition. Then the quasiconcave version \underline{$D^{qc}$ of the} \underline{ufg-depth is strongly free with respect to the property quasiconcavity}. This means that for every $\varepsilon >0$ there exists a family $\mathcal{P}^\varepsilon $ of probability measures with diameter\footnote{The diameter of a family $\mathcal{P} $ of probability measures on a measurable space $(G,\Sigma)$ is defined as $diam(\mathcal{P}):= \sup\limits_{A\in \Sigma; P,Q\in \mathcal{P}} | P(A) - Q(A)|$} less than or equal to $\varepsilon$ such that for any other arbitrary quasiconcave depth function $E$ and any arbitrary probability measure $P$ there exists a measure $P^* \in \mathcal{P}^\varepsilon$ such that
	$$\forall g,\tilde{g} \in G: E(g,\mathbb{K},P) >E(\tilde{g},\mathbb{K},P) \Longrightarrow D^{qc}(g,\mathbb{K},P^*) >D^{qc}(\tilde{g},\mathbb{K},P^*).$$
\end{theorem}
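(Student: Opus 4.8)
The plan is to exploit the very rigid combinatorial structure of hierarchical-nominal data, on which both the ufg-depth and every quasiconcave function become essentially one-dimensional objects indexed by depth in the category tree, and then to realize any prescribed quasiconcave ordering by an arbitrarily small perturbation of one fixed reference measure.

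\medskip
\noindent\emph{Step 1 (structure and an explicit formula for $D$).} By Lemma~\ref{lem: hierarchisch_ufg_premises} the ufg-premises are exactly the singletons and the two-element sets $\{x,y\}$ whose objects lie in different finest categories, so $J_{P^*,\Kbb}=\{1,2\}$ and only $C_1,C_2$ matter. I identify the extents with the nodes of the category tree: the extent generated by a node $v$ is the set $\mathrm{sub}(v)$ of all objects refining $v$, and any two extents are nested or disjoint. Since every object has a duplicate, each $\{x\}$ is a premise ($\{x\}\subsetneq\gamma(\{x\})$), whence $\mathbb{E}[h^1]=1$ and $\mathbb{E}[f^1_g]=P^*(\mathrm{leaf}(g))$; moreover $\gamma(\{x,y\})=\mathrm{sub}(v)$ for the divergence node $v$ of $x,y$, so with $q(v)=\sum_{c\neq c'}P^*(\mathrm{sub}(c))P^*(\mathrm{sub}(c'))$ over the children $c,c'$ of $v$ one gets $\mathbb{E}[f^2_g]=\sum_{\ell=0}^{L-1}q(v_\ell^g)$, where $v_\ell^g$ is the level-$\ell$ ancestor of $g$. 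Hence
$$D(g,\Kbb,P^*)=C_1\,P^*(\mathrm{leaf}(g))+\frac{C_2}{\mathbb{E}[h^2]}\sum_{\ell=0}^{L-1}q(v_\ell^g),$$
which is constant on each finest category (the attribute-invariance property). A quasiconcave $E$ has nested extents as contour sets, i.e. a chain of nodes along one branch $B^*=(w_1,\dots,w_L)$; its level sets are the shells $G_{(d)}=\mathrm{sub}(w_d)\setminus\mathrm{sub}(w_{d+1})$ indexed by $d=\mathrm{div}(g)$, the depth of the deepest branch node above $g$, and $E$ takes at most $L+1$ values. It therefore suffices to produce $P^*$ for which $D^{qc}$ strictly separates the shells in the order $G_{(L)},G_{(L-1)},\dots,G_{(0)}$, since this refines every strict inequality of $E$.

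\medskip
\noindent\emph{Step 2 (a hull lemma on trees).} I would record that on this tree, if $D$ is constant on finest categories and its per-shell maxima $\alpha_d:=\max_{g\in G_{(d)}}D(g)$ are strictly decreasing outward, then $D^{qc}$ is constant on each shell with value $\alpha_d$. Indeed $D^{qc}(g)=\sup\{\alpha: g\in\gamma(Cont_{D,\alpha})\}$, and $\gamma(Cont_{D,\alpha})$ is the subtree of the deepest common ancestor of $\{D\ge\alpha\}$; as $\alpha$ decreases past $\alpha_d$ the first $G_{(d)}$-object enters, its ancestor $w_d$ becomes that common ancestor, and the hull jumps exactly to $\mathrm{sub}(w_d)$, so all of $G_{(d)}$ receives $D^{qc}=\alpha_d$. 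Strict ordering of the $\alpha_d$ rules out any premature jump, hence any collapse of shells.

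\medskip
\noindent\emph{Step 3 (small perturbation of a fixed reference).} Fix $\mu$ uniform on one representative per finest category, so $D(\cdot,\Kbb,\mu)$ is constant by symmetry, and set $P^*=\mu+\varepsilon'\nu_{B^*}$, where $\nu_{B^*}$ adds mass $s$ to the representative of the branch leaf $w_L$ and removes $s/N$ from every representative ($N$ the number of finest categories), so $\|\nu_{B^*}\|_{TV}<s$. Expanding $D$ to first order in $\varepsilon'$ — and noting that the factor $\mathbb{E}[h^2]^{-1}$ contributes only a $g$-independent constant — shows that the directional derivative of $D$ at $\mu$ along $\nu_{B^*}$ depends only on $d=\mathrm{div}(g)$; its consecutive increments are $C_1 s>0$ for the top step $d=L$ and a positive multiple of $K^{-(d+1)}$ for $1\le d\le L-1$. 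Thus $D(\cdot,\Kbb,P^*)$ is, to leading order, strictly increasing in $d$ and constant on each shell, and since the inter-shell gaps are $\Theta(\varepsilon')$ while the intra-shell spread is $O(\varepsilon'^2)$, for $\varepsilon'$ small its per-shell maxima are strictly ordered. By Step 2, $D^{qc}$ then separates the shells as required.

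\medskip
\noindent\emph{Step 4 (assembling the family and the obstacle).} Taking $\mathcal P^\varepsilon=\{\mu+\varepsilon'\nu_{B^*}: B^*\text{ a root-to-leaf branch}\}$ with $\varepsilon' s<\varepsilon/2$, every member lies within $\varepsilon/2$ of $\mu$ in total variation, so $\mathrm{diam}(\mathcal P^\varepsilon)\le\varepsilon$; for given $(E,P)$ one extends $E$'s (possibly internal) branch to a leaf and selects the matching $P^*$. This gives the claimed strong freeness. The genuinely delicate point — and exactly what distinguishes the ufg-depth from the generalized Tukey depth, which is not strongly free — is the smallness constraint: each branch must be realized by a perturbation far too small to render off-branch quantities negligible. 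The computation in Step~3 is what resolves it, since the perturbation acts on $D$ as a \emph{strictly monotone function of tree-depth alone}, and the quasiconcave hull then snaps $D^{qc}$ onto the shells, so that mutually $\varepsilon$-close measures already reproduce every quasiconcave order. Only minor care is needed for degenerate cases (an $E$ with fewer than $L+1$ distinct values, or unoccupied categories), which merely relax the constraints.
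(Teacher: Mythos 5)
Your proposal is correct and follows essentially the same route as the paper's proof: both exploit that the ufg-premises have cardinality at most two (the paper's Lemma on hierarchical-nominal premises), fix a uniform-type reference measure, and shift a small amount of mass onto the finest category heading $E$'s chain of contour sets, so that by the symmetry of the hierarchy the ufg-depth becomes constant on the shells and strictly increasing along the branch, hence respects $E$'s order, is itself quasiconcave, and therefore coincides with $D^{qc}$. The remaining differences are presentational rather than substantive: the paper verifies the shell ordering by an exact computation over singleton and pair premises and takes $\mathcal{P}^\varepsilon$ to be a ball of near-uniform measures, whereas you use a first-order perturbation expansion, a finite family indexed by root-to-leaf branches, and a hull lemma for $D^{qc}$ that is in fact dispensable, since by the same symmetry the perturbed depth is exactly (not just to first order) constant on each shell.
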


\begin{proof}
	First, we introduce some notation for simplicity. By assumption, we have on the finest category-level $N = K^L$ categories. Moreover, due to the scaling method, we can divide $G$ into $N$ subsets $G_1, \ldots, G_{N}$, where  each $G_i$ corresponds to a set of objects with the same category on the finest level. This means that $G_i \cap G_j = \emptyset$ for different $i,j \in \{1, \ldots, N\}$ and $G_1 \cup \ldots \cup G_N = G$ is true. In particular, there exists an attribute $x_i= {(x_{i})}_1{(x_{i})}_2\ldots {(x_{i})}_N$ such that $G_i = \Phi(\{x_i\})$. So $G_i$ is an extent and for all $g \in G_i, \: \gamma(\{g\}) = G_i$ is true.

	 Let $\varepsilon >0$. We define $$\mathcal{P}^\varepsilon := \left\{P \text{ probability measure} \mid \forall g \in G : P(\{g\}) \in \left[\frac{1/N-\varepsilon/(2N)}{\# \gamma(\{g\})}, \frac{1/N +\varepsilon/(2N)}{\# \gamma(\{g\})}\right]\right\}.$$ Note that $\mathcal{P}^\varepsilon$ has a diameter smaller than or equal to $\varepsilon$. Let $E$ be a quasiconcave depth function. This means that the contour sets are extents. In addition, the contour sets are nested due to their construction. For now on let $\Phi(\{y_1y_2\ldots y_K\})$ be the objects with the largest depth value with respect to $E$. Then the contour sets of $E$ are a subset of the extents $G \supseteq\Phi(\{y_1\})\supseteq \Phi(\{y_1y_2\})\supseteq \ldots \supseteq\Phi(\{y_1y_2\ldots y_K\})$, where $\Phi(y_1\ldots y_K\}$ is the contour set of the objects with the highest depth. Note that $\Phi(\{y_1y_2\ldots y_K\})$ is equal to one of the set $G_i$ corresponding to a division on the finest level. W.l.o.g. we set $\Phi(\{y_1y_2\ldots y_K\}) = G^* = G_N$.
	 
	 Now, we construct $P^* \in \mathcal{P}$ 
	 \begin{align*}
	 	P^*(\{g\})\begin{cases} \frac{1/N + \varepsilon/(2N)}{\# G^*}, & \text{if } g \in G^* \\ \frac{1/N- \varepsilon/(2N\cdot(N-1))}{\# \gamma(g)}, &\text{else} \end{cases}.
	 \end{align*}
	 
	 In the following we show that $D(\cdot, \Kbb, P^*)$ provides the same order as $E$. So we set $y_0 = \emptyset$ and show that for every $\ell \in \{0, \ldots, L-1\}$ and every $g_1 \in \Phi(\{y_1y_2\ldots y_i\}) \setminus \Phi(\{y_1y_2\ldots y_{i+1}\})$ and $g_2 \in \Phi(\{y_1y_2\ldots y_{i+1}\})$, $D(g_1, \Kbb, P^*) < D(g_2, \Kbb, P^*)$ is true. Let $\ell \in \{0, \ldots, L-1\}$ be arbitrary. To obtain the depth function, we need to discuss the ufg implications. From Lemma~\ref{lem: hierarchisch_ufg_premises}, we know that the ufg-premises have either cardinality one or two. 
	 
	\textit{Part 1 - ufg-premises of cardinality one:} Each implication $g \to \gamma(\{g\})$ implies only those objects which have exactly the same attributes. By the definition of $G_1, \ldots, G_N$ there exists $i \in \{1, \ldots, N\}$ such that $g \in G_i$ and $\gamma(\{g\}) = G_i$. This gives us
	\begin{align*}
		&\frac{C_1}{\mathbb{E}[h^1]}\mathbb{E}[f_{g}^1] = \frac{C_1}{\mathbb{E}[h^1]}\mathbb{E}[1_{\gamma(A)}(g)1_{\Kbb^{prem, 1}_{ufg}}(A)] = \begin{cases}  1/N + \varepsilon/(2N), & \text{ if} g  \in G^* \\ 1/N- \varepsilon/(2N\cdot(N-1)), &\text{else} \end{cases}.
	\end{align*}
	
	\textit{Part 2 - ufg-premises of cardinality two:} Let $G_i$ and $G_j$ be such that their corresponding attributes differ for at least on the finest level $L$ (see proof of Lemma~\ref{lem: hierarchisch_ufg_premises}). Then for every $g_i \in G_i$ and $g_j \in G_j$ we have that $\{g_i, g_j\}$ defines an ufg-premise. In particular, with this procedure we obtain all possible ufg-premises of cardinality two, see the proof of Lemma~\ref{lem: hierarchisch_ufg_premises}. We denote all these pairs by $\mathcal{G}^2$. Due to symmetry (on each level we have exactly the same number of categories) we obtain that the number of pairs $G_i,G_j \in \mathcal{G}^2$ such that $g \in \gamma(G_i\cup G_j)$ is the same for all $g \in G$. Let $g \in G$, then we get
	\begin{align*}
		&\frac{C_2}{\mathbb{E}[h^2]}\mathbb{E}[f_{g}^2] = \frac{C_2}{\mathbb{E}[h^2]}\mathbb{E}[1_{\gamma(A)}(g)1_{\Kbb^{prem, 2}_{ufg}}(A)] \\
		&= \frac{C_2}{\mathbb{E}[h^2]}\Big[\mathbb{E}[1_{\gamma(A)}(g)1_{\Kbb^{prem, 2}_{ufg}}(A)1_{\{A \subseteq G\setminus G^*\}}] \\
		& \qquad \qquad\qquad\qquad + \frac{C_2}{\mathbb{E}[h^2]}\mathbb{E}[1_{\gamma(A)}(g)1_{\Kbb^{prem, 2}_{ufg}}(A)1_{\{A \cap G^* \neq \emptyset\}}]\Big]\\
		&= \frac{C_2}{\mathbb{E}[h^2]}\Big[\sum_{\substack{G_i,G_j \in\mathcal{G}^2 \text{ with } \\G_i\neq G^*\neq G_j \text{ and } \\g \in \gamma(G_i \cup G_j)}} (1/N  - \varepsilon/(2N \cdot (N-1)))^2 \\
		&\qquad\qquad\qquad\qquad + \sum_{\substack{G_i,G_j \in\mathcal{G}^2 \text{ with }\\ G_i= G^* \text{ or } G^* = G_j \text{ and }\\ g \in \gamma(G_i \cup G_j)}} (1/N  - \varepsilon/(2N \cdot (N-1))) (1/N + \varepsilon/(2N))  \Big].
	\end{align*}
	 	
If $g_1 \in G^*$ and $g_2 \in G$ the difference between the pairs is that the set of pairs $G_i,G_j$ where at least one is equal to $G^*$ is strictly larger for $g_1$ than for $g_2$. Hence, there are more pairs in the second part of the sum above. With this, we immediately get that $\frac{C_2}{\mathbb{E}[h^2]}\mathbb{E}[f_{g_1}^2] > \frac{C_2}{\mathbb{E}[h^2]}\mathbb{E}[f_{g_2}^2] $.
	 
	 	Now Part~1 and~2 together with the definition of the ufg-depth show that $E$ and $D(\cdot, \Kbb, P^*)$ give the same order of the objects $G$. So $D(\cdot, \Kbb, P^*)$ is already quasiconcave and with $D(\cdot, \Kbb, P^*) = D^{qc}(\cdot, \Kbb, P^*)$ we prove the claim.
\end{proof}

\subsection*{Claims and Proofs of Section 6 - Examples}
Section~6 of the main article discusses two concrete data examples: mixed spatial-categorical-numerical data and hierarchical-nominal data. Here we provide the proof of the claim that simplifies the calculation of the ufg depth for mixed spatial-categorical-numerical data. Therefore, we consider the special case of joined formal contexts. Let us assume that we have two formal contexts on the same object set $G$ but with two different attribute sets, $\Kbb_1 = (G, A_1, I_1)$ and $\Kbb= (G, A_2, I_2)$. Then consider the joined formal context $\Kbb = (G, A_1 \cup A_2, I_1 \cup I_2)$. Analogously we denote the derivation and closure operators. Then for this joint formal context, we get:
\begin{lemma}
Let $\Kbb_1, \Kbb_2$ and $\Kbb_{1,2}$ together with the closure operator $\gamma_1, \gamma_2$ and $\gamma_{1,2}$ be defined as in the beginning of this section. Let $G$ be the set of objects and $A  \subseteq G$. Then $\gamma_{1,2}(A) = \gamma_{1}(A) \cap \gamma_2(A)$ is true.
\end{lemma}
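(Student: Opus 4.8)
The plan is to unfold the definition $\gamma = \Phi \circ \Psi$ and track separately how each derivation operator acts in the joined context. The first and decisive step is to understand the intent operator $\Psi_{1,2}$ on $\Kbb_{1,2}$. I would argue that since the attribute set of $\Kbb_{1,2}$ is the (disjoint) union $A_1 \cup A_2$ and its incidence is $I_1 \cup I_2$, an attribute $m \in A_1$ satisfies $(g,m) \in I_1 \cup I_2$ for all $g \in A$ if and only if $(g,m) \in I_1$ for all $g \in A$ --- because $m$ is not an $A_2$-attribute, so the $I_2$-part is vacuous for it. The same holds symmetrically for $m \in A_2$. Hence $\Psi_{1,2}(A) = \Psi_1(A) \cup \Psi_2(A)$, with $\Psi_1(A) \subseteq A_1$ and $\Psi_2(A) \subseteq A_2$.

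The second step applies $\Phi_{1,2}$ to this union. Here I would use the elementary fact that the extent operator turns unions of attribute sets into intersections: an object possesses \emph{all} attributes of $\Psi_1(A) \cup \Psi_2(A)$ iff it possesses all of $\Psi_1(A)$ and all of $\Psi_2(A)$. Because $\Psi_1(A)$ consists only of $A_1$-attributes (whose incidence in the joined context is governed by $I_1$) and $\Psi_2(A)$ only of $A_2$-attributes (governed by $I_2$), one obtains $\Phi_{1,2}(\Psi_1(A)) = \Phi_1(\Psi_1(A)) = \gamma_1(A)$ and $\Phi_{1,2}(\Psi_2(A)) = \Phi_2(\Psi_2(A)) = \gamma_2(A)$. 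Putting this together,
\[
\gamma_{1,2}(A) = \Phi_{1,2}\big(\Psi_1(A) \cup \Psi_2(A)\big) = \Phi_{1,2}(\Psi_1(A)) \cap \Phi_{1,2}(\Psi_2(A)) = \gamma_1(A) \cap \gamma_2(A),
\]
which is the claim.

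The only real subtlety --- and the step I would flag as the place where care is needed --- is the assumption that the two attribute sets are genuinely distinct, i.e.\ $A_1 \cap A_2 = \emptyset$ (as is the case in the applications, where $A_1$ are the half-spaces and $A_2$ are the nominal vegetation attributes). This disjointness is what makes the decomposition $\Psi_{1,2}(A) = \Psi_1(A) \cup \Psi_2(A)$ clean, since it guarantees that no attribute carries incidence information from both relations simultaneously. If the attribute sets were allowed to overlap, one would additionally need to assume that $I_1$ and $I_2$ agree on the common attributes; under that consistency condition the same computation goes through verbatim, since then the union $I_1 \cup I_2$ still restricts on each attribute to the correct relation.
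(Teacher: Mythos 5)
Your proof is correct and follows essentially the same route as the paper's: both unfold $\gamma_{1,2} = \Phi_{1,2} \circ \Psi_{1,2}$, use the decomposition $\Psi_{1,2}(A) = \Psi_1(A) \,\dot\cup\, \Psi_2(A)$ (the paper's $\dot\cup$ silently encodes the disjointness you flag explicitly), and then split \say{has every attribute in the union} into the conjunction, which is exactly your step that $\Phi$ turns unions of attribute sets into intersections of extents. Your remark on what happens when $A_1 \cap A_2 \neq \emptyset$ is a sound addition but not needed for the paper's setting.
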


\begin{proof}
	The proof follows from
	\begin{align*}
		a \in \gamma_{1,2}(A)  &\Leftrightarrow \Psi_{1,2}(A) = \Psi_{1}(A) \dot\cup \Psi_{2}(A) \subseteq  \Psi_{1,2}(a)\\
		&\Leftrightarrow a \text{ has every attribute in } \Psi_{1}(A)  \text{ and }a \text{ has every attribute in }  \Psi_{2}(A) \\
		&\Leftrightarrow a \in \gamma_{1}(A)\cap \gamma_2(A).
	\end{align*}
\end{proof}	

\begin{lemma}\label{lem: ufg bound joined fc}
	Let $\Kbb_1, \Kbb_2$ and $\Kbb_{1,2}$ together with the closure operator $\gamma_1, \gamma_2$ and $\gamma_{1,2}$ be defined as in the beginning of this section. Let $\max\{\#A \mid A \in \Ical_{1, ufg}\} = u_1$ and $\max\{\#A \mid A \in \Ical_{2, ufg}\} = u_2$, then every $A \subseteq G$ with $\#A > u_1 + u_2$ cannot be an ufg-premise of $\Kbb$.
\end{lemma}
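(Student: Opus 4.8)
The plan is to argue by contradiction, using the preceding lemma ($\gamma_{1,2}(A)=\gamma_1(A)\cap\gamma_2(A)$) together with a Carathéodory-type reduction that turns the bound on the maximal ufg-premise cardinality into a bound on how many objects are needed to generate a given closure element. Suppose some $A\subseteq G$ with $\#A>u_1+u_2$ were an ufg-premise of $\Kbb_{1,2}$. Then $A\neq\emptyset$, and applying Condition~(C2) to the family $(A\setminus\{a\})_{a\in A}$ of maximal proper subsets yields $\bigcup_{a\in A}\gamma_{1,2}(A\setminus\{a\})\neq\gamma_{1,2}(A)$. Since each $\gamma_{1,2}(A\setminus\{a\})\subseteq\gamma_{1,2}(A)$ by monotonicity, this union is a \emph{proper} subset, so I may fix a witness $b\in\gamma_{1,2}(A)$ with $b\notin\gamma_{1,2}(A\setminus\{a\})$ for every $a\in A$. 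By the preceding lemma, $b\in\gamma_1(A)$ and $b\in\gamma_2(A)$.

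The key step is to witness each of these two memberships by few objects. Fix $i\in\{1,2\}$ and pick $A_i\subseteq A$ that is minimal with respect to inclusion among subsets satisfying $b\in\gamma_i(A_i)$. I claim $\#A_i\le u_i$. If $b\in A_i$, then $b\in\gamma_i(\{b\})$ by extensivity, so minimality forces $A_i=\{b\}$ and $\#A_i=1\le u_i$ (note $u_i\ge1$, since $u_i$ being a finite maximum means ufg-premises exist). If $b\notin A_i$, then $b\in\gamma_i(A_i)\setminus A_i$ gives $A_i\subsetneq\gamma_i(A_i)$, i.e.\ (C1); and minimality gives $b\notin\gamma_i(A_i\setminus\{a\})$ for every $a\in A_i$, hence $b\notin\gamma_i(S)$ for every proper $S\subsetneq A_i$ by monotonicity. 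Thus no family of proper subsets of $A_i$ has closures covering $b\in\gamma_i(A_i)$, which is exactly (C2) (cf.\ Lemma~\ref{lem: equivalent ufg def}). So $A_i$ is an ufg-premise of $\Kbb_i$ and therefore $\#A_i\le u_i$.

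Finally I combine the two pieces. Put $A^*:=A_1\cup A_2$, so $\#A^*\le u_1+u_2<\#A$ and hence $A^*\subsetneq A$. Monotonicity gives $b\in\gamma_1(A_1)\subseteq\gamma_1(A^*)$ and $b\in\gamma_2(A_2)\subseteq\gamma_2(A^*)$, so by the preceding lemma $b\in\gamma_1(A^*)\cap\gamma_2(A^*)=\gamma_{1,2}(A^*)$. Because $A^*\subsetneq A$ there is some $a_0\in A\setminus A^*$; then $A^*\subseteq A\setminus\{a_0\}$ and therefore $b\in\gamma_{1,2}(A^*)\subseteq\gamma_{1,2}(A\setminus\{a_0\})$, contradicting the choice of $b$. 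Hence no $A$ with $\#A>u_1+u_2$ can be an ufg-premise of $\Kbb_{1,2}$.

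I expect the reduction of the second paragraph to be the crux: it is the abstract form of the Carathéodory argument already used for $\mathbb{R}^2$ in Lemma~\ref{lem:ufg set for R^2}, now phrased purely through the ufg-premise conditions (C1) and (C2). The only technical subtlety is the existence of an inclusion-minimal generating subset $A_i$; this is automatic when $A$ is finite, and more generally whenever the closure operators $\gamma_1,\gamma_2$ are finitary, which covers all the scalings appearing in Section~\ref{sec:appl_mixed} (convex-hull, interordinal, and nominal). If one prefers to avoid this point entirely, the statement can be restricted to finite premises $A$, which is all that the depth in Definition~\ref{def: ufg depth} ever evaluates.
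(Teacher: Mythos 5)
Your proof is correct and is essentially the paper's own argument written out in full: the paper's entire proof is the single sentence that the claim ``follows directly from Lemma~\ref{lem: equivalent ufg def}'' (the witness characterization of ufg-premises), and your chain --- extract a (C2)-witness $b$, pass to inclusion-minimal subsets $A_1,A_2\subseteq A$ with $b\in\gamma_i(A_i)$, verify (C1)/(C2) to conclude these are ufg-premises of $\Kbb_1,\Kbb_2$, then contradict the choice of $b$ via $b\in\gamma_1(A^*)\cap\gamma_2(A^*)=\gamma_{1,2}(A^*)$ for $A^*=A_1\cup A_2\subsetneq A$ --- is exactly the elaboration that citation gestures at, resting on the same two ingredients (the witness lemma and the intersection formula for $\gamma_{1,2}$). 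The only soft spot is the appeal to $u_i\ge1$ in your case (a): under Definition~\ref{def: ufg family of implication} the empty set satisfies (C1) and (C2) whenever $\gamma_i(\emptyset)\neq\emptyset$, so $u_i=0$ is conceivable; however, since extensivity forces the witness of any ufg-premise with $\#A\ge2$ to lie outside $A$, case (a) can only arise when $A$ is a singleton and $u_1=u_2=0$, and in that degenerate corner the conclusion still holds by a direct check (for each $i$ either $\gamma_i(\{b\})=\{b\}$ or $\gamma_i(\{b\})=\gamma_i(\emptyset)$, so $\{b\}$ fails (C1) or (C2) in $\Kbb_{1,2}$) --- while your other caveat, the existence of inclusion-minimal generating subsets, is a genuine issue the paper also glosses over, but it is automatic for finite premises or finitary closure operators, which covers the paper's only use of this lemma.
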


\begin{proof}
	This proof follows directly from Lemma~\ref{lem: equivalent ufg def}.
\end{proof}

\begin{lemma}
	For the formal context $\Kbb_G$ with extent set given by Equation~(1) in Section 6.1. of the main article, we have for the ufg-family of implications 
	\begin{align*}
		\Ical_{ufg} \subseteq \left\{A \to B \biggl| \begin{array}{l} A \subseteq \Rbb^2 \times V \times \Rbb \text{ and } 2 \le \# A \le 4, \\
		\pi_{\Rbb}(B) = [\min\{\pi_{\mathbb{R}}(A)\},\max\{\pi_{\mathbb{R}}(A)\}], \:  \pi_{\Rbb^2}(B) = \gamma_{\Rbb^2}\circ\pi_{\Rbb}(A), \\
		\pi_{V}(B) \in \binom{V}{1}\cup V:  \pi_{V}(B) =  \pi_{V}(A) \text{ if } \#  \pi_{V}(A) = 1, \pi_{V}(B) = V \text{ else}
		 \end{array} \right\}.
	\end{align*}
\end{lemma}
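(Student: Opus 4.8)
The plan is to first read off the shape of the conclusion $B=\gamma(A)$ from the way the closure operator of the joined context decomposes, and then to bound $\#A$ by controlling, scale by scale, how many objects can be \emph{essential} in the sense of Lemma~\ref{lem: equivalent ufg def}. Iterating the closure-intersection identity $\gamma_{1,2}=\gamma_1\cap\gamma_2$ over the three component contexts gives, for every $A\subseteq G$, that $\gamma(A)$ consists of all $(x,v,e)\in G$ with $x\in\gamma_{\Rbb^2}(\pi_{\Rbb^2}(A))$, with $e\in[\min\pi_{\Rbb}(A),\max\pi_{\Rbb}(A)]$, and with $v\in\gamma_V(\pi_V(A))$, where the last set is $\pi_V(A)$ if $\#\pi_V(A)=1$ and all of $V$ otherwise. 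Projecting yields exactly the three displayed conditions on $B$. Moreover $\gamma(\{g\})=\{g\}$ for every singleton (a one-point convex hull, a single category, a degenerate interval), so condition (C1) already forces $\#A\ge 2$.

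For the upper bound I would invoke Lemma~\ref{lem: equivalent ufg def}: $A$ is an ufg-premise iff there is a witness $b=(x_b,v_b,e_b)\in\gamma(A)$ with $b\notin\gamma(A\setminus\{a\})$ for all $a\in A$. Since $\gamma(A\setminus\{a\})$ is again an intersection over the three scales, ``$b$ leaves the closure when $a$ is removed'' means that $a$ is essential for the spatial, the elevation, or the vegetation scale; writing $S,E,W$ for the corresponding sets of essential objects we have $A=S\cup E\cup W$. (The naive joined bound of Lemma~\ref{lem: ufg bound joined fc} only yields $\#A\le 3+2+2$, which is why a sharper, scale-wise count is needed.) For the spatial scale, Carath\'eodory (as in Lemma~\ref{lem:ufg set for R^2}) places $x_b$ in the hull of at most three spatial projections, and a spatially essential object must lie in every minimal generating triple, so $|S|\le 3$. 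The decisive refinement is that $|S|=3$ forces $A=S$: then $x_b$ lies in the interior of the triangle on $\tilde p_1,\tilde p_2,\tilde p_3$, and the three opposite cones $R_i=\{z:x_b\in\mathrm{conv}(\tilde p_j,\tilde p_k,z)\}$ cover $\Rbb^2$; hence any further object has its spatial location in some $R_i$, which keeps $x_b$ in the spatial hull after deleting $p_i$ and destroys the essentiality of $p_i$.

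It then remains to show that once $\#A\ge 3$ the other two scales each contribute at most one essential object. For elevation, deleting $a$ removes $e_b$ from $[\min,\max]$ only if $a$ is the unique minimizer with $e_b$ below the second-smallest elevation, or symmetrically the unique maximizer; both cannot happen at once, since that would require the second-smallest value to exceed the second-largest, which is impossible as soon as $\#A\ge 3$, so $|E|\le 1$. For vegetation (nominal scaling), deleting $a$ can drop $v_b$ only by collapsing the present categories from two to one, which requires exactly two categories each occurring once, i.e.\ $\#A=2$; hence $|W|\le 1$ whenever $\#A\ge 3$. Combining the counts: if $|S|=3$ then $\#A=3$, and otherwise $|S|\le 2$ gives $\#A\le|S|+|E|+|W|\le 2+1+1=4$. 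Together with $\#A\ge 2$ and the shape of $B$ this is the claimed inclusion.

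The main obstacle is exactly this cardinality bound, and within it the two genuinely geometric facts: that three spatially essential points already pin $x_b$ so tightly --- via the covering of $\Rbb^2$ by the opposite cones --- that no fourth object can be adjoined, and that a single witness elevation $e_b$ can render at most one of the two elevation extremes essential. By contrast, the shape of $B$ and the lower bound $\#A\ge 2$ are routine consequences of the closure decomposition.
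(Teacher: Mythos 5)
Your proof is correct, and it reaches the decisive bound $\#A\le 4$ by a genuinely different route than the paper. The paper first applies its joined-context lemma (Lemma~\ref{lem: ufg bound joined fc}) to get the crude bound $\#A\le 3+2+2=7$, and then refutes condition (C2) directly for $\#A\in\{5,6,7\}$: for \emph{every} $g\in\gamma(A)$ it builds a proper subset $A_g\subseteq A$ with $\#A_g\le 4$ and $g\in\gamma(A_g)$ --- a Carath\'eodory triple when the elevation of $g$ is already covered, otherwise an elevation-fixing point $\overline{g}$ followed by Carath\'eodory re-applied with $\overline{g}$ prescribed (so only two further spatial points are needed), plus at most one vegetation point --- whence $\bigcup_{g}\gamma(A_g)=\gamma(A)$ violates (C2). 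You argue on the dual side: fix one witness $b$ via Lemma~\ref{lem: equivalent ufg def}, decompose $A=S\cup E\cup W$ according to the scale in which each element is essential for $b$, and beat the naive count $3+1+1=5$ with the cone-covering step ($|S|=3\Rightarrow A=S$). The geometric kernel is the same in both arguments --- your covering of $\Rbb^2$ by the cones $R_i$ is precisely the planar ``Carath\'eodory with a prescribed point'' that the paper invokes implicitly as ``the geometry in $\Rbb^2$'' --- but the bookkeeping differs, and each version has its merit: the paper exhibits explicitly the covering family that contradicts (C2), while your witness count disposes of every cardinality $\ge 5$ (including infinite premises) in one stroke, with no need for the preliminary bound of $7$. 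Two compressed steps deserve a line each in a final write-up: why $x_b$ is \emph{interior} to the triangle when $|S|=3$ (it avoids all three edges because each pair-hull lies inside some $\gamma(A\setminus\{p_i\})$, and collinearity of the $\tilde p_i$ would place $x_b$ on the hull of the extreme pair), and why two vegetation-essential elements force $\#A=2$ (any third object's category would have to coincide with both leftover categories, collapsing $\pi_V(A)$ to a singleton and hence destroying essentiality).
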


\begin{proof}
	Note that for $A \subseteq G$ we have $\gamma(A) = \gamma_{\mathbb{R}^2}\circ \pi_{\mathbb{R}^2}(A) \times \tilde{V}\times [\min\{\pi_{\mathbb{R}}(A)\},\max\{\pi_{\mathbb{R}}(A)\}]$ with $\tilde{V} = \pi_V(A)$ if $\# \pi_V(A) = 1$ and $\tilde{V} = V$ else. Thus, we have to show that for every $A \subseteq G$ with $\#A = \Nbb\setminus\{2,3,4\}$ is not an ufg-premise. For $g \in G$ we have $\gamma(\{g\})= \{g\}$ which is a contradiction to (C1) in Definition~4.2. of the main article. For the upper bound, we first utilize that the formal context can be divided into three formal context $\Kbb_{\text{spatial}}, \Kbb_{elevation}$ and $\Kbb_{vegetation}$. One can easily show that the ufg-premises of these formal contexts are bounded from above by $3, 2$ and $2$. Hence, applying  Lemma~\ref{lem: ufg bound joined fc} provides us with an upper bound of $7$. To show that the maximal cardinality is $4$, we prove for cardinalities 5, 6 and 7 directly that they cannot be an ufg-premise. 
	
	So let $A\subseteq G$ with $\#A = 5$ and $g \in \gamma(A)$. Then there exists $A_1 \subseteq A$ with $\#A_1 = 3$, so that $\pi_{\Rbb^2}(g) \in \pi_{\Rbb^2}\circ\gamma(A_1)$. If $\pi_{\Rbb}(g) \in \pi_{\Rbb}\circ\gamma(A_1) = [\min\{\pi_{\mathbb{R}}(A)\}, \max\{\pi_{\mathbb{R}}(A)\}]$ let $\tilde{g}$ be another point in $A$ with a different vegetation than that in $A_1$ (if it doesn't exist, just take an arbitrary one) and we set $A_g = A_1\cup \tilde{g}$. Then $\# A_g =4$ and $g \in \gamma(A_g)$ is true. If $\pi_{\Rbb}(g) \not\in \pi_{\Rbb}\circ\gamma(A_1)$, then there exists $\overline{g} \in A$ such that $\pi_{\Rbb}(g) \in \pi_{\Rbb}\circ\gamma(A_1 \cup \overline{g})$. Now look at these four elements $A_1 \cup \overline{g}$, because of the geometry in $\Rbb^2$ (i.e. there are only two cases, either $\pi_{\Rbb^2}(\overline{g}) \not in \pi_{\Rbb^2}\circ\gamma(A_1)$ or $\pi_{\Rbb^2}(\overline{g}) \in \pi_{\Rbb^2}\circ\gamma(A_1)$), there exists a subset $A_2 \subsetneq A_1$ with $\pi_{\Rbb^2}(g) \in \pi_{\Rbb^2}\circ \gamma(A_2\cup \overline{g})$. Now we are back in uppercase, and using the uppercase argument, we can define $A_g$ with $\# A_g =4$ so that $g \in \gamma(A_g)$.
	
	This can be done for every $g \in \gamma(A)$ and we obtain a division of $\gamma(A)$ by $\cup_{g \in A} \gamma(A_g) = \gamma(A)$. This is a contradiction to the union-free condition (C2). Hence, $A$ with $\#A = 5$ cannot be an ufg-premise. 
	
	Similar one can show that $\# A = 6,7$ cannot be an ufg-premise either which gives the claim.
\end{proof}

\section{Quasiconcavity from the Perspective of Loss Functions}

In this section, we shortly outline that a quasiconcave version of a depth function $D^{qc}$ can be also defined to be the depth function that has minimal loss w.r.t.~one specific loss function. In general we define:
\begin{definition}
Let $D(\cdot, \Kbb, P)$ be a depth function on $G$ with corresponding formal context $\Kbb$ and probability measure $P$. Let $L$ be a loss function on the function space and $\mathcal{Q}$ a subset of quasiconcave functions on $G$ based on $\Kbb$. We say that a depth function $\tilde{D}(\cdot, \Kbb, P)$ is a \textit{close quasiconcave version of $D$ w.r.t.~$\Kbb, P, L$ and $\mathcal{Q}$} if and only if
\begin{align*}
	D^{qc, L}(\cdot, \Kbb, P) = \arg\min\limits_{E(\cdot, \Kbb, P)\in \mathcal{Q}} \int_G L(E(\cdot, \Kbb, P), D(\cdot, \Kbb, P)) dP.
\end{align*}
\end{definition}
Note that this definition is only well-defined when the minimum is attained and that the $D^{qc, L}(\cdot, \Kbb, P)$ is only unique except for a null set.

Let us now consider the special case of the following loss function $L: \mathbb{R} \times \mathbb{R} \to \Rbb\cup\{\infty\}, (x,y) \mapsto (+ \infty) 1_{x<y} + (x-y)1_{x \ge y}$.
Thus, when looking at the order provided by a depth function, we force that the resulting quasiconcave depth function only reorders upwards and not downwards (except for null sets). 


\begin{theorem}
	Let $D(\cdot, \Kbb, P)$ be a depth based on formal concept analysis on a finite object set $G$ and for every $g \in G$ we have $P(g)>0$. Let $\Qcal$ be the set of all quasiconcave functions on $G$ where the quasiconcavity is defined by $\Kbb$. Then $D^{qc}$ is a close quasiconcave version of $D$ w.r.t.~$\Kbb, P, L$ and $\mathcal{Q}$.
\end{theorem}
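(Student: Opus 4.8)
The plan is to recast the loss-minimization as the problem of finding the pointwise-smallest quasiconcave majorant of $D$, and then to identify that majorant with $D^{qc}$. First I would reduce the objective. Since $G$ is finite and $P(\{g\})>0$ for every $g\in G$, the integral is the finite sum $\int_G L(E,D)\,dP=\sum_{g\in G}L(E(g),D(g))\,P(\{g\})$. By the definition of $L$, a summand equals $+\infty$ whenever $E(g)<D(g)$, so any competitor $E\in\Qcal$ with finite loss must satisfy $E\ge D$ pointwise on all of $G$ (here I use $P(\{g\})>0$). On this feasible set each summand is $(E(g)-D(g))P(\{g\})$, and since $D$ and $P$ are fixed, minimizing the loss over $\{E\in\Qcal: E\ge D\}$ is equivalent to minimizing $\sum_{g\in G}E(g)\,P(\{g\})$. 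Hence it suffices to prove that $D^{qc}$ is feasible and is the pointwise-least element of the feasible set.

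Next I would check feasibility. Evaluating the defining supremum $D^{qc}(x)=\sup\{\alpha\mid Cont_{D,\alpha}\to x\}$ at $\alpha=D(x)$: since $x\in Cont_{D,D(x)}$, extensivity of $\gamma$ gives $x\in\gamma(Cont_{D,D(x)})$, i.e.\ $Cont_{D,D(x)}\to x$, so $D^{qc}(x)\ge D(x)$; thus $D^{qc}\ge D$. Quasiconcavity of $D^{qc}$ is exactly Theorem~\ref{th:tildeDqc}. Consequently $D^{qc}\in\{E\in\Qcal:E\ge D\}$, so it is a competitor with finite loss.

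The crux is to show minimality: every quasiconcave $E$ with $E\ge D$ satisfies $E\ge D^{qc}$. Fix $x\in G$ and any $\alpha<D^{qc}(x)$. By definition of the supremum there is $\beta>\alpha$ with $Cont_{D,\beta}\to x$, i.e.\ $x\in\gamma(Cont_{D,\beta})$. From $E\ge D$ we get the inclusion $Cont_{D,\beta}\subseteq Cont_{E,\beta}$; since $E$ is quasiconcave, $Cont_{E,\beta}$ is an extent, so $\gamma(Cont_{E,\beta})=Cont_{E,\beta}$. Monotonicity of $\gamma$ then yields $x\in\gamma(Cont_{D,\beta})\subseteq\gamma(Cont_{E,\beta})=Cont_{E,\beta}$, i.e.\ $E(x)\ge\beta>\alpha$. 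Letting $\alpha\uparrow D^{qc}(x)$ gives $E(x)\ge D^{qc}(x)$, as claimed.

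Finally I would conclude. By minimality $D^{qc}(g)\le E(g)$ for every feasible $E$ and every $g$, and because every weight $P(\{g\})$ is strictly positive, $\sum_g D^{qc}(g)P(\{g\})\le\sum_g E(g)P(\{g\})$, with equality iff $E=D^{qc}$ on all of $G$. Thus $D^{qc}$ is the unique minimizer, which is precisely the assertion that $D^{qc}$ is the close quasiconcave version of $D$ w.r.t.\ $\Kbb,P,L$ and $\Qcal$. The only genuinely delicate point is the minimality step: one must feed the defining supremum through the approximation $\alpha\uparrow D^{qc}(x)$ and keep the set inclusions and the direction of monotonicity of $\gamma$ aligned, so that a competitor's contour set, being an extent, absorbs $\gamma(Cont_{D,\beta})$; everything else is a routine consequence of $P$ having full support on a finite $G$.
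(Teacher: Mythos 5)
Your proposal is correct and takes essentially the same route as the paper: full support on a finite $G$ forces any finite-loss competitor $E\in\Qcal$ to be a pointwise majorant of $D$, so the minimizer must be the pointwise-least quasiconcave majorant, which is identified with $D^{qc}$. The only difference is one of detail: you explicitly verify this identification (feasibility $D^{qc}\ge D$ via extensivity of $\gamma$, and minimality $E\ge D^{qc}$ via $Cont_{D,\beta}\subseteq Cont_{E,\beta}$, monotonicity of $\gamma$, and $\gamma$-closedness of a quasiconcave competitor's contour sets), a step the paper's proof treats as immediate from the construction of $D^{qc}$.
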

\begin{proof}
	The proof that $D^{qc}$ is quasiconcave follows from Theorem~\ref{th:tildeDqc}. We show that $D^{qc}$ is a closed quasiconcave version of $D$ w.r.t.~$\Kbb, P, L$ and $\mathcal{Q}$. Note that reordering any object in $D^{qc}$ lower than in $D$ already gives an infinite loss (since every object in $G$ has a positive probability). So the depth function must be the smallest quasiconcave depth function that has $D$ as a point-wise lower bound. This is exactly $D^{qc}$.
\end{proof}

\bibliographystyle{Chicago}

\bibliography{Bibliography-MM-MC}

\begin{thebibliography}{}

\bibitem[\protect\citeauthoryear{Arcones and Gin{\'e}}{Arcones and
  Gin{\'e}}{1993}]{arcones1993limit}
Arcones, M. and E.~Gin{\'e} (1993).
\newblock Limit theorems for {U}-processes.
\newblock {\em The Annals of Probability\/}~{\em 21\/}(3), 1494--1542.

\bibitem[\protect\citeauthoryear{Armstrong}{Armstrong}{1974}]{armstrong74}
Armstrong, W. (1974).
\newblock Dependency structures of data base relationships.
\newblock In {\em International Federation for Information Processing
  Congress}, Volume~74, pp.\  580--583. North-Holland Publishing Company.

\bibitem[\protect\citeauthoryear{Assouad}{Assouad}{1983}]{assouad83}
Assouad, P. (1983).
\newblock Densit{\'e} et dimension.
\newblock {\em Annales de l'institut Fourier\/}~{\em 33\/}(3), 233--282.

\bibitem[\protect\citeauthoryear{Baddeley and Turner}{Baddeley and
  Turner}{2005}]{baddeley05}
Baddeley, A. and R.~Turner (2005).
\newblock Spatstat : An r package for analyzing spatial point patterns.
\newblock {\em Journal of Statistical Software\/}~{\em 12\/}(6), 1--42.

\bibitem[\protect\citeauthoryear{Bastide, Pasquier, Taouil, Stumme, and
  Lakhal}{Bastide et~al.}{2000}]{bastide00}
Bastide, Y., N.~Pasquier, R.~Taouil, G.~Stumme, and L.~Lakhal (2000).
\newblock Mining minimal non-redundant association rules using frequent closed
  itemsets.
\newblock In J.~Lloyd, V.~Dahl, U.~Furbach, M.~Kerber, K.-K. Lau,
  C.~Palamidessi, L.~M. Pereira, Y.~Sagiv, and P.~J. Stuckey (Eds.), {\em
  International Conference on Computational Logic}, pp.\  972--986. Springer.

\bibitem[\protect\citeauthoryear{Blocher and Schollmeyer}{Blocher and
  Schollmeyer}{2025}]{blocher23b}
Blocher, H. and G.~Schollmeyer (2025).
\newblock Data depth functions for non-standard data by use of formal concept
  analysis.
\newblock {\em Journal of Mulitvariate Analysis\/}~{\em 205}, 105372.

\bibitem[\protect\citeauthoryear{Blocher, Schollmeyer, and Jansen}{Blocher
  et~al.}{2022}]{blocher22}
Blocher, H., G.~Schollmeyer, and C.~Jansen (2022).
\newblock Statistical models for partial orders based on data depth and formal
  concept analysis.
\newblock In D.~Ciucci, I.~Couso, J.~Medina, D.~{\'{S}}l{\k{e}}zak,
  D.~Petturiti, B.~Bouchon-Meunier, and R.~R. Yager (Eds.), {\em Information
  Processing and Management of Uncertainty in Knowledge-Based Systems}, pp.\
  17--30. Springer.

\bibitem[\protect\citeauthoryear{Blocher, Schollmeyer, Nalenz, and
  Jansen}{Blocher et~al.}{2024}]{blocher23}
Blocher, H., G.~Schollmeyer, M.~Nalenz, and C.~Jansen (2024).
\newblock Comparing machine learning algorithms by union-free generic depth.
\newblock {\em International Journal of Approximate Reasoning\/}~{\em 169},
  109166.

\bibitem[\protect\citeauthoryear{Chebana and Ouarda}{Chebana and
  Ouarda}{2011}]{chebana11}
Chebana, F. and T.~B. M.~J. Ouarda (2011).
\newblock Depth-based multivariate descriptive statistics with hydrological
  applications.
\newblock {\em Journal of Geophysical Research\/}~{\em 116\/}(D10).

\bibitem[\protect\citeauthoryear{Christofides}{Christofides}{1992}]{christofides92}
Christofides, T. (1992).
\newblock A strong law of large numbers for u-statistics.
\newblock {\em Journal of Statistical Planning and Inference\/}~{\em 31\/}(2),
  133--145.

\bibitem[\protect\citeauthoryear{Dudley, Gin, and Zinn}{Dudley
  et~al.}{1991}]{dudley91}
Dudley, R.~M., E.~Gin, and J.~Zinn (1991).
\newblock Uniform and universal glivenko-cantelli classes.
\newblock {\em Journal of Theoretical Probability\/}~{\em 4\/}(3), 485--510.

\bibitem[\protect\citeauthoryear{Eckhoff}{Eckhoff}{1993}]{eckhoff93}
Eckhoff, J. (1993).
\newblock Chapter 2.1 - {H}elly, {R}adon, and {C}arath{\'{e}}odory type
  theorems.
\newblock In P.~Gruber and J.~Wwillis (Eds.), {\em Handbook of Convex
  Geometry}. North-Holland Publishing Company.

\bibitem[\protect\citeauthoryear{Foss, Markatou, and Ray}{Foss
  et~al.}{2019}]{foss19}
Foss, A.~H., M.~Markatou, and B.~Ray (2019).
\newblock Distance metrics and clustering methods for mixed--type data.
\newblock {\em International Statistical Review\/}~{\em 87\/}(1), 80--109.

\bibitem[\protect\citeauthoryear{Funwi-Gabga and Mateu}{Funwi-Gabga and
  Mateu}{2012}]{funwigabga12}
Funwi-Gabga, N. and J.~Mateu (2012).
\newblock Understanding the nesting spatial behaviour of gorillas in the
  kagwene sanctuary, cameroon.
\newblock {\em Stochastic Environmental Research and Risk Assessment\/}~{\em
  26\/}(6), 793--811.

\bibitem[\protect\citeauthoryear{Ganter and Wille}{Ganter and
  Wille}{2012}]{ganter12}
Ganter, B. and R.~Wille (2012).
\newblock {\em Formal Concept Analysis: Mathematical Foundations}.
\newblock Springer.

\bibitem[\protect\citeauthoryear{{GESIS - Leibniz-Institut f{\"u}r
  Sozialwissenschaften}}{{GESIS - Leibniz-Institut f{\"u}r
  Sozialwissenschaften}}{2023}]{ZA5280}
{GESIS - Leibniz-Institut f{\"u}r Sozialwissenschaften} (2023).
\newblock Allgemeine bev{\"o}lkerungsumfrage der sozialwissenschaften allbus
  2021.
\newblock GESIS, K{\"o}ln. ZA5280 Datenfile Version 2.0.1,
  https://doi.org/10.4232/1.14238.

\bibitem[\protect\citeauthoryear{Ignatov and Kwuida}{Ignatov and
  Kwuida}{2022}]{ignatov22}
Ignatov, D.~I. and L.~Kwuida (2022).
\newblock On shapley value interpretability in concept-based learning with
  formal concept analysis.
\newblock {\em Annals of Mathematics and Artificial Intelligence\/}~{\em
  90\/}(11-12), 1197--1222.

\bibitem[\protect\citeauthoryear{Li and Liu}{Li and Liu}{2004}]{li04}
Li, J. and R.~Y. Liu (2004).
\newblock New nonparametric tests of multivariate locations and scales using
  data depth.
\newblock {\em Statistical Science\/}~{\em 19\/}(4), 686--696.

\bibitem[\protect\citeauthoryear{Liu}{Liu}{1990}]{liu90}
Liu, R. (1990).
\newblock On a notion of data depth based on random simplices.
\newblock {\em The Annals of Statistics\/}~{\em 18\/}(1), 405--414.

\bibitem[\protect\citeauthoryear{Liu, Parelius, and Singh}{Liu
  et~al.}{1999}]{liu99}
Liu, R., J.~Parelius, and K.~Singh (1999).
\newblock Multivariate analysis by data depth: Descriptive statistics, graphics
  and inference, (with discussion and a rejoinder by liu and singh).
\newblock {\em The Annals of Statistics\/}~{\em 27\/}(3), 783--858.

\bibitem[\protect\citeauthoryear{Maier}{Maier}{1983}]{maier83}
Maier, D. (1983).
\newblock {\em The Theory of Relational Databases}.
\newblock Computer Science Press.

\bibitem[\protect\citeauthoryear{Mosler and Mozharovskyi}{Mosler and
  Mozharovskyi}{2022}]{mosler22}
Mosler, K. and P.~Mozharovskyi (2022).
\newblock Choosing among notions of multivariate depth statistics.
\newblock {\em Statistical Science\/}~{\em 37\/}(3), 348--368.

\bibitem[\protect\citeauthoryear{Poelmans, Ignatov, Kuznetsov, and
  Dedene}{Poelmans et~al.}{2013}]{poelmans13}
Poelmans, J., D.~I. Ignatov, S.~O. Kuznetsov, and G.~Dedene (2013).
\newblock Formal concept analysis in knowledge processing: A survey on
  applications.
\newblock {\em Expert Systems with Applications\/}~{\em 40\/}(16), 6538--6560.

\bibitem[\protect\citeauthoryear{Roscoe, Khatri, Voshall, Batra, Kaur, and
  Deogun}{Roscoe et~al.}{2022}]{roscoe22}
Roscoe, S., M.~Khatri, A.~Voshall, S.~Batra, S.~Kaur, and J.~Deogun (2022).
\newblock Formal concept analysis applications in bioinformatics.
\newblock {\em ACM Computing Surveys\/}~{\em 55\/}(8), 1--40.

\bibitem[\protect\citeauthoryear{Schollmeyer}{Schollmeyer}{2017a}]{schollmeyer17b}
Schollmeyer, G. (2017a).
\newblock Application of lower quantiles for complete lattices to ranking data:
  Analyzing outlyingness of preference orderings.
\newblock Technischer Report, LMU.
\newblock last accessed: 14.12.2024.

\bibitem[\protect\citeauthoryear{Schollmeyer}{Schollmeyer}{2017b}]{schollmeyer17a}
Schollmeyer, G. (2017b).
\newblock Lower quantiles for complete lattices.
\newblock Technischer Report, LMU.
\newblock last accessed: 14.12.2024.

\bibitem[\protect\citeauthoryear{Stevens}{Stevens}{1946}]{stevens46}
Stevens, S.~S. (1946).
\newblock On the theory of scales of measurement.
\newblock {\em Science\/}~{\em 103\/}(2684), 677--680.

\bibitem[\protect\citeauthoryear{Stumme, D{\"u}rrschnabel, and Hanika}{Stumme
  et~al.}{2023}]{stumme23}
Stumme, G., D.~D{\"u}rrschnabel, and T.~Hanika (2023).
\newblock Towards ordinal data science: 39 pages.
\newblock {\em Transactions on Graph Data and Knowledge\/}~{\em 1\/}(1),
  6:1--6:39.

\bibitem[\protect\citeauthoryear{Tukey}{Tukey}{1975}]{tukey75}
Tukey, J. (1975).
\newblock Mathematics and the picturing of data.
\newblock In R.~James (Ed.), {\em Proceedings of the International Congress of
  Mathematicians Vancouver}, pp.\  523--531. Mathematics-Congresses.

\bibitem[\protect\citeauthoryear{{Yanqing Zhang}, {Qi Xu}, {Niansheng Tang},
  and {Annie Qu}}{{Yanqing Zhang} et~al.}{2024}]{zhang24}
{Yanqing Zhang}, {Qi Xu}, {Niansheng Tang}, and {Annie Qu} (2024).
\newblock Differentially private data release for mixed-type data via latent
  factor models.
\newblock {\em Journal of Machine Learning Research\/}~{\em 25\/}(116), 1--37.

\bibitem[\protect\citeauthoryear{Zuo and Serfling}{Zuo and
  Serfling}{2000a}]{zuo00}
Zuo, Y. and R.~Serfling (2000a).
\newblock General notions of statistical depth function.
\newblock {\em The Annals of Statistics\/}~{\em 28\/}(2), 461--482.

\bibitem[\protect\citeauthoryear{Zuo and Serfling}{Zuo and
  Serfling}{2000b}]{serfling00}
Zuo, Y. and R.~Serfling (2000b).
\newblock Structural properties and convergence results for contours of sample
  statistical depth functions.
\newblock {\em The Annals of Statistics\/}~{\em 28\/}(2), 483--499.

\end{thebibliography}
\end{document}